\let\csname equation*\endcsname=\relax 
\let\csname endequation*\endcsname=\relax 
\DeclareMathOperator\supp{supp}
\DeclareMathOperator\gs{\ket{\Omega}}
\DeclareMathOperator\gsb{\bra{\Omega}}
\newcommand{\norm}[1]{\left\lVert#1\right\rVert}
\newcommand{\spctm}[1]{\mathbf{#1}}
\newcommand{\metric}{\mathrm{g}}
\newcommand{\alg}[1]{\mathscr{#1}}
\DeclareMathOperator{\BorR}{\text{Bor}(\mathbb{R})}
\newcommand{\bor}[1]{\mathtt{#1}}
\newcommand{\fwt}[1]{\mathfrak{#1}}
\newtheorem{claim}{Claim}[section]
\theoremstyle{definition}
\newtheorem{definition}{Definition}[section]
\newtheorem{example}{Example}[section]
\newtheorem{properties}{Properties}[section]
\newtheorem{remark}{Remark}[section]
\begin{document}

\title{Are Ideal Measurements of Real Scalar Fields Causal?}
\author{Emma Albertini \href{https://orcid.org/0000-0003-3099-642X}{\includegraphics[scale=0.066]{./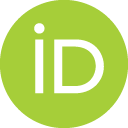}},
Ian Jubb 
\href{https://orcid.org/0000-0001-7339-2058}{\includegraphics[scale=0.066]{./ORCID.png}}
}

\vspace{10pt}
\address{Theoretical Physics Group, The Blackett Laboratory, Imperial College, Prince Consort Rd., London, SW7 2BZ, United Kingdom}
\address{School of Theoretical Physics, Dublin Institute for Advanced Studies, 10 Burlington Road, Dublin 4, D04 C932, Ireland}
\eads{\mailto{emma.albertini17@imperial.ac.uk}\\\mailto{ijubb@stp.dias.ie}}
\vspace{10pt}
\begin{indented}
\item \today
\end{indented}

\begin{abstract}
Half a century ago a local and (seemingly) causally consistent implementation of the projection postulate was formulated for local projectors in Quantum Field Theory (QFT) by utilising the basic property that spacelike local observables commute. This was not the end of the story for whether projective, or ideal measurements in QFT respect causality. In particular, the causal consistency of ideal measurements was brought into question by Sorkin 20 years later using a scenario previously overlooked. Sorkin's example, however, involved a non-local operator, and thus the question remained whether ideal measurements of local operators are causally consistent, and hence whether they are physically realisable. Considering both continuum and discrete spacetimes such as causal sets, we focus on the basic local observables of real scalar field theory --- smeared field operators --- and show that the corresponding ideal measurements violate causality, and are thus impossible to realise in practice. We show this using a causality condition derived for a general class of update maps for smeared fields that includes unitary kicks, ideal measurements, and approximations to them such as weak measurements. We discuss the various assumptions that go into our result. Of note is an assumption that Sorkin's scenario can actually be constructed in the given spacetime setup. This assumption can be evaded in certain special cases in the continuum, and in a particularly natural way in Causal Set Theory. In such cases one can then freely use the projection postulate in a causally consistent manner. In light of the generic acausality of ideal measurements, we also present examples of local update maps that offer causality-respecting alternatives to the projection postulate as an operationalist description of measurement in QFT.
\end{abstract}

\tableofcontents

\section{Introduction}

The nature of measurements in quantum theory has long been the subject of debate~\cite{AharonovRelativisticQM,AharonovRelativisticQM,Popescu_1994}. This debate becomes particularly fraught when the principles of relativity are introduced in Quantum Field Theory (QFT), as the concept of measurement, or wavefunction collapse, sits uneasily alongside relativistic causality~\cite{AharonovQFT,Jonsson_2014,EduardoOptics}. That said, most would agree that quantum theory must respect relativistic causality, for example, in discussions of Bell experiments where no signal can be sent between spacelike, or causally disconnected, agents.

The textbook description of measurement in quantum theory usually involves the \textit{projection postulate}, where the state is projected onto some eigenspace associated to the measured observable (this process is also called an \textit{ideal measurement}). For more general operations on quantum systems, regardless of the details of what is done physically to the system, we can describe the effect of our actions on the system through a completely positive map on the state (also called a \textit{quantum channel}). This operationalist theory of measurements and other update maps more generally --- where we only care to describe the effect of our actions on the state, and not the details of our experimental apparatus, etc. --- has been a challenge to copy over to relativistic QFT. In particular, in~\cite{Sorkin_impossible,Benincasa_2014,Jubb_2022} it was shown that not all completely positive maps respect causality.

Nonetheless, we have long discussed measurements in QFT in the form of scattering amplitudes, and validated the resulting predictions to extremely high accuracy. This description of measurements, however, is only an approximation, since scatterings involve preparing the state in the infinite past, as well as making a \textit{single} measurement of the outgoing state in the infinite future. What we seek here is a better description of what is going on. More specifically, we want an operationalist description of measurements and other update maps in QFT that can describe (in a causally consistent way) \textit{multiple} measurements happening in local and finite regions of spacetime, and thus a description that accords with our experience of, say, making multiple measurements of finite duration at the Large Hadron Collider.

Hellwig and Kraus laid the foundations of such a causally consistent description in~\cite{Hellwig_Kraus}, with their focus on operators local to regions of spacetime, similarly local projectors, and local operations that do not affect the expectation values of spacelike separated observables~\footnote{Here we have assumed we are in the Heisenberg picture, where field operators carry the dynamics and depend on the spacetime coordinates. Thus, it makes sense to talk about spacelike separated observables.}. With their work, it seemed that the projection postulate could be used in QFT in a local manner consistent with relativity. This was not the end of the story for relativistic causality in QFT, however, as they missed a key scenario (discussed in Section~\ref{sec:sorkinscenario}) which highlights a further subtlety of causality and measurements in QFT.

We refer to this as the \textit{Sorkin scenario}, following its use by Sorkin in~\cite{Sorkin_impossible}. There he shows that if some agent, \textit{Charlie}, were able to perform a particular ideal measurement in scalar QFT, then another agent, \textit{Alice}, would be able to send a faster-than-light, or superluminal, signal to a third agent, \textit{Bob}. Such a violation of relativistic causality implies that Charlie's measurement \textit{cannot} be physically realisable by \textit{any} physical laboratory process. Sorkin's example consisted of measuring whether the system is in a particular wavepacket state or not. While the corresponding projection operator for this ideal measurement is not a local operator (in the sense of  Hellwig and Kraus~\cite{Hellwig_Kraus}), Sorkin's example at least illustrates the fact that not all ideal measurements of self-adjoint operators in QFT are physically realisable. An immediate question is then: which self-adjoint operators in QFT have ideal measurements which respect causality, and are therefore physically realisable, at least in principle?

Of particular interest are the local operators in the theory, as the superluminal signal in Sorkin's example may be due to the non-locality of the operator he used. One of the most basic local operators that comes to mind is the field operator for a real scalar field. Can we, at least in principle, make ideal measurements of the field without violating causality~\footnote{Note, we are agnostic about the physical process which realises this ideal measurement of the field. All we care about here is whether \textit{any} process can give rise to an update of the state corresponding to that given by an ideal measurement of the field operator.}? Technically speaking, the field operator is actually an operator-valued distribution, and thus one has to integrate, or \textit{smear}, it against some compactly supported function to get a well-defined local operator on the Fock space --- a \textit{smeared field operator}. The main question we want to answer is then:

\begin{center}
    \textit{\textbf{Do ideal measurements of smeared fields respect causality?}}
\end{center}
\vspace{1mm}

\noindent Since smeared field operators generate all other local operators in the theory, it is clearly important to answer this question.

In~\cite{Sorkin_impossible,Benincasa_2014} it was conjectured that such a measurement would respect causality, while in~\cite{Jubb_2022} the contrary was argued, though no definitive argument was given due to the technical nature of the problem. Here we provide an in-depth technical analysis of this question. We verify that such ideal measurements are acausal, and hence are not realisable by \textit{any} physical laboratory process. For example, by coupling a real scalar quantum field to \textit{any} other probe quantum/classical system in \textit{any} way (c.f. von Neumann measurement models and Unruh-deWitt detectors), we will never be able to realise the textbook projective measurement of the field operator. On the contrary, such a realisation, as we show below, would lead to a violation of relativistic causality, and hence cannot be possible in practice.

In deriving this result we make a few basic assumptions (see Section~\ref{sec:Assumptions}). One of which is the assumption that a Sorkin scenario can be engineered in the given spacetime setup. There are some interesting cases where this is not possible, and we explore these in Section~\ref{sec:Transitive loophole and discrete spacetimes}. Of note is the case of a discrete spacetime, such as a causal set~\cite{Surya_2019}. The discreteness of the spacetime opens up a somewhat natural loophole to avoid Sorkin's scenario, and thus a way to incorporate projective, or ideal, measurements of the field without violating causality. Thus, this work also unveils an interesting connection between the projection postulate and the continuum, or discrete, nature of spacetime.

\subsection{Further Background}\label{sec:Further background}

Aside from ideal measurements, one can ask whether other update maps are causal, and hence physically realisable~\cite{Jubb_2022}. In other words, what subset of the completely positive maps are causal in a generic QFT? Closely related to this question is previous work on the causality and locality of various operations on multipartite systems~\cite{Beckman_2001}, and there are analogies one can draw between that work and the QFT case in~\cite{Jubb_2022}. Namely, the properties of locality and causality can be separated in both cases, in the sense that one can find examples of update maps that are i) both local and causal, ii) local and not causal, iii) causal and not local, iv) and neither.

On the QFT side, there have been recent proposals for measurement models in the spirit of von Neumann~\cite{Bostelmann_2021,Polo_G_mez_2022,Fewster2023cfq}, where one couples a probe system to the main system of interest. One then measures the probe and interprets the result as a measurement of some observable of the main system. These models are particularly useful for studying the propagation of information from one system to another. Two examples for the probe system are a 2-state probe (such as in an Unruh-deWitt detector~\cite{Polo_G_mez_2022,EduardoOptics}), or a probe quantum field~\cite{Bostelmann_2021,Fewster2023cfq}. Both types of probes have their merits and drawbacks. In the former case, one must take account of the acausalities that emerge from coupling the non-relativistic detector system to the relativistic quantum field~\cite{EduardoOptics,EduardoBrokenCovariance}. In the latter one must use a distinct probe quantum field for each measurement, and discard it after use. Thus, given the finite number of fields we have in the standard model, this approach can only be applied within the realm of effective field theories, where we are more comfortable introducing new probe quantum fields and discarding them after each measurement. Clearly, however, this approach cannot be the full description at a more fundamental level. Both approaches also suffer from the usual problem of regression with von Neumann measurement models, namely, that one is left wondering how to describe a measurement on the probe, and thus must introduce a probe for the probe, and so on.

There may be other models of measurement in QFT that we are yet to formulate, but in the absence of such a complete description, it is still important to understand the space of physically allowed (with respect to causality) update in a generic QFT. This is the starting point for~\cite{Jubb_2022}, where a condition is derived for any map to respect causality in real scalar QFT. It should be noted that the derivation in~\cite{Jubb_2022} uses some assumptions that we will not require in the calculations below (see Section~\ref{sec:Causality of smeared field operations} for further discussion).

\subsection{Layout}\label{sec:Layout}

We first introduce some necessary concepts from continuum spacetime geometry in Section~\ref{sec:Continuum Spacetime geometry}, and, since we also consider discrete spacetimes such as causal sets, we introduce Causal Set Theory in Section~\ref{sec:Causal Set Theory}.

Before we get to QFT, we first cover some important concepts in classical real scalar field theory in Section~\ref{sec:Classical Real Scalar Field Theory}, as these will be relevant to later calculations. When reviewing QFT in Section~\ref{sec:Real Scalar Quantum Field Theory}, we follow an approach inspired by the more mathematically rigorous formulation of Algebraic (A)QFT~\cite{aqft_fewster_rejzner}. The basic local operators of interest to us are the smeared field operators, and thus we also devote some time to their properties and interpretation. 

The technical nature of our main result --- that ideal measurements of smeared fields are acausal --- necessitates some level of understanding of functional calculus, and so we introduce the relevant concepts in Section~\ref{sec:Functional calculus}. For the sake of brevity in this section, and with some following sections, we delegate the derivations of more technical results to the appendices in Section~\ref{sec:Appendices}.

In Section~\ref{sec:Ideal measurements} we review the textbook description of the projection postulate and ideal measurements, and introduce a key concept we call \textit{resolution}, facilitated by the machinery of functional calculus in Section~\ref{sec:Functional calculus}. We then essentially follow Hellwig and Kraus~\cite{Hellwig_Kraus} in Section~\ref{sec:Ideal measurements in QFT} when introducing the notion of ideal measurements of local operators to the relativistic setting of QFT. We further discuss more general update maps in Section~\ref{sec:Local and causal operations in QFT}, as well as defining what we mean for an update map to be local to a region of spacetime. To capture a wide variety of local update maps in a single framework, we introduce the general concept of a \textit{Kraus update map} in Section~\ref{sec:Kraus update maps}. This includes the case of ideal measurements as well as many other useful maps in QFT, e.g. unitary kicks. One can also consider `less' ideal, or approximations to ideal measurements in this framework. In Section~\ref{sec:sorkinscenario} we review Sorkin's scenario and some prior results on causality in~\cite{Jubb_2022}. We also note here why one of the assumptions from~\cite{Jubb_2022} should potentially be dropped. This point is further addressed in the remaining sections.

Section~\ref{sec:Causality of smeared field operations} then comprises our main result. We discuss our assumptions in Section~\ref{sec:Assumptions}, one of which is the existence of Sorkin's scenario in the given spacetime setup. We discuss some special cases where this is not possible in Section~\ref{sec:Transitive loophole and discrete spacetimes}, including the case of a discrete spacetime. In Section~\ref{sec:General causality condition for a smeared field operations} we derive a condition for a general Kraus update map to be causal, which also covers the case of ideal measurements. The utility of our functional calculus approach is most evident here, as our causality condition boils down to a condition on the functions used in defining the given Kraus update map. After this we give our main result --- that ideal measurements of smeared fields are acausal --- in Section~\ref{sec:The acausality of an ideal measurement of a smeared field}. 

Following this result, in Section~\ref{sec:Revisiting previous results with an illustrative example} to the heuristic arguments of previous literature~\cite{Sorkin_impossible,Benincasa_2014,Borsten_2021} where it was suggested that ideal measurements of smeared fields would be causal. Upon closer inspection, it appears those arguments are evaded (and thus our result is not inconsistent with the results in~\cite{Borsten_2021}) by a mathematical technicality, which we illustrate with a simple example in 2D non-relativistic quantum mechanics. In Section~\ref{sec:A decoherence functional/path integral perspective} we consider the decoherence functional, or path integral, perspective to provide further intuition as to why an ideal measurement in QFT enables a superluminal signal, and hence why it is not realisable. Finally, in Section~\ref{sec:Discussion} we discuss our results and future directions, as well as propose potential avenues to circumvent the acausality of the projection postulate in QFT.

\section{Spacetime Setup}

\subsection{Continuum Spacetime Geometry}\label{sec:Continuum Spacetime geometry}

We consider a Lorentzian spacetime $(\spctm{M},\metric)$, where $\spctm{M}$ is a time orientable manifold and $\metric$ is a metric on $\spctm{M}$ with mostly plus signature. We will assume that $(\spctm{M} , \metric )$ is \textit{globally hyperbolic}, namely that it contains a \textit{Cauchy surface}, $\Sigma$, which is a hypersurface for which no timelike inextendible curve intersects more than once (Figure~\ref{fig:spacetime geometry a}).

\begin{figure}
\begin{center}
\begin{subfigure}[b]{0.5\textwidth}
     \centering
     \includegraphics[width=\textwidth]{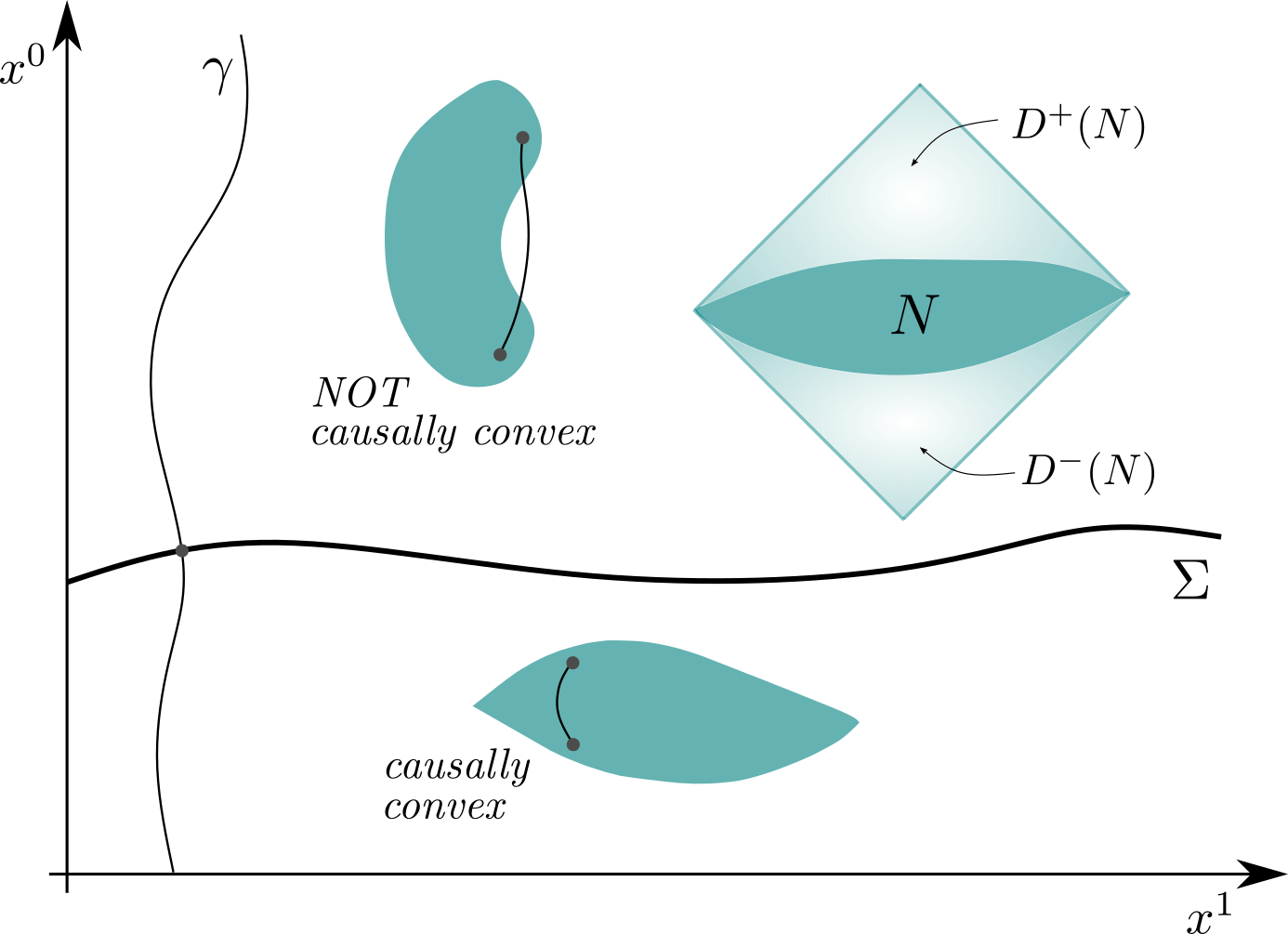}
     \caption{Spacetime diagram (time and space coordinates $x^0$ and $x^1$) illustrating a Cauchy surface ($\Sigma$), causal convexity, and the domain of dependence $D(N) = D^+(N)\cup D^-(N)$.}
     \label{fig:spacetime geometry a}
 \end{subfigure}
 \hspace{5mm}
 \begin{subfigure}[b]{0.45\textwidth}
     \centering
     \includegraphics[width=\textwidth]{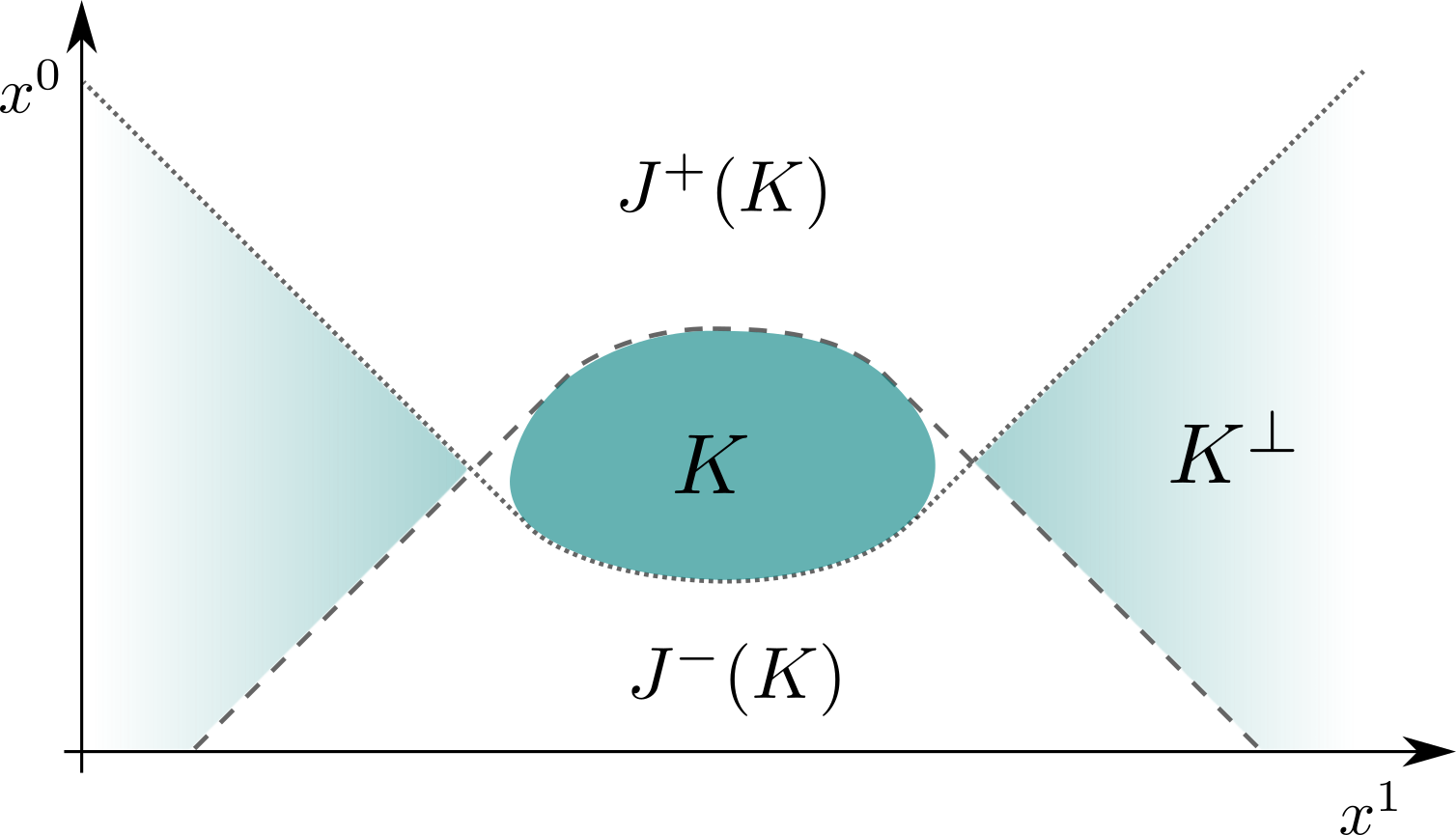}
     \caption{Illustration of future and past sets and causal complement (the shaded parts on either side of $K$). Above/below the dashed/dotted line corresponds to the out/in-region $K^{\pm}$.}
     \label{fig:spacetime geometry b}
 \end{subfigure}
\end{center}
\end{figure}

Given a subset $S \subseteq \spctm{M}$, its causal (timelike or lightlike) future/past is denoted by $J^{\pm}(S)$ (Fig.~\ref{fig:spacetime geometry b}), and the chronological (timelike) future/past by $I^{\pm}(S)$. We say $S$ is causally convex if any causal curve with endpoints in $S$ is entirely contained in $S$ (Fig.~\ref{fig:spacetime geometry a}). In the following, we reserve the word \textit{region} for any open causally convex subset $\spctm{R} \subseteq \spctm{M}$. Note, endowing any region $\spctm{R}\subseteq \spctm{M}$ with the metric $\metric$ yields a globally hyperbolic spacetime.

For any subset $S\subseteq\spctm{M}$, the \textit{domain of dependence} is $D(S)=D^+(S) \cup D^-(S)  $, where $D^{\pm}(S)$ denotes the future/past domain of dependence, i.e. the set of points $x \in \spctm{M}$ for which every past/future inextendible causal curve passing through $x$ intersects $S$ (Fig.~\ref{fig:spacetime geometry a}). Given equations of motion on the spacetime, one can think of $D(S)$ is the subset of spacetime where one can determine the solutions uniquely from data on $S$.

The \textit{causal complement} of a subset $S$ consists of all points spacelike to, or causally disconnected from $S$, which we denote as $S^\perp=\spctm{M} \setminus ( J^+(S) \cup J^-(S)) $ (Fig.~\ref{fig:spacetime geometry b}). We can associate to a compact (closed and bounded) subset, $K$, an \textit{in/out region} $K^{\mp}=\spctm{M} \setminus J^{\pm}(K) $. Such regions are open, causally convex, and hence constitute globally hyperbolic spacetimes in their own right when endowed with the metric $\metric$. 

We use the notation $C_0^{\infty}(\spctm{M})$ for the space of real-valued \textit{test functions} $f: \spctm{M}\rightarrow \mathbb{R}$, which are smooth and compactly supported, i.e. $\supp f\subset \spctm{M}$ is a compact subset. Note, if we consider complex valued test functions at any point below we will highlight this. The value of $f$ at a spacetime point $x\in\spctm{M}$ is $f(x)$. Finally, $C^\infty(\spctm{M})$ denotes smooth functions that are not necessarily of compact support. 

Since we will also consider discrete spacetimes such as those in Causal Set Theory, we briefly review some core concepts of the latter.

\subsection{Causal Set Theory}\label{sec:Causal Set Theory}

Causal set theory is an approach to quantum gravity that replaces the continuum spacetime with a discrete collection of events, or spacetime points, ordered by causality~\cite{Surya_2019}. This conjecture for the underlying structure of spacetime is motivated by the fact that all the components of the metric, except one, can be recovered from the causal structure of the given continuum spacetime \cite{Levichev} \cite{Malament}~\footnote{A more detailed discussion on relativity and causality is given by Penrose, Hawking and Ellis in~\cite{Penrose}, \cite{Hawking_Ellis}.}. The missing component encodes the volume information of the continuum spacetime, but for a discrete collection of spacetime points, such as those in a causal set, the volume information is given to us for free by simply counting the \textit{number} of discrete elements comprising the causal set, or any subset thereof.

Thus, the hope in Causal Set Theory is that one can recover all of the geometrical information (all the components of the metric) of the continuum spacetime (at sufficiently large scales) via the discreteness of the underlying spacetime points (giving us our notion of volume), and the causal ordering of those discrete points (giving us the causal structure). This is summarised by the motto: ``Order $+$ Number $=$ Geometry''.

Technically speaking, a causal set (or causet) is a locally finite partially ordered set (see Fig.~\ref{fig:causet}), i.e. a pair ($\spctm{C} ,  \preceq $) where $\spctm{C} $ is a set and $\preceq$ is a partial order relation on $\spctm{C}$ satisfying:
\begin{itemize}
  \item  Reflexivity: $\forall \ x \in \spctm{C},\ x \preceq a$
  \item Acyclicity: $\forall \ x,y \in \spctm{C}, \ x \preceq y \preceq x \Rightarrow x=y $
  \item Transitivity:  $\forall \ x,y,z \in \spctm{C}, \ x \preceq y \preceq z \Rightarrow x\preceq z $
  \item Local finiteness: $\forall \ x,z \in \spctm{C}, | \ [x,z] \ | < \infty $, where the set $[x, z] := \{y \in \spctm{C}|\ x \preceq y \preceq z \}$ is a causal interval and $|S|$ is the cardinality of a set $S$.
\end{itemize}

\begin{figure}
 \centering
 \includegraphics[width=0.5\textwidth]{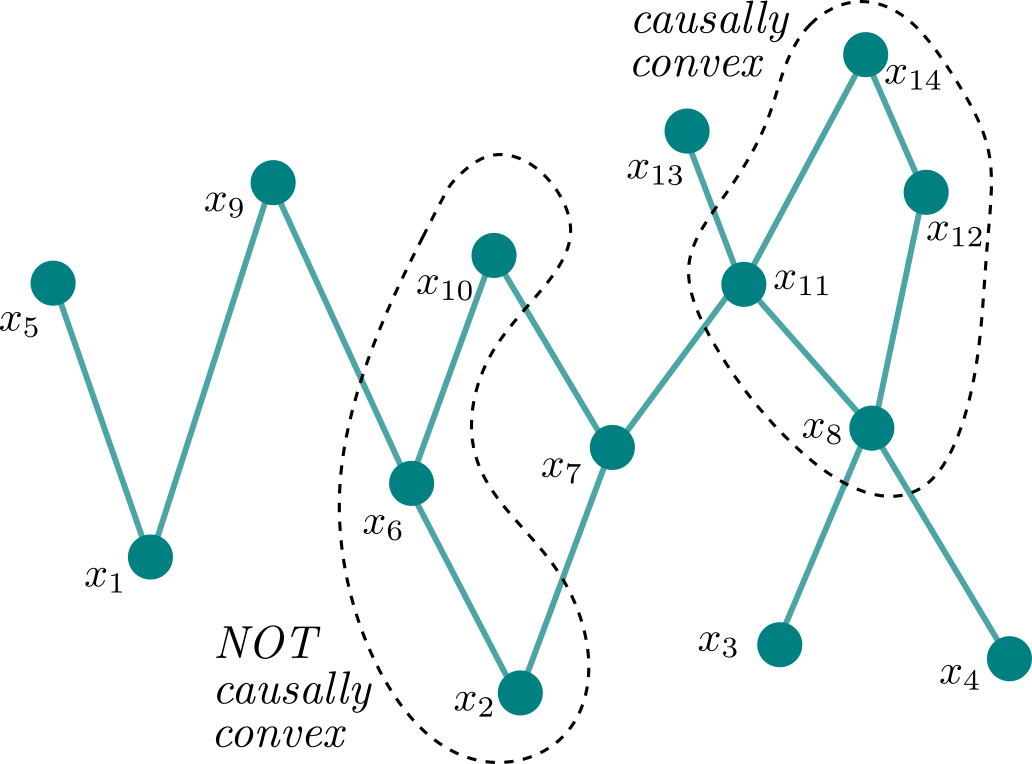}
 \caption{Illustration of causal convexity on a naturally labelled causet. Note, the direction of the order relations always goes upwards.}
 \label{fig:causet}
\end{figure}

The set $ \spctm{C}$ represents the spacetime points, while the partial order $\preceq $ represents the spacetime causal relations. For instance, by $x$ $\preceq $ $y$ we understand that $x$ is in the causal past of $y$.
 
Note that the first three conditions on $\preceq$ above are satisfied by the causal order of a continuum \textit{causal} Lorentzian spacetime~\cite{minguzzi2008causal}. The fourth condition is failed by any continuum spacetime (as any continuum causal interval contains an uncountable infinitude of points), and so encodes the fact that a causal set is discrete. This proposed discreteness is also motivated by a need to resolve certain singularities that plague our current continuum theories, e.g. the singularities inside black holes in general relativity.

For any finite subset of a causet $\spctm{X}\subseteq \spctm{C}$, it is always possible to label the points in $\spctm{X}$ with natural numbers such that, for any pair $x_j , x_k \in\spctm{X}$, if $x_j \preceq x_k$ then the labels satisfy $j<k$ as numbers. We call this a \emph{natural labelling} (see Fig.~\ref{fig:causet}).

Recall that a region $\spctm{R}$ of a continuum spacetime $\spctm{M}$ is an open causally convex subset. Similarly, for a causet $\spctm{C}$ we use the word \textit{region} to describe any subset $\spctm{R}\subseteq \spctm{C}$ that is causally convex with respect to the order relation $\preceq$, i.e. for any pair $x,z\in \spctm{R}$, if $x\preceq z$, then for any $y$ between $x$ and $z$ ($x\preceq y \preceq z$), $y \in\spctm{R}$ also.

To any subset $K\subseteq \spctm{C}$ we can similarly associate an out/in-region as $K^{+/-} = \spctm{C}\setminus J^{-/+}(K)$, which is a causally convex subset of $\spctm{C}$.

Since any causet $\spctm{C}$ is discrete, a test function on a causet is simply a real-valued function, $f:\spctm{C}\rightarrow \mathbb{R}$, which is non-zero only on a finite number of points. To align the notation with the continuum, we denote the space of all such test functions as $C^{\infty}_0(\spctm{C})$, even though there is no sense (that we have defined) in which these functions are infinitely differentiable. To further align notation, we denote the value of $f$ at a causet element, $x\in\spctm{C}$, as $f(x)$. $C^{\infty}(\spctm{C})$ denotes the space of smooth functions.

Below we use $\spctm{S}$ to denote either a continuum spacetime manifold $\spctm{M}$, or a discrete causet $\spctm{C}$, and we use $\spctm{M}$ or $\spctm{C}$ when the given discussion applies only to the continuum or causet case respectively.

Given some test function $f$ on a (continuum or discrete) spacetime $\mathcal{S}$ we write
\begin{equation}
    \int_{\mathcal{S}}dx \, f(x) \; ,
\end{equation}
to denote the integral of $f$ over $\mathcal{S}$. For a continuum spacetime, $\mathcal{S}=\spctm{M}$, `$dx$' denotes the usual volume measure with $\sqrt{-\det \metric}$, and for a causal set, $\mathcal{S}=\spctm{C}$, `$dx$' denotes the counting measure over elements $x\in\spctm{C}$. That is, the `integral' becomes the sum:
\begin{equation}
    \sum_{x\in\spctm{C}}f(x) \; .
\end{equation}

\section{Real Scalar Field Theory}

\subsection{Classical Real Scalar Field Theory}\label{sec:Classical Real Scalar Field Theory}

Consider a (free) classical real scalar field $\varphi$ in a globally hyperbolic (continuum) spacetime manifold $\spctm{M}$, which obeys the field equation
\begin{equation}
    P\varphi=0
\end{equation}
where $P = \Box + m^2$ is the Klein-Gordon operator with mass $m$. The Klein-Gordon operator $P$ has unique advanced ($-$) and retarded ($+$) Green operators $G^{\pm} : C_0^\infty(\spctm{M}) \rightarrow C^\infty(\spctm{M})$ obeying 
\begin{equation}\label{eq:Green_operators_properties}
    G^{\pm} Pf=f, \quad P G^{\pm} f=f, \quad \supp(G^{\pm} f) \subseteq J^{\pm}(\supp(f))
\end{equation}
for all test functions $f \in C_0^\infty(\spctm{M})$~\cite{Green_hyperbolic_B_r_2014}. The action of $G^{\pm}$ can be described via an integral kernel, also denoted by $G^{\pm}(x,y)$~\footnote{We have overloaded notation by using $G^{\pm}$ for the integral kernel as well. Any confusion between the operator and integral kernel can be resolved by noting that the operator takes a test function as its argument, while the integral kernel takes two spacetime points.}, as
\begin{equation}
    (G^{\pm} f) (x) = \int_{\spctm{M}} dy \, G^{\pm}(x,y) f(y) \; .
\end{equation}
We will sometimes refer to $G^{\pm}(x,y)$ as a \textit{Green function}, rather than an integral kernel.

We call a solution $\varphi$ \textit{spatially compact} if its Cauchy data ($\varphi$ and its normal derivative) on any Cauchy surface is compactly supported on that surface. The space of all smooth spatially compact solutions will be denoted as $\text{Sol}(\spctm{M})$.

We define the \textit{Pauli-Jordan} operator as $\Delta= G^- -G^+$. Notably, any $\varphi\in\text{Sol}(\spctm{M})$ can be generated by some test function $f\in C^{\infty}_0 (\spctm{M})$ as $\varphi = \Delta f$. In terms of the associated integral kernel $\Delta(x,y) = G^-(x,y) - G^+(x,y)$ (also called the \textit{Pauli-Jordan function}) we have
\begin{equation}
    \varphi(x) = \int_{\spctm{M}}dy \, \Delta (x,y) f(y) \; .
\end{equation}
Roughly speaking, we can think of the given test function $f$ as a source/sink for the `part' of $\varphi$ that is to the future/past of $\supp f$. More specifically, let $\varphi^{\pm} = G^{\pm}f$, then $\varphi^{\pm}(x) = \mp \varphi (x)$ at any point $x \in (\supp f)^{\pm}$, i.e. in the out/in-region associated to $\supp f$. Moreover, $\varphi^{\pm}$ is an inhomogeneous solution of $P\varphi^{\pm} = f$ with support only in the future/past of $\supp f$, and thus by standard terminology we would call $f$ a \textit{source}/\textit{sink} for $\varphi^{\pm}$. 

From~\eqref{eq:Green_operators_properties} we see that, for any $f\in C^{\infty}_0(\spctm{M})$, we have $\Delta P f = G^- P f - G^+ P f = 0$. That is, the image of the operator $P$ is the kernel of $\Delta$, denoted by $\text{ker}\Delta$. This implies that the test function that generates a given solution $\varphi\in \text{Sol}(\spctm{M})$ is not unique. Explicitly, for any pair of test functions $f,h\in C^{\infty}_0(\spctm{M})$, both $f$ and $f+Ph$ generate the same solution when acted on by $\Delta$. This lack of uniqueness is related to the fact that the same solution, post some given Cauchy surface, can be sourced in different ways prior to the Cauchy surface. More specifically, given two test functions, $f$ and $g$, which both generate $\varphi$ as $\varphi = \Delta f = \Delta g$, then $f$ and $g$ are both valid potential sources/sinks for $\varphi^{\pm}$ in the out/in-region for $\supp f \cup \supp g$.

Now consider a causet $\spctm{C}$. In this case, there are no equations of motion for the field, but one can define dynamics by instead starting from a retarded Green function $G^+(x,y)$, where $x$ and $y$ are causet points. If there are $N$ points in $\spctm{C}$ then one can think of $G^+(x,y)$ as the $x,y$ element of an $N\times N$ matrix, which we denote for brevity as $G^+$. One particular candidate for the matrix $G^+$ is a weighted sum over all chains/paths between pairs of elements $x$ and $y$ (see ~\cite{Johnston} for details).

Defining the advanced Green function (or matrix) as the transpose of the retarded Green function~\footnote{The equation $G^-(x,y) = G^+(y,x)$ is also valid in many continuum spacetimes.}, $G^-=(G^+)^T$, we can then define the causet Pauli-Jordan function (or matrix) as above, i.e. $\Delta= G^- - G^+$. Both the causet and continuum Pauli-Jordan functions respect the causal structure, in the sense that $\Delta(x,y) = 0$ if $x$ and $y$ are spacelike.

In analogy with the continuum, the space of solutions on the given causet $\spctm{C}$, denoted by $\text{Sol}(\spctm{C})$, is taken to be the image of the Pauli-Jordan matrix $\Delta$. Going further, one can show that the matrix $i\Delta$ is skew-symmetric and Hermitian. These two features guarantee that this matrix has even rank, and that its non-zero eigenvalues are real and come in positive and negative pairs~\cite{Perlis}.

Finally, for a (continuum or discrete) spacetime $\spctm{S}$ we define the \textit{smeared} Pauli-Jordan function $\Delta(f,g)$, for two test functions $f,g\in C^{\infty}_0(\spctm{S})$, as
\begin{equation}
    \Delta(f,g) = \int_{\spctm{S}} dx dy \, f(x) \Delta(x,y) g(y) \; .
\end{equation}

\subsection{Real Scalar Quantum Field Theory}\label{sec:Real Scalar Quantum Field Theory}

To construct free real scalar QFT on some (continuum or discrete) spacetime, $\spctm{S}$, we will first build the Hilbert space, or Fock space, on which the operators act. To do this, one simply needs to specify a bilinear form $W:C_0^{\infty}(\spctm{S})\times C_0^{\infty}(\spctm{S}) \rightarrow\mathbb{C}$ which satisfies $W(f,g)-W(g,f) = i\Delta(f,g)$, in addition to some other properties (see Sec.4.3. of~\cite{aqft_fewster_rejzner} for details). This bilinear form will shortly become the 2-point function of the QFT.

We then construct the single particle Hilbert space, $\alg{H}$, as the completion of the vector space $\text{Sol}(\spctm{S})$ using the inner product $\braket{\varphi | \psi} = \braket{\Delta f | \Delta g} = W(f,g)$, which is well defined for any pair of solutions $\varphi , \psi \in\text{Sol}(\spctm{S})$, since there always exists some test function $f$ such that $\varphi = \Delta f$ for any $\varphi\in\text{Sol}(\spctm{S})$. Note that this requires $W(f,g)$ to be independent of which test functions are used to generate the given solutions $\varphi$ and $\psi$.

It is instructive to consider the specific case of $d$-dimensional Minkowski spacetime, $\mathbb{M}^d$. In this case, there is a bijection between $\text{Sol}(\mathbb{M}^d)$ and smooth, compactly supported Cauchy data on the $t=0$ Cauchy surface. This data takes the form of a pair of real-valued, smooth, compactly supported functions on $\mathbb{R}^{d-1}$ (one for $\varphi$ on the surface and one for its normal derivative), and thus we can equivalently think of the space $C^{\infty}_0(\mathbb{R}^{d-1})\oplus C^{\infty}_0(\mathbb{R}^{d-1})$ instead of $\text{Sol}(\mathbb{M}^d)$. Completing the latter space using the inner product $\braket{\Delta f | \Delta g} = W(f,g)$ then yields the space of complex-valued square integrable functions on $\mathbb{R}^{d-1}$, denoted $L^2(\mathbb{R}^{d-1} ; \mathbb{C})$ (see~\cite{kay2007quantum} for details). That is, for the single-particle Hilbert space we have $\alg{H}\cong L^2(\mathbb{R}^{d-1} ; \mathbb{C})$.

Going back to the general case for some spacetime $\spctm{S}$, one can then construct the usual bosonic Fock space as the (completion of the) symmetrised direct sum:
\begin{equation}\label{eq:fock_space}
    \alg{F} = \bigoplus_{n=0}^{\infty} S(\alg{H}^{\otimes n}) \; ,
\end{equation}
where $S(\cdot)$ denotes symmetrisation, and where $S(\alg{H}^{\otimes 0}) = \mathbb{C}$ is the 0-particle sector spanned by the ground state $\gs$. The space of operators we will consider on $\alg{F}$, denoted here by $\alg{A}$, will consist of linear maps from dense subspaces of $\alg{F}$ to itself.

In textbook descriptions one quantises the classical field $\varphi(x)$ into the \textit{field operator} $\phi(x)$. As $x\in\spctm{S}$ is a spacetime point, we are implicitly working in the Heisenberg picture where the operators carry the dynamics. Technically speaking, in a continuum spacetime $\phi(x)$ is an \textit{operator-valued distribution}, and thus one needs to integrate, or \textit{smear}, $\phi(x)$ against some test function $f\in C_0^{\infty}(\spctm{S})$ to form something which is actually an operator on the Fock space $\alg{F}$~\footnote{For a causet $\spctm{C}$ the operator $\phi(x)$ at some point $x\in\spctm{C}$ is a valid operator on $\alg{F}$, however.}. We thus introduce the \textit{smeared field operator}:
\begin{equation}\label{eq:smeared_field_in_terms_of_phi}
    \phi(f) = \int_{\spctm{S}} dx f(x) \phi (x) \; .
\end{equation}
Note, we have overloaded the notation for $\phi$ as we did for $\Delta$ above. If $f$ is supported in some region $\spctm{R}\subseteq \spctm{S}$, we say that $\phi(f)$ is \textit{localisable} in $\spctm{R}$. Together with the identity, $\mathds{1}$, the smeared field operators, $\phi(f)$, for all $f\in C_0^{\infty}(\spctm{S})$, generate the algebra of local operators.   

For some $f\in C_0^{\infty}(\spctm{S})$ the action of the associated smeared field, $\phi(f)$, on $\alg{F}$ can be described explicitly by first writing
\begin{equation}\label{eq:smeared field in terms of ladder ops}
    \phi(f) = a( \Delta f ) + a^{\dagger}(\Delta f) \; ,
\end{equation}
where $a(\Delta f)$ and $a^{\dagger}(\Delta f)$ are bosonic ladder operators associated to the corresponding `mode', or solution, generated by $f$. These ladder operators satisfy $[a( \Delta f ),a^{\dagger}(\Delta f)]= W(f,g)\mathds{1}$, and $a(\Delta f)\gs = 0$. One can then build up $n$-excitations of a given solution, or `mode', $\Delta f$ by acting on the ground state as $(a^{\dagger}(\Delta f))^n\gs$.

With~\eqref{eq:smeared field in terms of ladder ops} one can then explicitly check that a smeared field operator, $\phi(f)$, is an unbounded operator acting on a dense domain of $\alg{F}$. As an example of its action, starting with the ground state, $\gs\in\alg{F}$, the action of $\phi(f)$ gives the associated single-particle state, i.e. $\phi(f)\gs = \ket{\Delta f}$. The states one can reach by acting on $\gs$ with finite combinations of smeared fields are dense in $\alg{F}$. 

The 2-point function of the ground state is $\gsb  \phi(f) \phi(g) \gs  = W(f,g)$, and any $n$-point function is determined via this 2-point function, i.e. $\gs$ is a Gaussian, or quasifree, state. In particular, we have
\begin{equation}\label{eq:moments_of_phi_f_in_gs}
    \gsb \phi(f)^n \gs = \begin{cases}
(n-1)!! \, W(f,f)^{n/2} &\text{$n$ even} \; ,\\
0 &\text{$n$ odd} \; .
\end{cases}
\end{equation}

Smeared field operators have some important
\begin{properties}[Smeared field operators]\label{prop:smeared_fields}
\phantom{a}
\begin{enumerate}
    \item \textit{Linearity}: $\phi(a f+ bg) = a\phi(f) + b\phi(g)$ for any $a,b\in\mathbb{C}$ and any test functions $f,g$.
    \item \textit{Hermiticity}: $\phi(f)^{\dagger} = \phi(f^*)$ for any complex-valued test function $f$ (here $(\cdot)^*$ denotes complex conjugation).
    \item \textit{Field equation}: $\phi(f)=0$ if $f\in \ker \Delta$.
    \item \textit{Covariant commutation relations}: $[\phi(f) , \phi(g)] = i \Delta(f,g) \mathds{1}$.
\end{enumerate}
\end{properties}
\noindent Temporarily considering a continuum spacetime $\spctm{S}=\spctm{M}$, point (iii) essentially ensures that the operator-valued distribution $\phi(x)$ satisfies the equations of motion. Recall in the continuum that if $f\in\ker\Delta$, then there exists a test function $g$ such that $f = Pg$. Thus, $\phi(Pg)=0$ for any test function $g$. It is important to note that this means that the localisation region of a given smeared field, and thus of a given operator more generally, is not unique. In particular, for any smeared field $\phi(f)$ localisable in some region $\spctm{R}\subset \spctm{M}$, given any region $\spctm{R}'$ such that $\spctm{R}\subseteq D(\spctm{R}')$, i.e. such that $\spctm{R}$ is in the domain of dependence of $\spctm{R}'$ (e.g. set $\spctm{R}'$ to be $N$ in Fig.~\ref{fig:spacetime geometry a} and $\spctm{R}$ to be any other region within $D(N)$), then there exists a test function $g$ supported in $\spctm{R}'$ such that $f-g=Ph$ for some $h\in C_0^{\infty}(\spctm{M})$, and hence $\phi(f) = \phi(g)$. Thus, $\phi(f)$ is also localisable in $\spctm{R}'$. This generalises to any operator $A\in\alg{A}(\spctm{R})$, i.e. $A$ is also localisable in $\spctm{R}'$. This lack of uniqueness is related to the notion in classical or quantum physics that one can measure the same observable of the theory at different times, e.g. one can measure the position and momentum of a classical projectile at time $t$, or, equivalently, one can measure its position and momentum at some other time $t'$ and use the equations of motion to infer its position and momentum at $t$.

\begin{remark}[Interpretation of a smeared field operator]\label{remark:Interpretation of a smeared field operator}
It is worth pausing to build some physical intuition for what $\phi(f)$ represents as an observable of the theory. Since a smeared field operator, $\phi(f)$, can be mapped (bijectively) to a smooth spatially compact solution, $\Delta f$, it makes sense to think of $\phi(f)$ as the quantum operator associated with the classical solution, or `mode', $\Delta f$, in the same way that we think of $\hat{x}$ in non-relativistic quantum mechanics as the quantum operator associated with the classical position of the particle. Furthermore, since the spectrum of a smeared field operator is the whole real line $\mathbb{R}$ (the spectrum of $\hat{x}$ is also $\mathbb{R}$), when measuring this operator we can get any real number as an outcome, with some probability distribution over such outcomes. It seems reasonable to interpret this real number as a measure of `how much' the associated classical mode $\Delta f$ has been excited in the quantum field, or more specifically, the `amplitude' of this mode. It also seems reasonable to interpret the support of the given test function appearing in $\phi(f)$ as, roughly speaking, `where and when' we are measuring the amplitude of this mode. The many-to-one relationship between test functions $f$ and operators $\phi(f)$ also aligns with our intuition that we can measure the amplitude of a given mode at different points in time and/or space.
\end{remark}

Returning to the general case of a continuum or discrete spacetime, $\spctm{S}$, given some region $\spctm{R}\subseteq \spctm{S}$, there is an associated subalgebra of operators, $\alg{A}(\spctm{R})\subseteq \alg{A}$, which is generated by the identity, $\mathds{1}$, and the smeared fields, $\phi(f)$, localisable in $\spctm{R}$. Point (iv) above, and the fact that $\Delta(f,g)=0$ for test functions $f$ and $g$ with spacelike supports, implies what is sometimes called the \textit{Einstein Causality} condition. Specifically, for spacelike regions $\spctm{R}$ and $\spctm{R}'$, the associated subalgebras commute: $[\alg{A}(\spctm{R}) , \alg{A}(\spctm{R}')] = 0$.

Given some compact subset of some continuum spacetime, $K\subset \spctm{M}$, we have the associated out/in-subalgebra $\alg{A}(K^{\pm})$ corresponding to the out/in-region for $K$. Now, since the out/in-region contains a Cauchy surface for $\spctm{M}$, we can localise any $X\in\alg{A}$ in $\alg{A}(K^{\pm})$, and thus the subalgebra $\alg{A}(K^{\pm})$ is really the entire algebra $\alg{A}$.

It will be convenient for our calculations below to introduce the subalgebra, $\alg{B}\subset \alg{A}$, of \textit{bounded} operators on $\alg{F}$, as well as the local bounded subalgebras $\alg{B}(\spctm{R})$ for regions $\spctm{R}$.

\section{Functional Calculus}\label{sec:Functional calculus}

In order to reach our goal of analysing the causal nature of ideal measurements of smeared fields, we need to introduce some basic notions from functional calculus. Readers who are familiar with basic functional calculus are advised to skip to Section~\ref{sec:Functional calculus for smeared fields}.

\subsection{The Sectral Theorem and Functional Calculus}\label{sec:The spectral theorem and functional calculus}

By the spectral theorem~\cite{reed1980functional}, for any (essentially) self-adjoint operator $A\in\alg{A}$, defined on a dense domain of $\alg{F}$, we know there exists a corresponding spectral, or projection-valued measure (p.v.m), $P^A : \BorR\rightarrow \text{proj}(\alg{F})$, mapping the Borel sets over $\mathbb{R}$, $\BorR$, to projectors on $\alg{F}$, $\text{proj}(\mathrm{F})\subset \alg{A}$. $P^A$ satisfies some notable
\begin{properties}[p.v.m]\label{prop:pvm}
\phantom{a}
\begin{enumerate}
    \item $P^A(\mathbb{R}) = \mathds{1}$
    \item $P^A(\varnothing) = 0$
    \item $P^A(\bor{B})P^A(\bor{B}') = P^A(\bor{B}')P^A(\bor{B}) = P^A(\bor{B}\cap \bor{B}')$ for Borel sets $\bor{B},\bor{B}'\in\BorR$
    \item If $\bor{B}=\cup_{n=1}^{\infty}\bor{B}_n$, for mutually disjoint $\bor{B}_n\in\BorR$, then $P^A(\bor{B}) = \sum_{n=1}^{\infty}  P^A(\bor{B}_n)$,
\end{enumerate}
\end{properties}
\noindent where the infinite sum in (iv) converges in the strong operator topology. With this p.v.m we then write
\begin{equation}\label{eq:spectral_thm_A}
    A = \int_{\mathbb{R}} \lambda \, dP^A(\lambda ) \;  .
\end{equation}

\vspace{3mm}
\begin{remark}[Pure point spectrum]\label{remark:diag_op}
If $A$ has a pure point spectrum, i.e. if we can write
\begin{equation}\label{eq:A_diagonalisable}
    A = \sum_{n\in I} \lambda_n P_n \; ,
\end{equation}
where $I$ is some countable indexing set, $\lambda_n\in\mathbb{R}$ are distinct eigenvalues, and $P_n$ are the associated projectors (not necessarily of rank 1, or even finite rank), then the p.v.m $P^A$ can be related to the set of projectors $\lbrace P_n \rbrace_{n\in I}$ in a simple way. Specifically, if $\bor{B}\in\BorR$ is some subset of $\mathbb{R}$ that only includes a single eigenvalue $\lambda_n$, then $P^A(\bor{B}) = P_n$. More generally, if $\bor{B}\in\BorR$ only contains some subset of eigenvalues, $\lbrace\lambda_n\rbrace_{n\in J}$, for some subset $J\subset I$, then $P^A(\bor{B}) = \sum_{n\in J} P_n$. In this case the integral in~\eqref{eq:spectral_thm_A} reduces to the sum in~\eqref{eq:A_diagonalisable}.
\end{remark}

\vspace{3mm}
We now briefly review some concepts from functional calculus that are important for our purposes. The core idea is that one can use the projection-valued measure, $P^A$, for a given self-adjoint operator, $A$, to define what we mean by functions of the given operator, e.g. the exponential function: $e^A$.

Given two states, $\ket{\psi}, \ket{ \chi}\in \alg{F}$, we use the projection-valued measure $P^A$ to define a complex measure, $\mu^A_{\psi , \chi}: \BorR\rightarrow \mathbb{C}$, as $\mu^A_{\psi , \chi}(\bor{B}) = \bra{\psi} P^A (\bor{B}) \ket{ \chi}$, for any Borel $\bor{B}\subseteq\mathbb{R}$. Provided $\ket{\psi} , \ket{\chi }$ are in the domain of $A$, one can then compute the associated ``transition element'', $\bra{\psi }A \ket{ \chi}$, via the integral on the rhs:
\begin{equation}
    \bra{\psi }A \ket{ \chi} = \int_{\mathbb{R}}\lambda \, d\mu^A_{\psi , \chi}(\lambda ) \; .
\end{equation}
Now, given some function on the real numbers, $\zeta :\mathbb{R}\rightarrow \mathbb{C}$, we can define the associated operator $\zeta(A)$~\footnote{Note we have overloaded the notation for the function $\zeta$, as it should technically be acting $\mathbb{R}$ and not some operator.}. Specifically, $\zeta(A)$ is defined as the operator which, for any $\ket{\psi}, \ket{ \chi} \in \alg{F}$ for which the function $\zeta$ is square-integrable against the measure $\mu^A_{\psi , \chi}$, the associated transition element, $\bra{\psi }\zeta(A) \ket{ \chi}$, is given by the integral on the rhs:
\begin{equation}
    \bra{\psi }\zeta(A) \ket{ \chi} = \int_{\mathbb{R}} \zeta (\lambda ) \, d\mu^A_{\psi , \chi}(\lambda ) \; .
\end{equation}
In using the integral form on the rhs to define transition elements for all viable pairs $\ket{\psi}, \ket{ \chi} \in \alg{F}$, we have essentially defined the operator $\zeta (A)$; we say we have \textit{defined $\zeta (A)$ by functional calculus}. We also note that, if $\zeta$ is a bounded function, then the operator $\zeta(A)$ is bounded with norm $\norm{\zeta(A)}\leq \norm{\zeta}_{\infty}$.

\begin{example}[Indicator functions and projectors]\label{example:Indicator functions and projectors}
An important type of function for our purposes is the indicator function, $1_{S} : \mathbb{R} \rightarrow \mathbb{R}$, for some subset $S\subseteq \mathbb{R}$. We set $1_S(x) =1$ for $x\in S$, and $1_S(x)=0$ otherwise. Given some indicator function $1_{\bor{B}}$, for some Borel set $\bor{B}\in\BorR$, the associated operator defined through functional calculus is in fact equivalent to the associated projector. That is,
\begin{equation}\label{eq:pvm_as_indicator_function}
    1_{\bor{B}}(A) = P^A(\bor{B}) \; .
\end{equation}
In what follows, we will sometimes swap between these two choices of notation for the spectral projectors, depending on the context.
\end{example}

\vspace{3mm}
Finally, in Appendix~\ref{app:unitary_action_on_pvm} we show the following useful result:
\begin{equation}\label{eq:unitary_on_pvm}
    U^{-1}P^A (\cdot ) U = P^{U^{-1}AU}(\cdot ) \; ,
\end{equation}
where $U$ is some unitary operator ($U^{\dagger}=U^{-1}$).

\subsection{Functional Calculus for Smeared Fields}\label{sec:Functional calculus for smeared fields}

Let us now turn to the specific case of smeared field operators. The main goal for this section is to develop the machinery to compute the following ground state expectation value:
\begin{equation}\label{eq:main goal expectation value for this section}
    \gsb \zeta(\phi(f) ) e^{it\phi(g)}\gs \; ,
\end{equation}
where $f$ and $g$ are test functions, $\zeta : \mathbb{R}\rightarrow\mathbb{C}$ is some function, and thus $\zeta (\phi(f))$ is some operator defined via functional calculus. Expectation values of this form will turn up in the later sections on measurements and other operations. 

For some real-valued test function $f$, the Hermiticity property of smeared fields implies self-adjointness: $\phi(f)^{\dagger} = \phi(f)$. By the Spectral Theorem we know there exists an associated p.v.m $P^{\phi(f)}$.

In Appendix~\ref{app:unitary_kick_on_pvm_phi_f} we show the following property of this p.v.m:
\begin{equation}\label{eq:unitary_kick_on_pvm_phi_f}
    e^{-i\phi(g)}P^{\phi(f)}(\bor{B} ) e^{i\phi(g)} = P^{\phi(f) - \Delta(f,g)}(\bor{B}) = P^{\phi(f)}(\bor{B}+ t\Delta(f,g)) \; ,
\end{equation}
where $g$ is a real-valued test function, $\bor{B}$ is any Borel subset of $\mathbb{R}$, and by $\bor{B}+c$ for $c\in\mathbb{R}$ we mean the Borel set $\bor{B}$ translated by $c$. Here $P^{\phi(f) - \Delta(f,g)}$ denotes the spectral measure of the operator $\phi(f) - \Delta(f,g)\mathds{1}$. In Appendix~\ref{app:unitary_kick_on_pvm_phi_f} we further show that, for some function $\zeta:\mathbb{R}\rightarrow\mathbb{C}$,
\begin{equation}\label{eq:unitary_kick_on_zeta_phi_f}
    e^{-i\phi(g)}\zeta(\phi(f)) e^{i\phi(g)} = \zeta(\phi(f) - \Delta(f,g)) \; ,
\end{equation}
on an appropriate dense domain of $\alg{F}$.

Given $P^{\phi(f)}$, and the ground state $\gs\in\alg{F}$, we can further define the associated (real and non-negative) measure $\mu^{\phi(f)}_{\Omega,\Omega}(\cdot ) = \gsb P^{\phi(f)}(\cdot ) \gs$. For any function $\zeta :\mathbb{R}\rightarrow\mathbb{C}$, which is square-integrable against the measure $\mu^{\phi(f)}_{\Omega,\Omega}$, we can then compute the ground state expectation value of the operator $\zeta(\phi(f))$ via the integral on the rhs:
\begin{equation}
    \gsb \zeta(\phi(f) ) \gs = \int_{\mathbb{R}} \zeta(\lambda ) d\mu^{\phi(f)}_{\Omega,\Omega}(\lambda )\; .
\end{equation}

While the existence of the measure $\mu^{\phi(f)}_{\Omega,\Omega}$ is guaranteed by the Spectral Theorem, to compute such expectation values in practice we need to know the explicit form of this measure. We first note that, for $\zeta(\lambda) = \lambda^n$, we have
\begin{equation}\label{eq:moments}
    \gsb \phi(f)^n \gs = \int_{\mathbb{R}} \lambda^n d\mu^{\phi(f)}_{\Omega,\Omega}(\lambda )\; ,
\end{equation}
where we recall that the lhs was given explicitly in~\eqref{eq:moments_of_phi_f_in_gs}. One can further show that the explicit expressions for the moments in~\eqref{eq:moments_of_phi_f_in_gs} imply (via the Hamburger moment theorem~\cite{aqft_fewster_rejzner}) that the measure $\mu^{\phi(f)}_{\Omega,\Omega}$ is uniquely given as a Gaussian distribution with width $\sqrt{W(f,f)}$. That is, under the integral we can write $d\mu^{\phi(f)}_{\Omega,\Omega}(\lambda) = d\lambda \, p(\lambda)$, where $d\lambda$ denotes the usual Lebesgue measure on $\mathbb{R}$, and
\begin{equation}\label{eq:p Gaussian function}
    p(\lambda ) = \frac{1}{\sqrt{2\pi W(f,f)}}e^{-\frac{\lambda^2}{2W(f,f)}} \; ,
\end{equation}
is the \textit{measure density}. Note that this explicit form for $\mu^{\phi(f)}_{\Omega,\Omega}$ implies that $\mu^{\phi(f)}_{\Omega,\Omega}$ is \textit{absolutely continuous} with respect to the Lebesgue measure on $\mathbb{R}$. That is, for any Borel set $\bor{B}\subseteq \mathbb{R}$ of Lebesgue measure $0$, we have that $\mu^{\phi(f)}_{\Omega,\Omega}(\bor{B})=0$.

With the explicit form of $p$, ground state expectation values of functions of $\phi(f)$ can be computed by integrating the given function against $p$, i.e.
\begin{equation}
    \gsb \zeta(\phi(f) ) \gs = \int_{\mathbb{R}} \zeta(\lambda ) p(\lambda ) d\lambda \; .
\end{equation}
In particular, for $\zeta(\lambda) = e^{it\lambda}$ (for some $t\in\mathbb{R}$), we get
\begin{equation}\label{eq:exp_phi_measure}
    \gsb e^{it\phi(f)} \gs = \int_{\mathbb{R}} e^{it\lambda} p(\lambda ) d\lambda \; .
\end{equation}
Defining the Fourier transform of a function $\zeta$ as
\begin{equation}\label{eq:def FT}
    \fwt{F}\lbrace \zeta \rbrace (t ) = \frac{1}{\sqrt{2\pi}}\int_{\mathbb{R}}dx \, e^{itx} \zeta(x) \; ,
\end{equation}
we can rewrite the rhs of~\eqref{eq:exp_phi_measure} as $\sqrt{2\pi} \fwt{F}\lbrace p \rbrace (t)$. The Fourier transform of the function $p$ in~\eqref{eq:p Gaussian function} can be computed explicitly, and hence we get
\begin{equation}\label{eq:exp_phi_f_gs}
    \gsb e^{it\phi(f)} \gs = e^{-t^2\frac{W(f,f)}{2}} \; .
\end{equation}
This relation will be useful in what follows.

Recall our main goal for this section to compute~\eqref{eq:main goal expectation value for this section}. Working towards this we turn to the density of a slightly different measure, specifically, the complex measure
\begin{equation}
    \mu^{\phi(f)}_{\Omega , e^{it\phi(g)}\Omega}(\bor{B}) = \gsb P^{\phi(f)}(\bor{B}) \ket{e^{it \phi(g)}\Omega } = \gsb P^{\phi(f)}(\bor{B}) e^{it \phi(g)}\gs \; ,
\end{equation}
where $g$ is a real-valued test function, and where $\bor{B}\in\BorR$. In Appendix~\ref{app:absolute_continuity} we show that there exists a Lebesgue integrable density, $q$, such that, under the integral, we can write $d\mu^{\phi(f)}_{\Omega , e^{it\phi(g)}\Omega}(\lambda) = q(\lambda ) d\lambda$. For $s,t\in\mathbb{R}$, we can then write
\begin{align}\label{eq:ft_of_q}
    \gsb e^{is\phi(f)}e^{it\phi(g)}\gs & = \int_{\mathbb{R}} e^{is\lambda} q(\lambda ) d\lambda \nonumber
    \\
    & = \sqrt{2\pi} \fwt{F}\lbrace q \rbrace (s) \; ,
\end{align}
where we know that the Fourier transform of $q$ exists since $q$ is Lebesgue integrable.

On the other hand, using the Baker–Campbell–Hausdorff formula, e.g. $e^Xe^Y  = e^{X+Y + \frac{1}{2}[X,Y]+...}$, we find that
\begin{align}
    e^{is\phi(f)}e^{i t\phi(g)} & = e^{is\phi(f)+it\phi(g)-\frac{1}{2}st[\phi(f),\phi(g)]} \nonumber
    \\
    & = e^{is\phi(f)+it\phi(g)-\frac{i}{2}st\Delta(f,g)} \nonumber
    \\
    & = e^{-\frac{i}{2}st\Delta(f,g)}e^{i\phi(s\, f + t\, g )} \; ,
\end{align}
where we have used the linearity of $\phi(\cdot )$ in the last line. We then have that
\begin{align}\label{eq:explicit_ft_of_q}
    \gsb e^{is\phi(f)}e^{it\phi(g)}\gs & = e^{-\frac{i}{2}st\Delta(f,g)} \gsb e^{i\phi(s\, f + t\, g )}\gs \nonumber
    \\
    & = e^{-\frac{i}{2}st\Delta(f,g)}  e^{-\frac{W(s\, f + t\, g,s\, f + t\, g)}{2}} \; ,
\end{align}
using~\eqref{eq:exp_phi_f_gs} in the last line. Note that, from the bilinearity of $W(\cdot , \cdot)$, we can also write
\begin{equation}
    W(s\, f + t\, g,s\, f + t\, g) = s^2W(f,f) + t^2 W(g,g) + st (W(f,g) + W(g,f) ) \; .
\end{equation}
Importantly,~\eqref{eq:ft_of_q} and~\eqref{eq:explicit_ft_of_q} together imply that
\begin{equation}
    \sqrt{2\pi} \fwt{F}\lbrace q \rbrace (s) = e^{-\frac{i}{2}st\Delta(f,g)}  e^{-\frac{W(s\, f + t\, g,s\, f + t\, g)}{2}} \; .
\end{equation}
By dividing by $\sqrt{2\pi}$, and taking an inverse Fourier transform, we can then explicitly determine the complex density $q$ as a Gaussian of width $\sqrt{W(f,f)}$ shifted in the imaginary direction:
\begin{equation}\label{eq:complex_density}
    q(\lambda) = \frac{1}{\sqrt{2\pi W(f,f)}}e^{-t^2\frac{W(g,g)}{2}} e^{-\frac{(\lambda- i t W(f,g) )^2}{2 W(f,f)}} \; .
\end{equation}

Now, given any function $\zeta:\mathbb{R}\rightarrow \mathbb{C}$ that is square-integrable with the density $q$, we can explicitly compute the associated inner product by integrating $\zeta$ against this density $q$. That is,
\begin{equation}
    \gsb \zeta(\phi(f) ) e^{it\phi(g)}\gs = \int_{\mathbb{R}} \zeta(\lambda ) q(\lambda ) d\lambda \; .
\end{equation}
Thus we have met our main goal for this section.

We can go further and rewrite this using the Weierstrass transform of a function $\zeta:\mathbb{R}\rightarrow\mathbb{C}$, which is defined as
\begin{equation}\label{eq:weierstrass_transform}
    \fwt{W}\lbrace \zeta \rbrace (z) = \int_{\mathbb{R}}dx \, \frac{1}{\sqrt{4\pi}}e^{-\frac{(x-z)^2}{4}}\zeta(x) \; ,
\end{equation}
where $z\in\mathbb{C}$. We then have
\begin{equation}\label{eq:weierstrass_inner_product}
    \gsb \zeta(\phi(f) ) e^{it\phi(g)}\gs = e^{-t^2\frac{W(g,g)}{2}} \fwt{W}\Bigg\lbrace \zeta\Bigg(\sqrt{\frac{W(f,f)}{2}} \, \cdot \Bigg) \Bigg\rbrace ( z) \; .
\end{equation}
where $z = i t\sqrt{\frac{2}{W(f,f)}} W(f,g)$. That is, to compute the inner product of the operator $\zeta(\phi(f))$, for any sufficiently well-behaved function $\zeta$, we simply need to compute the value of its Weierstrass transform at $z$.

A final property of the Weierstrass transform worth noting is that
\begin{equation}\label{eq:weierstrass_shift}
    \fwt{W}\lbrace \zeta(a(\cdot ) + b) \rbrace (z) = \fwt{W}\lbrace \zeta (a(\cdot ) ) \rbrace \left(z + \frac{b}{a}\right) \; ,
\end{equation}
where $a,b\in\mathbb{R}$ and $a\neq 0$.

\section{Measurements and Other Operations}\label{sec:Measurements and Other Operations}

\subsection{Ideal Measurements}\label{sec:Ideal measurements}

We now recall the textbook description of the \textit{projection postulate} and \textit{ideal measurements}, and highlight some important aspects for the following sections.

Given some self-adjoint operator $A$ corresponding to some observable of the system, and some state $\ket{\Psi }$, we interpret $\bra{ \Psi } A \ket{ \Psi}$ as the expected value one would obtain from performing measurements of the given observable in the given state. Furthermore, given the associated p.v.m, $P^A$, the inner product $\bra{ \psi}P^A(\mathtt{B}) \ket{\psi}$ is interpreted as the probability that a measurement of the observable will have an outcome in the (Borel) subset $\bor{B}\subseteq \mathbb{R}$ (in the given state). For a density matrix $\rho$ (pure if $\rho = \ket{\psi}\bra{\psi}$ for some vector $\ket{\psi}$), expectation values are computed as $\text{tr}(\rho \, A)$.

Measurement outcomes are always real numbers corresponding to some point in the spectrum of the given self-adjoint operator $A$. We can divide up $\mathbb{R}$ into mutually disjoint disjoint Borel sets, $\bor{B}_n$, where $\bor{B}_n \cap \bor{B}_m = \varnothing$ for $n\neq m$. These Borel sets will act as `bins' to put our different measurement outcomes into. For example, we could have $\bor{B}_n = [n,n+1)$ for $n\in\mathbb{Z}$, or we could have just two Borel sets, e.g. $\bor{B}_1 = [-1,1]$ and $\bor{B}_2 = \mathbb{R}\setminus B_1$. The particular choice of bins (e.g. their width, how many there are etc.) is dependent on the specifics of the experiment and how well we can resolve particular outcomes. We make the following
\begin{definition}[Resolution]\label{def:resolution}
The \textit{resolution}, $\bor{R}$, of our measurement is a countable set (not necessarily finite) set of Borel sets $\bor{R} = \lbrace \bor{B}_n \rbrace_{n\in I}$ (where $I$ is some countable indexing set), where the Borel sets are mutually disjoint ($\bor{B}_n\cap \bor{B}_m = \varnothing$ for $n\neq m$) and cover the real line ($\cup_{n\in I}\bor{B}_n = \mathbb{R}$).
\end{definition}

Now, given some resolution, $\bor{R} = \lbrace \bor{B}_n \rbrace_{n\in I}$, we can form the corresponding projectors, $E_n$, using the p.v.m. Explicitly, $E_n = P^A(\bor{B}_n)$. Since the Borel sets are mutually disjoint, we have
\begin{align}
    E_n E_m & = P^A(\bor{B}_n )P^A(\bor{B}_m ) \nonumber
    \\
    & = P^A(\bor{B}_n\cap \bor{B}_m) \nonumber
    \\
    & = P^A (\varnothing) \nonumber
    \\
    & = 0 \; ,
\end{align}
for $n\neq m$. Line 2 follows from property (iii) in~\ref{prop:pvm}, and the last line follows from property (ii) in~\ref{prop:pvm}. In addition, the fact that the Borel sets cover $\mathbb{R}$ means that the following sum (which converges in the strong operator topology) gives
\begin{align}
    \sum_{n\in I} E_n & = \sum_{n\in I} P^A(\bor{B}_n) \nonumber
    \\
    & = P^A ( \cup_{n \in I} \bor{B}_n ) \nonumber
    \\
    & = P^A (\mathbb{R}) \nonumber
    \\
    & = \mathds{1} \; ,
\end{align}
where line 2 follows from property (iv) in~\ref{prop:pvm}, and the last line follows from property (i) in~\ref{prop:pvm}.

The \textbf{\textit{projection postulate}} states that after a measurement of $A$, with outcome in $\bor{B}_n$, the state goes from $\ket{\psi}\mapsto \frac{E_n \ket{\psi}}{\sqrt{p_n}}$, where the square root of the probability of this outcome, $p_n = \bra{\psi}E_n\ket{\psi}$, is required to normalise the resulting state. In terms of a density matrix $\rho$ we have $\rho \mapsto \rho_n = \frac{E_n \rho E_n}{p_n}$, where in this case the probability is $p_n = \text{tr}(\rho E_n)$.

In updating the state (or density matrix) in this way we say we have made an \textbf{\textit{ideal measurement}} of $A$. Note, we do not care about how this measurement has been implemented, only that its effect on the state of the system amounts to the given update. This is an operationalist perspective on measurement. Such a measurement is called `ideal' as, in reality, it is often the case that the state update does not take precisely this form. One can, however, hope to get closer and closer to this ideal measurement update with more refined measurement apparatus.

In a \textit{non-selective} ideal measurement we do not condition on any outcome. In this case, the act of making the measurement leaves us in a statistical distribution over the possible output states $\rho_n$, where each is weighted by its corresponding probability $p_n$. That is, 
\begin{equation}\label{eq:non_selective_ideal_rho}
    \rho\mapsto \tilde{\rho} = \sum_{n\in I} p_n \, \rho_n = \sum_{n\in I}E_n \rho E_n \; ,
\end{equation}
where on the far rhs the probabilities $p_n$ have cancelled out.

\begin{remark}\label{remark:the purpose of the projection postulate}
The practical purpose of the projection postulate, or equivalently the state update rule, is to tell us how the statistics of our system are affected by the given measurement of $A$. More specifically, for any other self-adjoint operator $A'$ that we wish to measure after our measurement of $A$, we compute its expectation value using the updated density matrix as $\text{tr}(\tilde{\rho}A')$, and not with the original density matrix $\rho$. Note that this means that future measurement statistics can be altered by the act of our measurement alone, even in the case where we do not condition on any particular outcome. We can, of course, condition on a particular outcome $\bor{B}_n$ and compute the associated conditional expectation value $\text{tr}(\rho_n A')$.
\end{remark}

\subsubsection{Interpreting Resolution}\label{sec:Interpreting resolution}

Given some self-adjoint operator of the form $A = \sum_n \lambda_n P_n$, with distinct eigenvalues $\lambda_n\in\mathbb{R}$ and associated projectors $P_n$ (not necessarily of rank 1, or even finite rank), there is a `canonical' choice of bins for the outcomes; a canonical choice of resolution $\bor{R} = \lbrace \bor{B}_n \rbrace_{n\in I}$. Specifically, we can pick bins, $\bor{B}_n$, that cover one, and only one, eigenvalue $\lambda_n$. Now, the projectors for these bins are precisely those in the spectral decomposition of $A$, i.e. $E_n = P^A(\bor{B}_n) = P_n$.

Consider, for example, a system with 3 energy eigenstates $\ket{1}$, $\ket{2}$, and $\ket{3}$. The Hamiltonian is then
\begin{equation}\label{eq:3_level_H}
    \hat{H} = \sum_{n=1}^3 h_n \ket{n}\bra{n} \; ,
\end{equation}
where we assume the energies of each level are distinct and satisfy $h_1< h_2 < h_3$. In this case, the textbook description of an ideal measurement of energy tells us to update the density matrix to $\tilde{\rho} = \sum_{n=1}^3 \ket{n}\bra{n} \rho \ket{n}\bra{n}$. This corresponds to a choice of resolution with 3 Borel sets, $\bor{R} = \lbrace \bor{B}_1 , \bor{B}_2 , \bor{B}_3 \rbrace$ with $h_n\in \bor{B}_n$. For example, we could have $\bor{B}_1 = (-\infty , h_1]$, $\bor{B}_2 = (h_1 , h_2]$, and $\bor{B}_3 = (h_2,\infty)$. The corresponding projectors are then $E_n = P^A(\bor{B}_n) = \ket{n}\bra{n}$. This is the `canonical' choice referred to above.

On the other hand, one could imagine a measurement of energy where the measurement accuracy is not good enough to resolve the two highest energies, $h_2$ and $h_3$ say. This can be described by the resolution $\bor{R}' = \lbrace \bor{B}_1 , \bor{B}'_2 \rbrace$, where $\bor{B}'_2 = \bor{B}_2\cup \bor{B}_3$. In this case, there are two projectors coming into the ideal measurement update formula: $E'_1 = E_1 = P^A(\bor{B}_1) = \ket{1}\bra{1}$ and $E'_2 = P^A(\bor{B}'_2) = \ket{2}\bra{2}+\ket{3}\bra{3}$. Now, one may take the point of view that this latter case really corresponds to an ideal measurement of a \textit{different} observable, e.g.
\begin{equation}
    \hat{H}' = h'_1 \ket{1}\bra{1} + h'_2 \left( \ket{2}\bra{2}+\ket{3}\bra{3} \right) \; ,
\end{equation}
with $h'_1 \neq h'_2$, rather than an ideal measurement of $\hat{H}$ with a different resolution $\bor{R}'$.

This ambiguity in interpretation --- whether a different resolution corresponds to i) a measurement of a different operator, or ii) a measurement of the same operator but with a different measurement accuracy --- is particular to operators which have a pure point spectrum, as those are the operators for which a `canonical' choice for the resolution exists. For operators with a continuous spectrum, e.g. $\hat{x}$, there is no `canonical' choice for the resolution, and hence the point of view in i) is removed as an option.

To see this, consider the position operator $\hat{x}$ in 1-dimensional non-relativistic quantum mechanics, where the Hilbert space is $L^2(\mathbb{R};\mathbb{C})$ (complex-valued square-integrable functions on $\mathbb{R}$). $\hat{x}$ is an (unbounded) self-adjoint operator on a dense domain of $L^2(\mathbb{R};\mathbb{C})$, and thus there exists an associated p.v.m, $P^{\hat{x}}$. Given some interval on the real line, or more generally some Borel set $\bor{B}\subseteq\mathbb{R}$, we can talk about an ideal measurement of position with outcome in $\bor{B}$. We can similarly form the associated projector $P^{\hat{x}}(\bor{B})$, which, for any wavefunction $\psi(x)\in L^2(\mathbb{R};\mathbb{C})$, gives $\tilde{\psi} = P^{\hat{x}}(\bor{B})\psi$ where $\tilde{\psi}(x)=\psi(x)$ for any $x\in \bor{B}$ and $\tilde{\psi}(x)=0$ otherwise. 

Given some resolution of the real line, $\bor{R} = \lbrace \bor{B}_n \rbrace_{n\in I}$ (some bins for which we can say the particle's position was measured to fall within), we can form the associated projectors $E_n = P^{\hat{x}}(\bor{B}_n)$ and determine the updated state $\tilde{\rho}$ via~\eqref{eq:non_selective_ideal_rho}. That is, even for unbounded operators such as $\hat{x}$, the ideal measurement update map in~\eqref{eq:non_selective_ideal_rho} is still well defined. One simply has to be careful how to define the projectors, $E_n$, as no normalisable eigenstate $\ket{\psi}$ exists for $\hat{x}$, and hence no projectors of the form $E_n = \ket{\psi}\bra{\psi}$ exist either.

While we can still make sense of~\eqref{eq:non_selective_ideal_rho} for unbounded operators such as $\hat{x}$, we cannot diagonalise $\hat{x}$ and write it as some sum as in~\eqref{eq:3_level_H}, and thus there is no canonical choice of projectors to use in the ideal measurement update formula (unlike in the example of $\hat{H}$ from~\eqref{eq:3_level_H}). Equivalently, there is no `canonical' choice of bins for our outcomes; the latter being entirely dependent on the details and quality of our experiment. Furthermore, given two different choices of bins, i.e. two different choices of resolution, we do not say we are making ideal measurements of two different observables. In both cases we say we are measuring the particle's $x$-position, but potentially with different levels of experimental accuracy.

\begin{remark}\label{remark:different resolutions not different operators}
\textit{Thus, to treat pure point and continuous (and mixed) spectrum self-adjoint operators democratically, we find it more appropriate to think of the ideal measurement update for different resolutions, $\bor{R}$ and $\bor{R}'$ (for the same self-adjoint operator $A$), not as measurements of different observables, but as measurements of the same observable but with different experimental setups.}
\end{remark}

\subsubsection{The Dual Picture}\label{sec:The dual picture}

Before moving on to the QFT case, let us briefly introduce the dual picture where the update formula is applied, not to the density matrix, but to the operators. Given some self-adjoint operator $A$ and resolution $\bor{R} = \lbrace \bor{B}_n \rbrace_{n\in I}$, we define the non-selective ideal measurement update map on operators as
\begin{equation}
    \mathcal{E}_{A,\bor{R}}(X) = \sum_{n\in I} E_n X E_n \; ,
\end{equation}
where $X$ is any operator, and $E_n = P^A(\bor{B}_n)$. After an ideal measurement of $A$, the expectation value of some operator $X$ is computed as
\begin{align}
    \text{tr}(\rho \mathcal{E}_{A,\bor{R}}(X) ) & = \sum_{n\in I} \text{tr}(\rho  E_n X E_n ) \nonumber
    \\
    & = \sum_{n\in I} \text{tr}(E_n \rho  E_n X  ) \nonumber
    \\
    & = \text{tr}(\tilde{\rho} X ) \; ,
\end{align}
where we have used the cyclic property of the trace in line 2. Thus, future expectation values, and more generally the statistics of any future measurements, are the same in this dual picture.

\subsection{Ideal Measurements in QFT}\label{sec:Ideal measurements in QFT}

In a relativistic theory such as QFT we must specify where in spacetime (`where' and `when') a measurement takes place. We will do this by specifying some compact subset of spacetime, $K\subset \spctm{S}$. The only observables, $A$, that we can measure in this subset of spacetime are those for which $A\in\alg{A}(\spctm{R})$ for some region $\spctm{R}\subset K$ (i.e. $A$ is localisable in some region inside $K$).

The reason for using a compact subset $K$ is the following. Firstly, the boundedness of $K$ (in both space and time) encodes the fact that our measurement has finite spatial extent and duration in time. Secondly, recalling Section~\ref{sec:Continuum Spacetime geometry} we can talk about the in/out-regions, $K^{\mp} = \spctm{S}\setminus J^{\pm}(K)$, associated with $K$. The in-region corresponds to the subset of spacetime which is \textit{definitely not} to the future of our measurement of $A$ in $K$, and is thus the region in which other measurements should not be affected by our measurement of $A$ in $K$. Conversely, the out-region is the subset of spacetime which is not to the past of our measurement of $A$ in $K$, and so any measurements occurring here have the \textit{potential} to be affected by our measurement of $A$ in $K$ (measurements happening in regions spacelike to $K$ should not be affected, however). We can encode all of this for the in, or out, region by specifying that we do not, or do, apply the update map for the measurement of $A$ defined below. We will discuss this in more detail shortly.

Consider, then, some compact $K\subset \spctm{S}$ and some self-adjoint operator $A\in\alg{A}(\spctm{R})$ for some region $\spctm{R}\subset K$. Given some density matrix $\rho$ (referred to simply as a state from now on, and pure if $\rho = \ket{\psi}\bra{\psi}$ for some $\ket{\psi}\in\alg{F}$), we interpret $\text{tr}(\rho A)$ as the expected value one would obtain from performing measurements of the given observable within $K$ (in the given state). Given the associated p.v.m, $P^A$, and some $\bor{B}\in\BorR$, the expected value of $P^A(\bor{B})$ in the state $\rho$, i.e. $\text{tr}(\rho P^A(\bor{B}))$, is interpreted as the probability that a measurement of the observable will have an outcome within $\bor{B}$ in the given state. In summary, the state $\rho$ tells us, via its trace against operators in $\alg{A}$, the statistics (including expectation values, variances, probabilities, etc.) of any measurements in our theory. It can be more helpful then, as is done in AQFT, to think of the state simply as some linear map $\text{tr}(\rho \, \cdot): \alg{A}\rightarrow \mathbb{C}$. 

Even without conditioning on any particular outcome, the act of measuring the observable $A$ in $K$ may affect the statistics of future measurements. We can encode this effect through the following ideal measurement update map:

\begin{definition}[Ideal measurement in QFT]\label{def:Ideal measurement in QFT}
For some self-adjoint $A\in\alg{A}$, and some resolution $\bor{R} = \lbrace \bor{B}_n \rbrace_{n\in I}$, the associated non-selective ideal measurement update map is defined, for any $X\in\alg{B}$, as
\begin{equation}\label{eq:ideal_measurement_qft}
    \mathcal{E}_{A,\bor{R}}(X) = \sum_{n\in I} E_n X E_n \; ,
\end{equation}
where $E_n = P^A(\bor{B}_n)$, and where we have restricted $X$ to the subalgebra of bounded operators $\alg{B}\subset\alg{A}$ to avoid complications regarding operator domains.
\end{definition}
\noindent We take an operationalist viewpoint in the sense that we do not care how this measurement is implemented on the system; we simply assume that we do \textit{something} to the system which amounts to a change in the statistics as described by the above update map.

For any observable $X\in\alg{B}$ if we measure its expectation value in the in-region $K^-$, then, since this should not be affected by our measurement of $A$ in $K$, we compute the associated expectation value as $\text{tr}(\rho X)$. If we measure it in the out-region $K^+$ then we compute it using the update map as $\text{tr}(\rho\mathcal{E}_{A,\bor{R}}(X))$, as this measurement of $X$ now has the potential to be affected by our measurement of $A$ in $K$. It is in this precise sense that we have encoded the effect of our measurement of $A$, and that it took place in $K$.

For simplicity, we ignore the case where one wants to measure an expectation value in the remaining subset of the spacetime, $\spctm{S}\setminus ( K^+ \cup K^- )$. We essentially consider this subset as being `blocked off' for the purposes of making our `controlled' measurement of $A$ in $K$.

For any operators localisable in regions spacelike to $K$, i.e. any $X\in \alg{B}(K^{\perp})$, the above tells us to compute the expectation value both as $\text{tr}(\rho X)$ \textit{and} as $\text{tr}(\rho\mathcal{E}_{A,\bor{R}}(X))$. In fact, the Einstein causality condition --- that spacelike operators commute --- ensures that $[A , X]=0$, and thus $[E_n , X]=0$, which then implies that 
\begin{align}\label{eq:locality_of_ideal_map_phi}
    \mathcal{E}_{A,\bor{R}}(X) & = \sum_{n\in I} E_n X E_n \nonumber
    \\
    & = \sum_{n\in I} {E_n}^2 X \nonumber
    \\
    & = \left( \sum_{n\in I} E_n \right) X \nonumber
    \\
    & = X \; ,
\end{align}
where line 3 follows as ${E_n}^2 = E_n$, and the last line follows as $\sum_{n\in I}E_n = \mathds{1}$. Thus, $\mathcal{E}_{A,\bor{R}}$ acts trivially on $\alg{B}(K^{\perp})$, which implies that $\text{tr}(\rho \mathcal{E}_{A,\bor{R}}(X)) = \text{tr}(\rho X)$, and so the statistics of measurements happening in regions spacelike to $K$ are unchanged by our measurement in $K$. Since we are not conditioning on any outcome from our measurement of $A$, this physically makes sense. If we had conditioned on some outcome we could have expected to see some change to statistics in spacelike regions, which would be due to any spacelike correlations in our theory.

In what follows we will be particularly interested in ideal measurements of smeared field operators $\phi(f)$. Similarly to $\hat{x}$ in the non-relativistic case, the spectrum of the self-adjoint $\phi(f)$ is the entire real line, i.e. it is an unbounded operator. As with $\hat{x}$, one can still make sense of the ideal measurement update map, provided we have some resolution $\bor{R} = \lbrace \bor{B}_n \rbrace_{n\in I}$ (see the discussion in Section~\ref{sec:Interpreting resolution}). Specifically, given the p.v.m $P^{\phi(f)}$ for $\phi(f)$ we can form the projectors $E_n = P^{\phi(f)}(\bor{B}_n)$ that enter into~\eqref{eq:ideal_measurement_qft}. $\mathcal{E}_{\phi(f),\bor{R}}$ is then a well defined map from $\alg{B}$ to itself.

We again stress our operationalist viewpoint, namely, that we do not assume, or attempt to describe, some apparatus or procedure that implements an ideal measurement of $\phi(f)$ in $K$; we only assume that our actions on the system amount to a change of the statistics in $K^+$ which is described by the update map $\mathcal{E}_{\phi(f),\bor{R}}$. If such an update map violates causality by enabling a superluminal signal, then we can conclude that no such apparatus exists which implements the ideal measurement, and hence we do not have to concern ourselves with the question of how to describe any particular apparatus or procedure. We leave the question of whether $\mathcal{E}_{\phi(f),\bor{R}}$ enables a superluminal signal to Section~\ref{sec:The acausality of an ideal measurement of a smeared field}, and for now just note that the update map $\mathcal{E}_{\phi(f),\bor{R}}$ is perfectly well-defined in the theory, just as $\mathcal{E}_{\hat{x},\bor{R}}$ is for an ideal measurement of $\hat{x}$ in non-relativistic quantum mechanics.

For $\hat{x}$, the projectors $P^{\hat{x}}(\bor{B}_n)$ can be associated to binary (yes/no) measurements of whether the $x$-position falls within some Borel subset, $\bor{B}_n$, of the $x$-axis. Following Remark~\ref{remark:Interpretation of a smeared field operator}, the projectors $P^{\phi(f)}(\bor{B}_n)$ for $\phi(f)$ are associated to binary measurements of whether the measured amplitude of excitation of the classical mode $\Delta f$ falls within some $\bor{B}_n\subseteq \mathbb{R}$ (and thus has nothing to do with position).

It is worth highlighting that the locality of the operator $\phi(f)$, and all its projectors $P^{\phi(f)}(\bor{B})$ for any $\bor{B}\in\BorR$ (`locality' in the sense that they all commute with any spacelike operators $X\in\alg{B}(K^{\perp})$), leads to the fact that $\mathcal{E}_{\phi(f),\bor{R}}(X) = X$ for any $X\in\alg{B}(K^{\perp})$, and hence that expectation values/probabilities of measurements in spacelike regions are unaffected by a non-selective ideal measurement of $\phi(f)$ in $K$. This means that, as long as you do not consider Sorkin's pivotal scenario in Section~\ref{sec:sorkinscenario}, the projection postulate (if utilised for local operators as is done here) raises no issues with relativistic causality. This is in contrast to the usual perspective that the projection postulate is generally \textit{acausal} in QFT, as it `collapses the wavefunction across all of space simultaneously', or something along those lines. In fact, as we have just shown, one can set up the projection postulate and ideal measurements for local observables in a perfectly local way that respects causality, at least for the time being. The question of consistency with relativistic causality ends up being more subtle, and indeed one must confront Sorkin's scenario in Section~\ref{sec:sorkinscenario} to reveal any potential issues.

\subsection{Local and Causal Operations in QFT}\label{sec:Local and causal operations in QFT}

In Quantum Information (QI), and other fields in quantum physics, one considers operations that are more general than ideal measurements. Any physical operations on a given system are described in QI via completely positive trace-preserving update maps on the density matrix. In the dual picture, where the update maps act on the operators, the trace-preserving condition becomes a unit-preserving condition.

Similarly, in our QFT setup we will use the term \textit{update map} for any completely-positive map on the bounded operators, $\mathcal{E}:\alg{B}\rightarrow\alg{B}$, that is unit-preserving, i.e. $\mathcal{E}(\mathds{1}) = \mathds{1}$. This ensures that the state is always normalised: $\text{tr}(\rho \mathcal{E}(\mathds{1})) = \text{tr}(\rho \mathds{1}) = 1$. Note, the term \textit{quantum channel} is also used in QI.

The map, $\mathcal{E}_{\phi(f),\bor{R}}$, for an ideal measurement of $\phi(f)$ with resolution $\bor{R}$, is one such example of an update map in our theory. As another example, given any unitary operator $U\in\alg{B}$ the map $X\mapsto U X U^{-1}$, for $X\in\alg{B}$, defines an update map.

\subsubsection{Local Update Maps}

We now formalise the notion of a local update map.
\begin{definition}[local]\label{def:local}
An update map $\mathcal{E}$ is \textit{local} to a compact subset $K\subset \spctm{S}$ if
\begin{equation}
    \mathcal{E}(X) = X \; ,
\end{equation}
for all $X\in\alg{B}(K^{\perp})$.
\end{definition}
That is, the map acts trivially on operators localisable in regions spacelike to $K$. Again, $\mathcal{E}_{\phi(f),\bor{R}}$ is an example of a local update map. In particular, it is local to $\supp f$. This was essentially shown in~\eqref{eq:locality_of_ideal_map_phi}. The locality of any update map $\mathcal{E}$ (local to $K$) ensures it cannot affect the statistics of any measurements occurring in spacelike regions, as for any $X\in\alg{B}(K^{\perp})$ we have $\text{tr}(\rho \mathcal{E}(X)) = \text{tr}(\rho X)$.

Given any self-adjoint $A\in\alg{A}(\spctm{R})$, the associated \textit{unitary kick}, defined on any $X\in\alg{B}$ as
\begin{equation}
    \mathcal{U}_A (X) = e^{iA}Xe^{-iA} \; ,
\end{equation}
defines an update map local to any compact $K\supset\spctm{R}$. To see this, we note that if $X\in\alg{B}(K^{\perp})$, then $[A,X] = 0$, and hence $[e^{iA},X]=0$, which implies the locality condition $\mathcal{U}_A(X) = X$.

\subsubsection{Kraus Update Maps}\label{sec:Kraus update maps}

We now introduce the notion of \textit{Kraus update maps} as a single framework through which we can discuss a variety of local update maps (including ideal measurements and other more general operations) constructed from local operators in the theory. Given some self-adjoint operator $A\in\alg{A}$ (possibly unbounded), and some bounded, what we will call, \textit{Kraus function} $\kappa : \mathbb{R}\rightarrow\mathbb{C}$, we can use functional calculus to define the bounded operator $\kappa(A)\in\alg{B}$. 
Going further, we can define a {family of Kraus functions} as follows. Consider some measure space $(\Gamma , \sigma (\Gamma) , \nu )$, where $\Gamma$ is a set, $\sigma (\Gamma )$ is a $\sigma$-algebra of subsets of $\Gamma$, and $\nu$ is a non-negative measure defined on $\sigma(\Gamma)$. Now, consider a family of bounded Kraus functions $\kappa(\cdot , \gamma): \mathbb{R} \rightarrow\mathbb{C}$, indexed by $\gamma\in\Gamma$. For each such function we can use functional calculus to define the bounded operator $\kappa(A , \gamma)$. The use of the measure space $(\Gamma , \sigma (\Gamma) , \nu )$ allows us to define integration over the family of Kraus functions in what follows.
\begin{definition}[Kraus update map]
A \textit{Kraus update map}, $\mathcal{E}_{A,\kappa}:\alg{B}\rightarrow\alg{B}$, for some (possibly unbounded) self-adjoint operator $A\in\alg{A}$ and some family of bounded Kraus functions $\lbrace \kappa(\cdot , \gamma)\rbrace_{\gamma\in\Gamma}$, is of the form
\begin{equation}
    \mathcal{E}_{A, \kappa}(X) = \int_{\Gamma}d\nu (\gamma ) \, \kappa(A,\gamma) X \kappa(A, \gamma )^{\dagger} \; ,
\end{equation}
for any $X\in\alg{B}$.
\end{definition}
The unit-preserving condition on the update map $\mathcal{E}_{A,\kappa}$ amounts to the following normalisation condition on the Kraus functions:
\begin{equation}\label{eq:kappa normalisation condition}
    \int_{\Gamma}d\nu (\gamma ) \, \kappa(\lambda,\gamma) \kappa(\lambda, \gamma )^* = 1 \; ,
\end{equation}
for any $\lambda\in\mathbb{R}$. This ensures that $\mathcal{E}_{A,\kappa}(\mathds{1}) = \mathds{1}$.

The two examples of update maps discussed above are of this Kraus form, as we show now. 
\begin{example}[Unitary kicks]\label{ex:unitary kick kraus}
If $\Gamma = \lbrace \gamma \rbrace$ is some singleton set, and $\nu (\gamma) = 1$, the integral reduces as
\begin{equation}
    \mathcal{E}_{A, \kappa}(X) = \int_{\Gamma}d\nu (\gamma ) \, \kappa(A,\gamma) X \kappa(A, \gamma )^{\dagger} = \kappa(A,\gamma) X \kappa(A, \gamma )^{\dagger} \; .
\end{equation}
Now, taking $\kappa(\lambda , \gamma ) = e^{i\lambda}$, for $\lambda\in\mathbb{R}$, we get the unitary kick with respect to $A$. That is, $\mathcal{E}_{A, \kappa}(X) = \mathcal{U}_A(X) = e^{iA}Xe^{-iA}$.
\end{example}

\begin{example}[Ideal measurements]\label{ex:ideal measurement}
For the case of ideal measurements, we can take $\Gamma = I$, where $I$ is some countable indexing set, and for each $\gamma\in I$ we set $\nu (\gamma ) = 1$. The integral in the Kraus update map definition becomes a sum:
\begin{equation}
    \mathcal{E}_{A, \kappa}(X) = \int_{\Gamma}d\nu (\gamma ) \, \kappa(A,\gamma) X \kappa(A, \gamma )^{\dagger} = \sum_{\gamma\in I} \kappa(A,\gamma) X \kappa(A, \gamma )^{\dagger} \; .
\end{equation}
If we then take $\kappa(\lambda , \gamma ) = 1_{\bor{B}_{\gamma}}(\lambda)$, where $\bor{B}_{\gamma}$ are the Borel sets appearing in some resolution $\bor{R}=\lbrace \bor{B}_{\gamma}\rbrace_{\gamma\in I}$ indexed by $\gamma\in I$, then the Kraus operators in the sum above are self-adjoint, and further can be written as
\begin{equation}
    \kappa(A, \gamma ) = 1_{\bor{B}_{\gamma}}(A) = P^A(\bor{B}_{\gamma}) = E_{\gamma} \; ,
\end{equation}
using~\eqref{eq:pvm_as_indicator_function} for the middle equality. Therefore, this Kraus update map reduces to the ideal measurement update map, i.e. $\mathcal{E}_{A, \kappa}(X) = \mathcal{E}_{A,\bor{R}}(X)$. Note, one can consider `less' ideal measurements too by simply replacing the step functions with smooth approximations, for example.
\end{example}
As an example not yet considered, let us now introduce the update map associated with weak, or Gaussian, measurements (see~\cite{Jubb_2022} for further discussion in the QFT context). Note, we do not assume these weak/Gaussian update maps arise from the tracing out of some auxiliary system coupled to our QFT, as is often the case for weak measurements. We take an operationalist perspective (as we do for all of our update maps) and only assume that some process takes place which gives rise to this precise form of the update map on our system of interest, and we remain agnostic to its origin.
\begin{example}[Weak/Gaussian measurements]\label{ex:weak measurement}
Take $\Gamma = \mathbb{R}$ with $\sigma(\Gamma)$ the usual $\sigma$-algebra on $\mathbb{R}$, set $\nu$ to be the usual Lebesgue measure on $\mathbb{R}$, and set
\begin{equation}
    \kappa(\lambda , \gamma ) = \frac{e^{-\frac{(\lambda - \gamma)^2}{4\sigma^2}}}{(2\pi \sigma^2)^{1/4}} \; ,
\end{equation}
for some $\sigma >0$ defining the width of the Gaussian. The Kraus update map then becomes the weak measurement update map, denoted $\mathcal{W}_{A,\sigma}$. Specifically,
\begin{equation}
    \mathcal{E}_{A, \kappa}(X) = \mathcal{W}_{A, \sigma}(X) = \frac{1}{\sqrt{2\pi \sigma^2}} \int_{\mathbb{R}}d\gamma \, e^{-\frac{(A - \gamma)^2}{4\sigma^2}} X e^{-\frac{(A - \gamma)^2}{4\sigma^2}}  \; .
\end{equation}
\end{example}

\begin{example}[$L^2$-Kraus update maps]\label{ex:L2 kraus}
More generally, instead of a Gaussian function one can consider any normalised complex-valued $k\in L^2(\mathbb{R};\mathbb{C})$, and set $\kappa(\lambda , \gamma) = k(\lambda - \gamma)$. We then have a Kraus update map of the form
\begin{equation}
    \mathcal{E}_{A,k}(X) = \int_{\mathbb{R}}d\gamma \, k(A - \gamma )X k(A- \gamma)^{\dagger} \; ,
\end{equation}
for any $X\in\alg{B}$. Here we have kept $\Gamma = \mathbb{R}$ (with the same $\sigma$-algebra) as we had in the Gaussian measurement example. The fact that $k$ is normalised ensures that $\mathcal{E}_{A,k}(\mathds{1}) = \mathds{1}$.
\end{example}

\vspace{5mm}
An important consequence of the general form of these Kraus update maps is that, if $A$ is localisable in $\spctm{R}$, then any associated Kraus maps, $\mathcal{E}_{A,\kappa}$, will be local to any compact $K\supset\spctm{R}$. This can be seen from the following calculation, assuming $X\in\alg{B}(K^{\perp})$:
\begin{align}
    \mathcal{E}_{A,\kappa}(X) & = \int_{\Gamma}d\nu (\gamma ) \, \kappa(A,\gamma) X \kappa(A, \gamma )^{\dagger}\nonumber
    \\
    & = \left( \int_{\Gamma}d\nu (\gamma ) \, \kappa(A,\gamma)\kappa(A, \gamma )^{\dagger} \right) X \nonumber
    \\
    & = \mathcal{E}_{A,\kappa}(\mathds{1})X
    \\
    & =  X \; ,
\end{align}
as desired for a map to be local to $K$. Line 2 follows from Einstein causality, i.e. $[A,X]=0$ in this case, and the last line follows from the unit-preserving condition on $\mathcal{E}_{A,\kappa}$.

\subsubsection{Multiple Local Update Maps}\label{sec:Multiple local update maps}

In~\cite{Hellwig_Kraus}, Hellwig and Kraus described a general framework for discussing multiple measurements in QFT in a manner consistent with relativistic causality, though they did not consider the key scenario later introduced by Sorkin in~\cite{Sorkin_impossible}. Here we briefly recap some aspects of~\cite{Hellwig_Kraus}.

Consider two disjoint, causally convex, compact subsets $K,K'\subset \spctm{S}$. Without loss of generality we take $K'\subset K^-$, or equivalently $K\subset K'^+$ (e.g. Fig.~\ref{fig:two_update_maps_1}).

\begin{figure}
\begin{subfigure}[b]{0.32\textwidth}
     \centering
     \includegraphics[width=\textwidth]{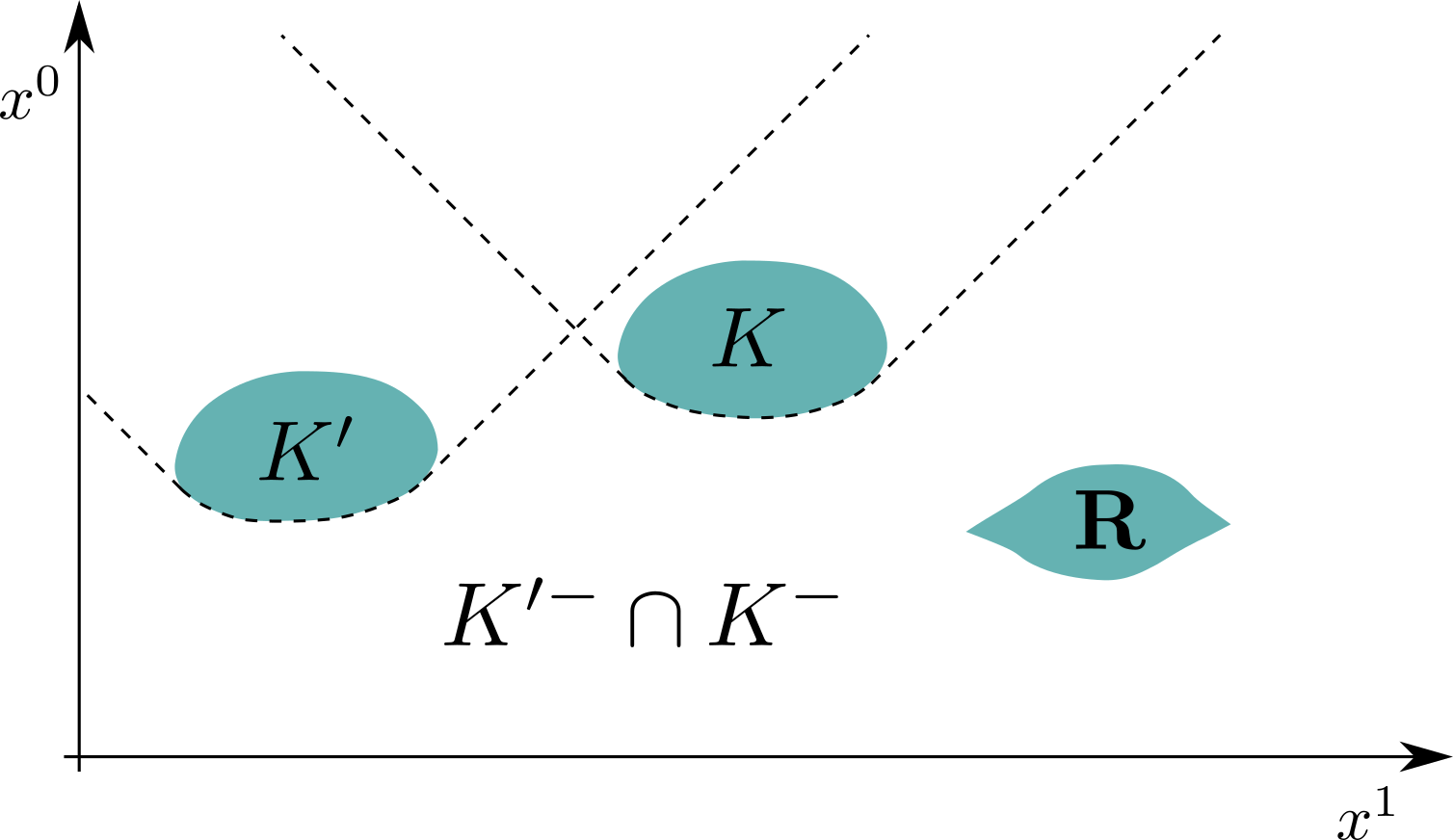}
     \caption{$K'^-\cap K^-$ is under the dashed lines. $K'$ and $K$ are spacelike in this example.}
     \label{fig:two_update_maps_1}
 \end{subfigure}
 \hfill
 \begin{subfigure}[b]{0.32\textwidth}
     \centering
     \includegraphics[width=\textwidth]{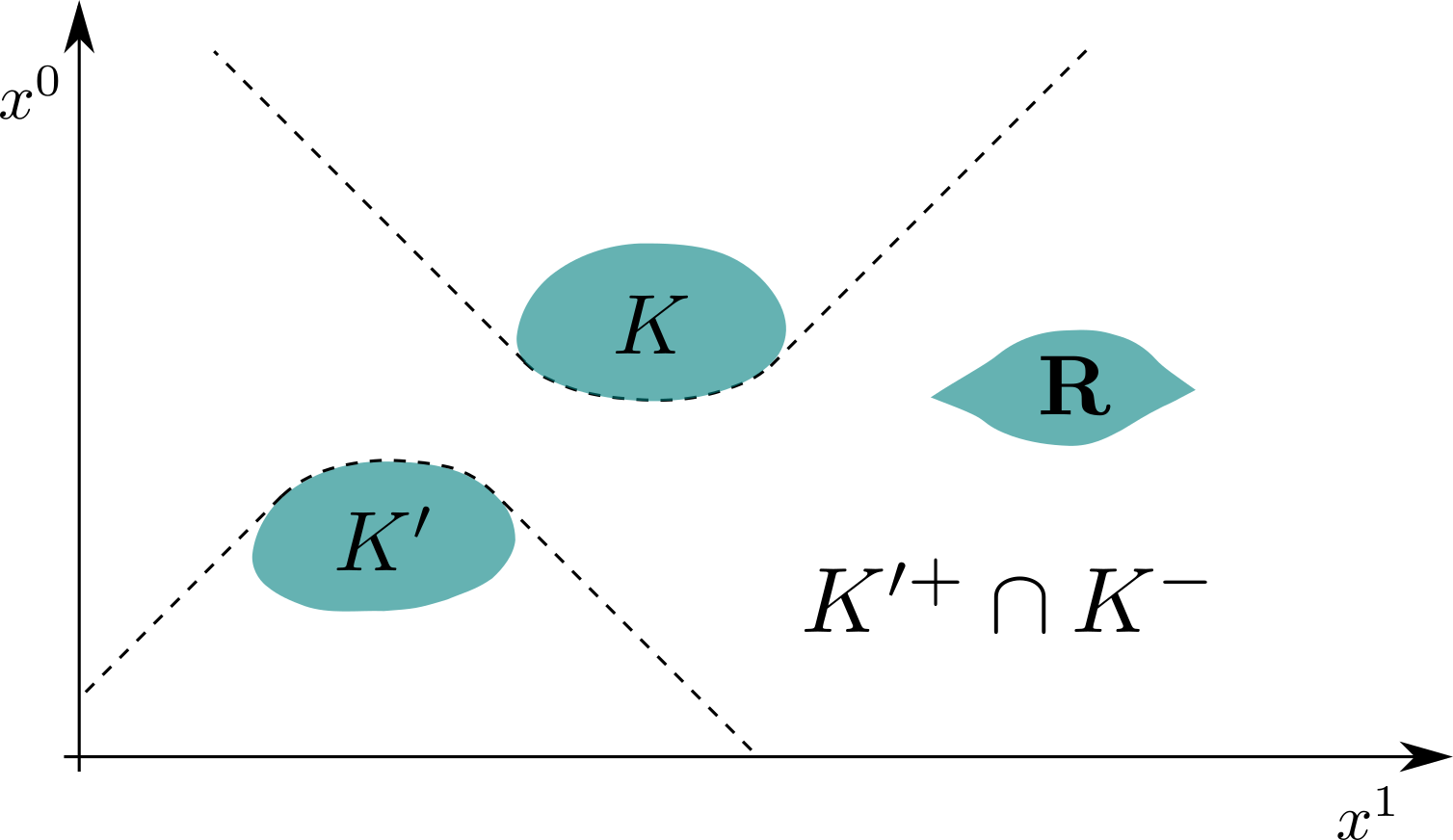}
     \caption{$K'^+\cap K^-$ is between the dashed lines. $K$ is partly to the future of $K'$ in this example.}
     \label{fig:two_update_maps_2}
 \end{subfigure}
  \hfill
 \begin{subfigure}[b]{0.32\textwidth}
     \centering
     \includegraphics[width=\textwidth]{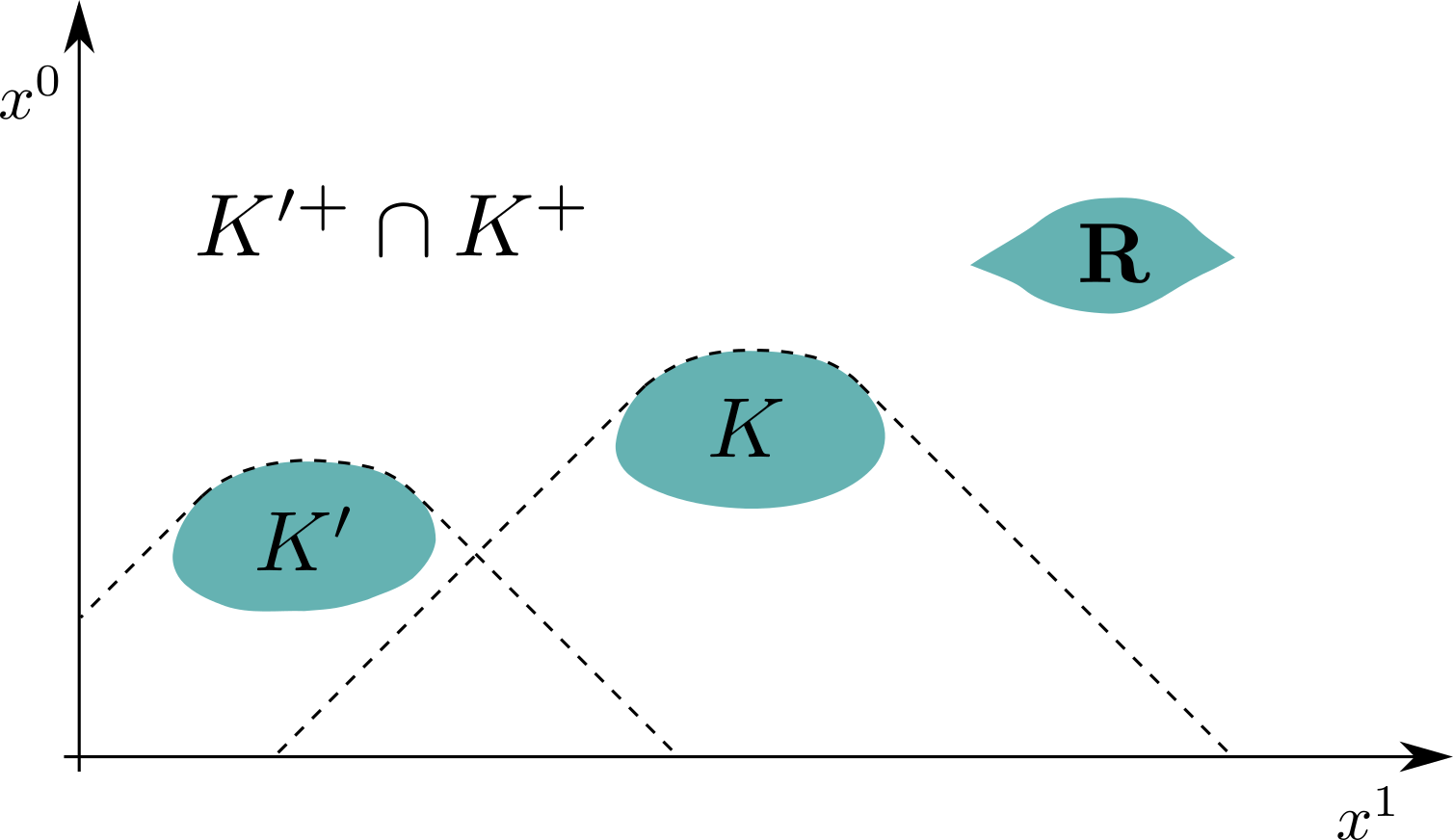}
     \caption{$K'^+\cap K^+$ is above the dashed lines. Note that both $K$ and $\spctm{R}$ are spacelike to $K'$ here, as described in Remark~\ref{remark:Independence from spacelike Kraus update maps}.}
     \label{fig:two_update_maps_3}
 \end{subfigure}
\end{figure}

Now consider two update maps, $\mathcal{E}$ and $\mathcal{E}'$, local to $K$ and $K'$ respectively. These update maps describe operations taking place within their respective spacetime subsets. For a continuum spacetime, $\spctm{M}$, the fact that $K'\subset K^-$ implies there always exists some frame, or more specifically, some time function, $t:\spctm{M}\rightarrow\mathbb{R}$, such that according to this time function the operation in $K'$ is completed before the operation in $K$ begins, i.e. $\text{max}_{x\in K}\lbrace t(x)\rbrace < \text{min}_{x\in K'}\lbrace t(x)\rbrace$. If $K$ is at least partly to the future of $K'$ (e.g. Fig.~\ref{fig:two_update_maps_2}) then this is not true the other way around; there does not exist a time function where the operation in $K$ is seen to be completed first. Conversely, if $K$ and $K'$ are spacelike (e.g. Fig.~\ref{fig:two_update_maps_1} and~\ref{fig:two_update_maps_3}), then one can find time functions in which either operation is seen to be completed first. For a causet, $\spctm{C}$, a similar statement is that there exists some natural labelling of $\spctm{C}$ such that all the points in $K'$ have smaller labels than all the points in $K$.

Consider some region $\spctm{R}$ within which we want to compute the expectation value of some operator $X\in\alg{B}(\spctm{R})$. For simplicity, we assume that $\spctm{R}$ lies somewhere in the in/out-regions for $K$ and $K'$ (anywhere else is off limits). There are a few cases to consider (note, these cases are not necessarily mutually exclusive):

\begin{itemize}
    \item $\spctm{R}\subseteq {K'}^-\cap K^-$ (Fig.~\ref{fig:two_update_maps_1}): Any measurement in $\spctm{R}$ is definitely not to the future of the operations in $K$ and $K'$ (described by $\mathcal{E}$ and $\mathcal{E}'$ respectively), and hence should not be affected by both operations. Thus, we compute the expectation value as $\text{tr}(\rho X)$.
    \item $\spctm{R}\subseteq {K'}^+\cap K^-$ (Fig.~\ref{fig:two_update_maps_2}): Any measurement in $\spctm{R}$ cannot be affected by the operation in $K$, but has the potential to be affected by the operation in $K'$, and hence we compute the expectation value as $\text{tr}(\rho \mathcal{E}'(X))$.
    \item $\spctm{R}\subseteq {K'}^+\cap K^+$ (Fig.~\ref{fig:two_update_maps_3}): Any measurement in $\spctm{R}$ has the potential to be affected by both operations, and hence we compute the expectation value as $\text{tr}(\rho \mathcal{E}(\mathcal{E}'(X)))$. The ordering of this composition --- $\mathcal{E}$ first and then $\mathcal{E}'$ --- may be in the opposite order to what one might expect, since we previously discussed how the operation $\mathcal{E}'$ in $K'$ can be seen to happen before $\mathcal{E}$ in $K$. This reversed order arises as we are in the dual picture where the update maps act on the operators instead of the state. 
\end{itemize}

\begin{remark}[Spacelike regions and the commutativity of Kraus update maps]\label{remark:Spacelike regions and update map commutativity}
In the case where $K$ and $K'$ are spacelike (e.g. Fig.~\ref{fig:two_update_maps_1} and~\ref{fig:two_update_maps_3}) one would like the two update maps to commute as $\mathcal{E}(\mathcal{E}'(\cdot )) = \mathcal{E}'(\mathcal{E}(\cdot ))$, since there is no canonical way to causally order $K$ and $K'$, and hence no canonical order to apply their update maps. Indeed, if the two update maps are of the Kraus form above, i.e. $\mathcal{E}=\mathcal{E}_{A,\kappa}$ and $\mathcal{E}'=\mathcal{E}_{A',\kappa'}$ for operators $A,A'\in\alg{A}$ localisable in regions contained within $K$ and $K'$ respectively, and some Kraus families $\kappa$ and $\kappa'$, then Einstein causality ensures that the two update maps commute, i.e. $\mathcal{E}_{A,\kappa}(\mathcal{E}_{A',\kappa'}(\cdot )) = \mathcal{E}_{A',\kappa'}(\mathcal{E}_{A,\kappa}(\cdot ))$. This further ensures that any statistics are independent of the arbitrary order we give to the two spacelike regions (i.e. which one our time function or natural labelling says happens first).
\end{remark}

\begin{remark}[Independence from spacelike Kraus update maps]\label{remark:Independence from spacelike Kraus update maps}
If, in addition, $\spctm{R}$ is spacelike to either $K$ or $K'$, then Einstein causality ensures that the associated update drops out. For example, if $\spctm{R}\subseteq K'^+\cap K^+$ is spacelike to $K'$ (and assuming $K$ and $K'$ are spacelike) as in Fig.~\ref{fig:two_update_maps_3}, then the expectation value of $X$ simplifies as
\begin{align}\label{eq:kraus map drops out when spacelike to X}
    \text{tr}(\rho \mathcal{E}_{A',\kappa'}(\mathcal{E}_{A,\kappa}(X))) & = \text{tr}(\rho \mathcal{E}_{A,\kappa}(\mathcal{E}_{A',\kappa'}(X))) \nonumber
    \\
    & = \text{tr}(\rho \mathcal{E}_{A,\kappa}(X)) \; ,
\end{align}
where the first line follows from the commutativity of the spacelike update maps, and the second line follows as $\mathcal{E}_{A',\kappa'}$ is local to $K'$, and hence it acts trivially on $X$. Physically this makes sense, as the statistics in $\spctm{R}$ should not depend on any operations taking place in the spacelike subset $K'$.
\end{remark}

One can extend the above discussion to any finite number of mutually disjoint compact subsets $K_1,...,K_N$, each with their respective local update maps $\mathcal{E}_1,...,\mathcal{E}_N$, but in fact, as first noted by Sorkin in~\cite{Sorkin_impossible}, there is more to consider with regards to causality in the case of only two update maps.

\subsection{Sorkin's Scenario and Causality} \label{sec:sorkinscenario}

In~\cite{Sorkin_impossible} Sorkin essentially took the spacetime setup from Remark~\ref{remark:Independence from spacelike Kraus update maps} (Fig.~\ref{fig:two_update_maps_3}), and imagined a slightly modified setup with $\spctm{R}\subseteq K'^+\cap K^+$ still spacelike to $K'$, but with $K$ and $K'$ no longer mutually spacelike (see Fig.~\ref{fig:sorkin_scenario}). Though Sorkin did not describe the situation in terms of Kraus update maps, we will do so in order to follow on from the previous section.
\begin{figure}
 \centering
 \includegraphics[width=0.65\textwidth]{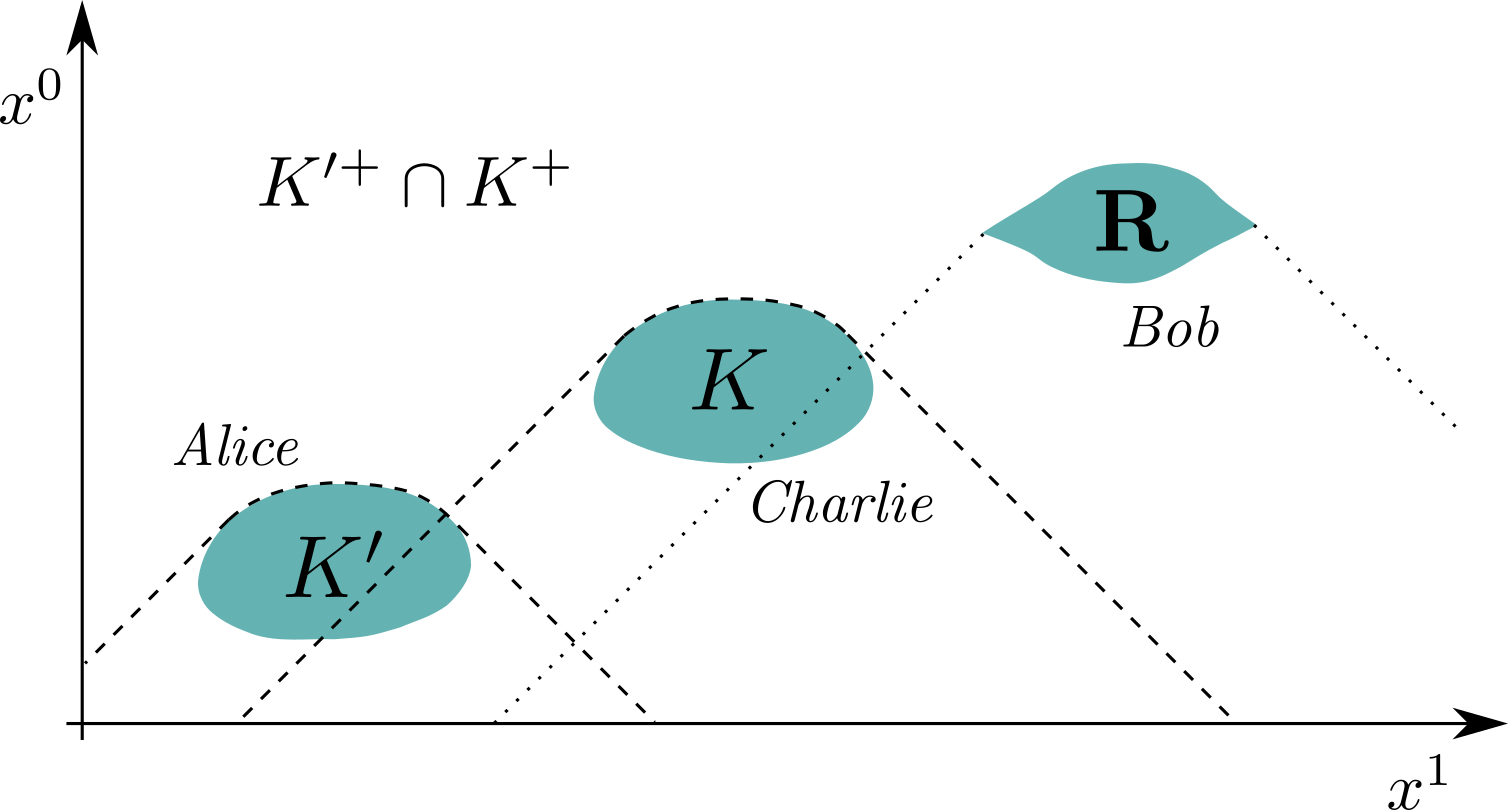}
 \caption{Sorkin's scenario. Bob's region, $\spctm{R}$, lies partly to the future of Charlie's subset, $K$, which lies partly to the future of Alice's subset, $K'$, unlike in Fig.~\ref{fig:two_update_maps_3} where $K$ and $K'$ were spacelike. $K'^+\cap K^+$ lies above the dashed lines, and thus $\spctm{R}\subset K'^+\cap K^+$. As can be seen by the dotted lines from Bob's region, $\spctm{R}$ is spacelike to Alice's subset $K'$.}
 \label{fig:sorkin_scenario}
\end{figure}

Let us also introduce three agents, Alice, Bob and Charlie, to aid the discussion. The agent \textit{Alice} performs the operation described by $\mathcal{E}_{A',\kappa'}$ in $K'$, \textit{Charlie} performs the operation described by $\mathcal{E}_{A,\kappa}$ in $K$, and \textit{Bob} measures the expectation value of $X$ in $\spctm{R}$. We assume that sufficiently many copies of the experiment can be set up in parallel so that the sample mean of Bob's measured values --- one measured value for each copy of the setup --- is as accurate an estimate of the true expectation value of $X$ as one could desire. 

With this setup, Section~\ref{sec:Multiple local update maps} tells us to compute Bob's expectation value as $\text{tr}(\rho \mathcal{E}_{A',\kappa'}(\mathcal{E}_{A,\kappa}(X)))$. On the other hand, as $\spctm{R}$ is spacelike to $K'$, we expect on physical grounds that Bob's expectation value will be independent from Alice's update map $\mathcal{E}_{A',\kappa'}$ in $K'$. That is, Bob's expectation value should simply be $\text{tr}(\rho \mathcal{E}_{A,\kappa}(X))$ (c.f. Remark~\ref{remark:Independence from spacelike Kraus update maps}). This latter value is also what one would compute if Alice had not done anything. Now, it may be the case that both values are equal, i.e.
\begin{equation}\label{eq:desired causality condition for kraus maps}
    \text{tr}(\rho \mathcal{E}_{A',\kappa'}(\mathcal{E}_{A,\kappa}(X))) = \text{tr}(\rho \mathcal{E}_{A,\kappa}(X)) \; ,
\end{equation}
but this equality may not be satisfied in general. Alarmingly, if it is not satisfied then Alice and Bob can exploit the difference of these two values to send a \textit{pathological} faster-than-light signal between them (specifically from Alice in $K'$ to Bob in $\spctm{R}$) as follows. If Bob measures the value $\text{tr}(\rho \mathcal{E}_{A,\kappa}(X))$ then it means that Alice has done nothing, and if Bob measures the value $\text{tr}(\rho \mathcal{E}_{A',\kappa'}(\mathcal{E}_{A,\kappa}(X)))$, then it means Alice has performed her operation. Thus, given that these two values are not the same, Bob can tell if Alice has or has not done her operation. The two agents can decide beforehand that Alice performing her operation should be interpreted as a binary $\texttt{1}$, and Alice not doing anything should be interpreted as a binary $\texttt{0}$. Therefore, by either doing her operation or not, and Bob measuring his expectation value, Alice can send a single bit --- a signal --- to Bob. Since $\spctm{R}$ is spacelike to $K'$, this signal must be faster than light. This is discussed in more detail in~\cite{Sorkin_impossible,Beckman_2001,Benincasa_2014,Borsten_2021,Jubb_2022}.

Of course, such causality violating superluminal signals are not possible, and hence it \textit{must} be the case that~\eqref{eq:desired causality condition for kraus maps} is satisfied (c.f.~\eqref{eq:kraus map drops out when spacelike to X} in Remark~\eqref{remark:Independence from spacelike Kraus update maps}) for any physical operations performed by the agents. That is, for any physical operations that Alice and Charlie can do, the associated descriptions in terms of the update maps $\mathcal{E}_{A',\kappa'}$ and $\mathcal{E}_{A,\kappa}$ respectively, must satisfy~\eqref{eq:desired causality condition for kraus maps} for any operator $X$ that Bob wishes to measure.

Note, if Charlie's does nothing (i.e. we remove the map $\mathcal{E}_{A,\kappa}$ from~\eqref{eq:desired causality condition for kraus maps}), then~\eqref{eq:desired causality condition for kraus maps} is trivially satisfied due to the locality of Alice's map $\mathcal{E}_{A',\kappa'}$ (in this case both sides reduce to $\text{tr}(\rho X)$). Therefore, this potential issue of a superluminal signal (between Alice in $K'$ and Bob in $\spctm{R}$) has only arisen because of the presence of Charlie's map $\mathcal{E}_{A,\kappa}$ local to $K$. \textit{We thus think of~\eqref{eq:desired causality condition for kraus maps} as a condition that Charlie's map $\mathcal{E}_{A,\kappa}$ has to satisfy to be physically realisable; it is the requirement that $\mathcal{E}_{A,\kappa}$ does not enable a superluminal signal between the two spacelike subsets $K'$ and $\spctm{R}$.} 

Surprisingly, while Einstein causality is enough to ensure that~\eqref{eq:kraus map drops out when spacelike to X} holds, it is not enough to ensure that~\eqref{eq:desired causality condition for kraus maps} holds. Specifically, Einstein causality is enough to ensure that $\mathcal{E}_{A,\kappa}$ does not enable a superluminal signal between $K'$ and $\spctm{R}$ when $K'$ and $K$ are spacelike (Fig.~\ref{fig:two_update_maps_3}), but it is not enough if $K'$ and $K$ are at least partly timelike related (Fig.~\ref{fig:sorkin_scenario}). In fact, in~\cite{Jubb_2022} there are several simple examples given where~\eqref{eq:desired causality condition for kraus maps} is violated, as well as many examples of where~\eqref{eq:desired causality condition for kraus maps} is satisfied. For clarity, we now review some of these examples.

\begin{example}[\textit{No violation} - unitary kick with a smeared field]\label{ex:unitary kick causal}
Consider some test function $f$ supported in $K$, and let Charlie do the associated unitary kick with the smeared field $\phi(f)$, i.e. $\mathcal{E}_{A,\kappa} = \mathcal{U}_{\phi(f)}$, as in Example~\ref{ex:unitary kick kraus}. As was shown in~\cite{Jubb_2022}, this update map has the useful property that if $X\in\alg{B}(\spctm{R})$, then $\mathcal{U}_{\phi(f)}(X)\in\alg{B}(\spctm{R})$ also. Thus, given any map $\mathcal{E}_{A',\kappa'}$ local to $K'$ (or more generally any map $\mathcal{E}'$ local to $K'$, even if it is not of Kraus form), then $\mathcal{E}_{A',\kappa'}$ will act trivially on $\mathcal{U}_{\phi(f)}(X)$. This follows from the locality of $\mathcal{E}_{A',\kappa'}$, and the fact that $\mathcal{U}_{\phi(f)}(X)$ is localisable in $\spctm{R}\subset K'^{\perp}$. Thus, $\mathcal{E}_{A',\kappa'}( \mathcal{U}_{\phi(f)}(X) ) = \mathcal{U}_{\phi(f)}(X)$, and hence~\eqref{eq:desired causality condition for kraus maps} is satisfied for any state $\rho$. That is, a unitary kick with a smeared field can never enable a superluminal signal \textit{a la} Sorkin, and is thus physically realisable (at least in principle).
\end{example}

\begin{example}[\textit{No violation} - $L^2$-Kraus update map with a smeared field]\label{ex:weak measurement causal}
Consider any normalised $L^2$ function $k:\mathbb{R}\rightarrow\mathbb{C}$, and let Charlie do the associated $L^2$-Kraus map for a smeared field $\phi(f)$, i.e. $\mathcal{E}_{A,\kappa} = \mathcal{E}_{\phi(f),k}$, as in Example~\ref{ex:L2 kraus}. It was also shown in~\cite{Jubb_2022} that such update maps have the same useful property as smeared field unitary kicks --- they do not change the localisation regions of operators. Following the same logic as in Example~\ref{ex:unitary kick causal}, we see that $\mathcal{E}_{\phi(f),k}$ similarly never enables a superluminal signal. Note that this includes the case of weak measurements (Example~\ref{ex:weak measurement}) of smeared fields if we set $k$ to be a Gaussian function.
\end{example}

\begin{example}[\textit{Violation} - unitary kicks/$L^2$-Kraus update maps with $\phi(f)^2$]\label{ex:acausal}
In both of the previous examples, if instead of $\phi(f)$ we take the square of the smeared field, $\phi(f)^2$, then Charlie's map is of the form $\mathcal{U}_{\phi(f)^2}$ and $\mathcal{E}_{\phi(f)^2,k}$ respectively. Then,~\eqref{eq:desired causality condition for kraus maps} can be violated for certain simple choices of the state $\rho$, Bob's operator $X\in\alg{B}(\spctm{R})$, the function $k$ in $\mathcal{E}_{\phi(f)^2,k}$, and Alice's update map $\mathcal{E}_{A',\kappa'}$ in $K'$. The violation of~\eqref{eq:desired causality condition for kraus maps} is particularly easy to see for the unitary kick case $\mathcal{U}_{\phi(f)^2}$, and so we will just show that result. Take the ground state $\rho=\gs \gsb$, take $X = e^{i\phi(g)}$ for some test function $g$ supported in $\spctm{R}$, and finally take $\mathcal{E}_{A',\kappa'} = \mathcal{U}_{\phi(h)}$ for some test function $h$ supported in $K'$. In this case the rhs~\eqref{eq:desired causality condition for kraus maps} evaluates to
\begin{align}\label{eq:rhs violation}
    \text{tr}(\rho \mathcal{E}_{A,\kappa}(X)) & = \gsb \mathcal{U}_{\phi(f)^2}(e^{i\phi(g)}) \gs \nonumber
    \\
    & = \gsb e^{-i\Delta(f,g)^2}e^{-i2\Delta(f,g)f}e^{i\phi(g)} \gs \nonumber
    \\
    & = \gsb e^{i\phi(\tilde{g})} \gs \nonumber
    \\
    & = e^{-\frac{W(\tilde{g},\tilde{g})}{2}} \; ,
\end{align}
where line 2 follows from Equation (17) in~\cite{Jubb_2022}, and in line 3 we have defined $\tilde{g} = g-2\Delta(f,g)f$, and in the last line we have used~\eqref{eq:exp_phi_f_gs}. On the other hand, using the fact that $h$ and $g$ have spacelike supports, one can show that the lhs of~\eqref{eq:desired causality condition for kraus maps} evaluates to
\begin{align}\label{eq:lhs violation}
    \text{tr}(\rho \mathcal{E}_{A',\kappa'}(\mathcal{E}_{A,\kappa}(X))) & = \gsb \mathcal{U}_{\phi(h)}(\mathcal{U}_{\phi(f)^2}(e^{i\phi(g)})) \gs \nonumber
    \\
    & = e^{-i2\Delta(f,g)\Delta(f,h)}e^{-\frac{W(\tilde{g},\tilde{g})}{2}} \; .
\end{align}
Since~\eqref{eq:rhs violation} and~\eqref{eq:lhs violation} are not equal in general, we have a violation of~\eqref{eq:desired causality condition for kraus maps}. Physically, the expectation value of $e^{i\phi(g)}$ that Bob measures in $\spctm{R}$ depends on whether Alice does a unitary kick with $\phi(h)$ in the $K'$ or not. Such an acausal dependency of Bob's expectation value on Alice's actions in a subset of spacetime that is spacelike to Bob cannot be possible, and hence we conclude that something in this example must be physically impossible.

Now, Alice's kick seems physically reasonable, as it is a local operation that cannot enable any superluminal signals on its own (see Example~\ref{ex:unitary kick causal}). Further, the assumption that Bob can measure their expectation value is similarly reasonable. In fact, such an expectation value of $e^{i\phi(g)}$ can even be recovered from weak measurements of the smeared field $\phi(g)$ as we discuss in Section~\ref{sec:Assumptions} (also see~\cite{Jubb_2022}), and such measurements were shown to be causal in Example~\ref{ex:weak measurement causal}. Thus, the most contentious aspect of this causality violating example is Charlie's operation described by the map $\mathcal{U}_{\phi(f)^2}$, and hence we conclude that it must be impossible to realise this map via any physical process.
\end{example}

\vspace{3mm}
In summary, when discussing update maps in QFT, the Einstein causality condition --- that spacelike operators commute --- is not enough to ensure consistency with relativistic causality. Update maps arising from any physical processes must further satisfy the requirement that they do not enable superluminal signals in Sorkin's scenario. 

In~\cite{Jubb_2022} this requirement is turned into a condition only on Charlie's map $\mathcal{E}_{A,\kappa}$ by requiring~\eqref{eq:desired causality condition for kraus maps} to be satisfied for every $K'\subset K^-$, every region $\spctm{R}\subseteq K^+\cap K'^{\perp}$, for every map $\mathcal{E}'$ local to $K'$, every $X\in\alg{B}(\spctm{R})$, and every state $\rho$. This condition is then shown to be equivalent to the more intuitive \textit{past-support non-increasing (PSNI)} property, which for the sake of brevity we will not discuss here.

One can ask, however, whether it physically makes sense to require~\eqref{eq:desired causality condition for kraus maps} to be satisfied for every state $\rho$. Should we, for example, consider states of the form $\rho = \ket{\psi}\bra{\psi}$ with $\ket{\psi} = e^{-i\phi(f)^2}\gs$, for some smeared field $\phi(f)$? Example~\ref{ex:acausal} highlights that unitarily kicking with an operator of the form $e^{-i\phi(f)^2}$ can lead to causality violations, and hence one might worry whether a state of the form $e^{-i\phi(f)^2}\gs$ can even be physically created from the ground state. In Section~\ref{sec:Causality of smeared field operations} we take such worries into account and restrict ourselves to the ground state only.

\section{Causality of Smeared Field Operations}\label{sec:Causality of smeared field operations}

The PSNI condition in~\cite{Jubb_2022} is a fairly general condition that ensures an update map cannot enable a superluminal signal. However, as previously discussed, the derivation of this PSNI condition assumes that~\eqref{eq:desired causality condition for kraus maps} is satisfied for all states $\rho$, and this assumption may be too strong given that some states may not be physically sensible.

Furthermore, for certain update maps, the PSNI condition is not immediately useful for determining their causal nature. One of the first update maps that comes to mind is that of an ideal measurement of some self-adjoint operator, and one of the simplest, if not \textit{the} simplest, operators one can consider is a smeared field. Even in this simple case it is not obvious whether the update map $\mathcal{E}_{\phi(f),\bor{R}}$ (for some resolution $\bor{R}$) is PSNI, and hence whether it enables a superluminal signal.

Preliminary calculations in~\cite{Jubb_2022} hint that the update map $\mathcal{E}_{\phi(f),\bor{R}}$ does in fact enable a superluminal signal, despite previous heuristic arguments to the contrary~\cite{Sorkin_impossible,Benincasa_2014,Borsten_2021}. In this Section we show conclusively that a superluminal signal generically arises in the specific case of the ground state $\rho=\gs \gsb$, and for \textit{any} choice of resolution $\bor{R}$.

There is an important loophole to our calculation, namely, whether a Sorkin scenario exists for the given smeared field $\phi(f)$ and the compact subset $K$ in which we measure it. In particular, a Sorkin scenario does not exist if $K$ is a \textit{transitive} subset of the spacetime $\spctm{S}$ --- a property we define in Section~\ref{sec:Transitive loophole and discrete spacetimes}. The question of whether a given $K$ is transitive seems to depend on the spatial topology of the spacetime, and could depend on the spacetime geometry more generally. This question also has important consequences for discrete spacetimes, e.g. causal sets, as we discuss in Section~\ref{sec:Transitive loophole and discrete spacetimes}.

Even with a non-transitive $K$, we still need to make some (fairly pedestrian) assumptions in our calculation, detailed in Section~\ref{sec:Assumptions}. These include an assumption about the form of the classical mode $\Delta f$ generated by the test function in $\phi(f)$, which seems generically true (we have not managed to concoct an example where it fails, and can concoct infinitely many examples where it is true), and at least in the simple case of a (massless and massive) real scalar field in $1+1$-Minkowski spacetime one can show that it is always true (see Appendix~\ref{app:Sorkin scenario results}).

Provided a Sorkin scenario is possible, we find the aforementioned superluminal signal by first deriving a general \textit{no-signalling condition} in Section~\ref{sec:General causality condition for a smeared field operations} for update maps constructed from a single smeared field $\phi(f)$, i.e. Kraus update maps of the form $\mathcal{E}_{\phi(f),\kappa}$, for some Kraus family $\kappa$. This includes the case of ideal measurements, as well as many more examples of update maps.

Our no-signalling condition is phrased in terms of a property of the Kraus family $\kappa$ associated to the map $\mathcal{E}_{\phi(f),\kappa}$. Not all Kraus families $\kappa$ have this property, and in particular, we show the update map for an ideal measurement of $\phi(f)$ (for any resolution $\bor{R}$) does not have this property in Section~\ref{sec:The acausality of an ideal measurement of a smeared field}. Since not all Kraus maps $\mathcal{E}_{\phi(f),\kappa}$ have this property, not all Kraus update maps are physically realisable in experiments; only those for which $\kappa$ satisfies this property have the potential to be realised in an experiment. This is in contrast to the situation in Quantum Information, where any update map, i.e. any quantum channel, is in principle physically realisable.

The fact that this no-signalling condition is expressed in terms of $\kappa$ means that, while it is not as general as the PSNI condition in~\cite{Jubb_2022} (as it only pertains to update maps of the Kraus form $\mathcal{E}_{\phi(f),\kappa}$), it is more readily useful than the PSNI condition for such update maps. Given a $\kappa$ one can immediately compute whether the given update map enables a superluminal signal or not, though it can still be challenging to determine whether all the families in some class enable a superluminal signal or not (see the ideal measurement case in Section~\ref{sec:The acausality of an ideal measurement of a smeared field}). This no-signalling condition is also, in principle, stronger than the PSNI condition for the case of Kraus update maps where it applies, as it only assumes the ground state $\rho=\gs \gsb$, instead of any state.

\subsection{Transitive Loophole and Discrete Spacetimes}\label{sec:Transitive loophole and discrete spacetimes}

\begin{definition}[Transitive]\label{def:transitive}
A subset $T\subseteq \spctm{S}$ is \textit{transitive} if, for all $x,y\in\spctm{S}\setminus T$,
\begin{equation}
    x\preceq z\preceq y \; \text{ for some }z\in T  \;\;\; \Longrightarrow \;\;\; x\preceq y  \; .
\end{equation}
\end{definition}

\noindent This property is important for our purposes as, if the compact subset where Charlie acts, $K\subset \spctm{S}$, is transitive, then no Sorkin scenario exists. That is, for any subset in which Alice acts, $K'\subset K^-$, and any region in which Bob measures his expectation value, $\spctm{R}\subseteq K^+$, if we try and concoct a Sorkin scenario by taking $K'$ to the past of $K$, and $K$ to the past of $\spctm{R}$, then by the transitive nature of $K$ we know that $K'$ is to the past of $\spctm{R}$, i.e. Alice and Bob are not spacelike. Thus, it is not possible to construct the Sorkin scenario illustrated in Fig.~\ref{fig:sorkin_scenario}. Let us consider some examples for clarity.

\begin{example}[Cylinder spacetime]\label{ex:Cylinder spacetime}
Consider the continuum spacetime $\spctm{S}=\spctm{M}=\mathbb{R}\times S^1$, with metric $\metric = -dt^2 +d\theta^2$, where $\theta\in [0,2\pi)$. Say Charlie blocks off the compact subset $K = [0,2\pi]\times [0,2\pi)$ for their measurement (Fig.~\ref{fig:cylinder_spacetime}). Now, any point $x$ strictly to the past of $K$ ($t<0$) is to the past of every point $y$ which is strictly to the future of $K$ ($t>2\pi$). Thus, $K$ is transitive.
\end{example}

\begin{figure}
     \centering
 \begin{subfigure}[b]{0.375\textwidth}
 \includegraphics[width=\textwidth]{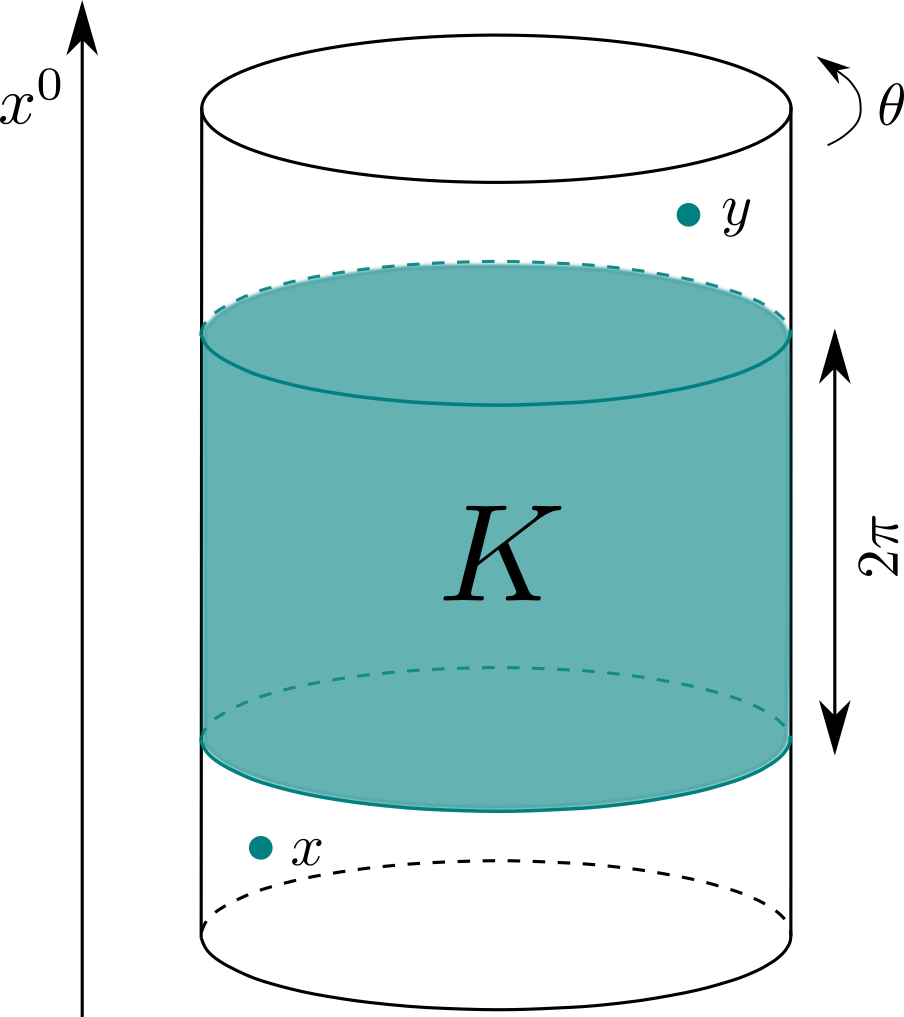}
 \caption{$K$ is transitive as it is long enough in time so that any point $y$ to its future is also to the future of any point $x$ to its past.}
 \label{fig:cylinder_spacetime}
 \end{subfigure}
  \hspace{15mm}
 \begin{subfigure}[b]{0.375\textwidth}
     \centering
 \includegraphics[width=\textwidth]{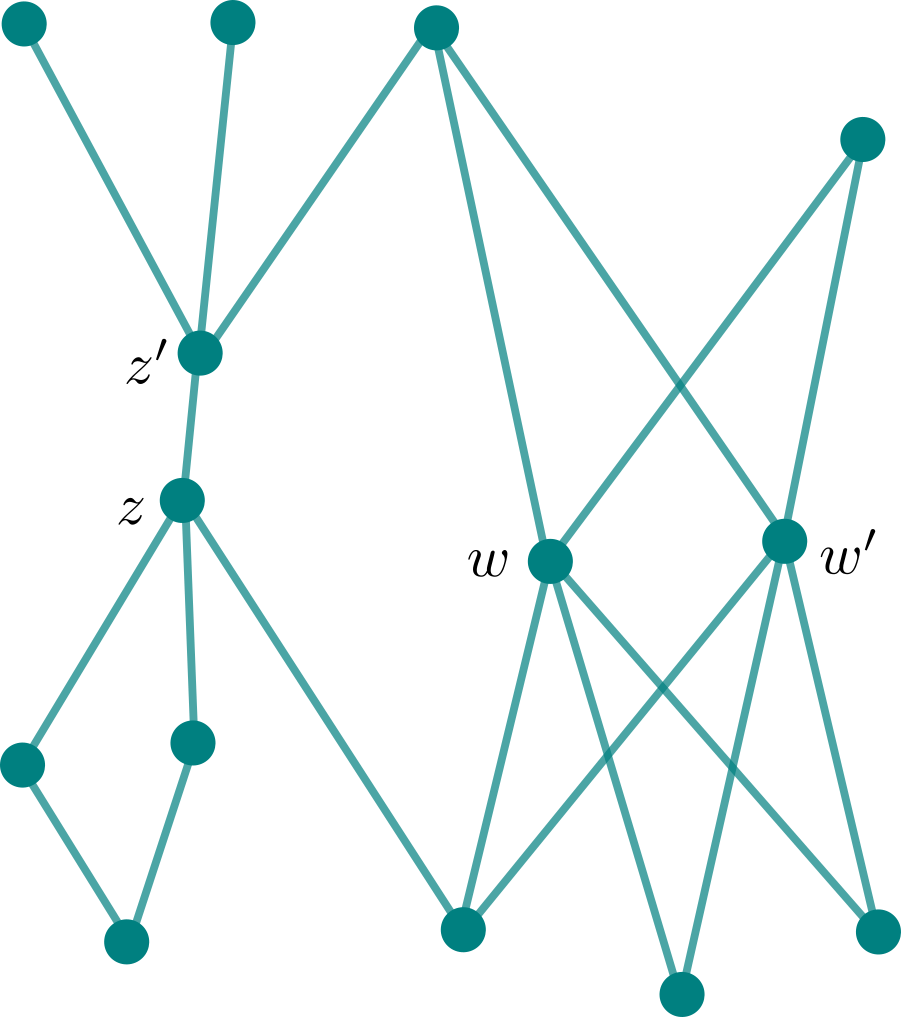}
 \caption{The subset $\lbrace z,z'\rbrace$ is transitive as any point in its future is to the future of any point in its past. Similarly, $\lbrace w,w'\rbrace$ is transitive.}
 \label{fig:transitive_causet}
 \end{subfigure}
\end{figure}

\begin{example}[Single causal set point]\label{ex:Single causal set element}
For any causal set $\spctm{S}=\spctm{C}$, if we take $K=\lbrace z \rbrace$ for some single causet point $z\in\spctm{C}$, then clearly for any points $x,y\neq z$ such that $x\preceq z$ and $z\preceq y$, we have $x\preceq y$. Thus, a single point is transitive. 
\end{example}

\begin{example}[Non-singleton subset of a causal set]\label{ex:Non-singleton subset of a causal set}
For a given causal set $\spctm{S}=\spctm{C}$ it may be possible to find subsets with more than a single point that are transitive. For example, consider Fig.~\ref{fig:transitive_causet}. The subset $\lbrace z,z'\rbrace$ is transitive, as $z\preceq z'$, $J^-(z')\setminus\lbrace z'\rbrace = J^-(z)$, and $J^+(z)\setminus\lbrace z\rbrace = J^+(z')$. The subset $\lbrace w ,w' \rbrace$ is also transitive as $w$ and $w'$ are spacelike and $J^{\pm}(w) = J^{\pm}(w')$. 
\end{example}

In all of the above examples, the fact that no Sorkin scenario exists means there is no restriction (coming from relativistic causality) on whether a given Kraus update map $\mathcal{E}_{A,\kappa}$ (for any self-adjoint $A$ localisable in $K$) can be implemented in a physical experiment in $K$. So, for example, it is in principle possible (with regards to causality) to do something to the system inside $K$ such that in $K^+$ the statistics of the system are changed in the way described by the ideal measurement update map $\mathcal{E}_{\phi(f),\bor{R}}$ for a smeared field $\phi(f)$, where $\supp f\subset K$.

In Example~\ref{ex:Cylinder spacetime} it makes physical sense why there is no issue with causality, as $K$ is large enough (in time) to allow light to travel around the whole circle, and thus any causality considerations are washed out. Example~\ref{ex:Cylinder spacetime} also highlights how this transitive property depends on the spatial topology, as, for example, any compact $K$ in Minkowski spacetime (with its non-compact spatial sections) would not be transitive. The geometry of the spacetime may also be important, as it could be the case that there exists a transitive $K$ in some spacetime with non-compact spatial sections, but with non-trivial geometry (we have not found any examples but have no proof that rules this out).

Example~\ref{ex:Single causal set element} highlights an important way in which discrete spacetimes, such as causal sets, can evade questions of causality \textit{a la} Sorkin. Specifically, for discrete spacetimes one can consider singleton sets $K$, and associated functions $f$ that have support only on that single point in the discrete spacetime. This is not possible in the continuum, as if $f(z)\neq 0$ at some given point $z\in\spctm{M}$, then the smoothness of $f$ implies it is non-zero in a neighbourhood of $z$.

\textit{Discrete spacetimes thus offer a particularly natural workaround to the issues raised by Sorkin in~\cite{Sorkin_impossible}. Rather than using the no-signalling condition in Section~\ref{sec:General causality condition for a smeared field operations}, or the PSNI condition in~\cite{Jubb_2022}, to carve out the space of physically viable update maps in QFT, one could instead impose that the space of viable update maps consists \textbf{only} of those that are constructed from operators local to a single discrete spacetime point. In fact, this would amount to all Kraus maps of the form $\mathcal{E}_{\phi(f),\kappa}$, for all Kraus families $\kappa$, and all $\phi(f)$ with $f$ supported on a single spacetime point.}

In such a framework, all physical operations would be composed of operations local to a single spacetime point. While this seems natural, it is worth pointing out that there of course causality respecting operations that are not necessarily local to a single spacetime point (e.g. Examples~\ref{ex:unitary kick causal} and~\ref{ex:weak measurement causal}) which this framework, therefore, cannot capture.

Finally, we note that Example~\ref{ex:Non-singleton subset of a causal set} is somewhat contrived, as the probability of a causal set sprinkling into Minkowski spacetime giving rise to a transitive set like $\lbrace z ,z'\rbrace$ is zero.

\subsection{Assumptions}\label{sec:Assumptions}

While the assumptions we make in this derivation are standard in any conventional treatment of QFT, one may be a little hesitant to assume anything at this stage, given the aforementioned issues with causality. Thus, it is important to be clear about the precise assumptions going into this calculation, and into the derivation of this no-signalling condition.

We consider some smeared field operation described by a Kraus update map, $\mathcal{E}_{\phi(f),\kappa}$, for some Kraus family $\kappa$, and some smeared field $\phi(f)$, where $f$ is supported in a compact subset $K\subset \spctm{S}$. We then make the following

\vspace{2mm}
\noindent \textbf{Assumptions}:
\vspace{-1mm}
\begin{enumerate}
    \item \textit{Sorkin's scenario exists}: $K$ is not transitive, and there exist test functions $h\in C^{\infty}_0 (K^-)$ and $g\in C^{\infty}_0 (K^+)$ such that $\Delta(f,h)\neq 0$ and $\Delta(f,g)\neq 0$, and such that $\supp h$ is spacelike to $\supp g$.
    \item \textit{Local Ground state}: We can start in the ground state $\rho = \gs \gsb$, or at least in a state $\rho'$ for which any measurable statistics in some given region containing our setup are indistinguishable from those of the ground state, i.e. $\text{tr}(\rho' X) = \gsb X \gs$ for any $X$ localisable in some given region $\tilde{\spctm{R}}\supset K\cup \supp h \cup\supp g$.
    \item \textit{Alice}: An independent agent Alice can unitarily kick with the smeared field $\phi(h)$.
    \item \textit{Bob}: An independent agent Bob can measure (in some causally consistent way) the smeared field, $\phi(g)$, and from their measurements recover the expectation value of $e^{i t\phi(g)}$ for some $t\in\mathbb{R}$.
\end{enumerate}

Given some non-transitive $K$, it seems generically true that for any test function $f$ supported in $K$ the rest of the conditions in (i) are true. In Appendix~\ref{app:Sorkin scenario results} we show that this is the case iff there exist two mutually spacelike points $x^{\pm}\in K^{\pm}$ which are also in the support of the classical solution $\Delta f$ generated by the test function $f$ (Claim~\ref{claim:Sorkin scenario equivalence 1}). One can concoct infinitely many examples of test functions $f$ (for a given non-transitive $K$) for which such a pair of points exist, and it seems impossible to concoct an example where no such pair exists, though we do not have a proof that this is impossible in general. In Appendix~\ref{app:Sorkin scenario results} we show that in the simple case of a (\textit{massless} in Claim~\ref{claim:massless KG field sorkin scenario always} and \textit{massive} in Claim~\ref{claim:massive KG field sorkin scenario always}) real scalar field in $1+1$ Minkowski spacetime, given any test function $f$, and any compact $K\supseteq \supp f$, it is always possible to find two such points.

In addition, in Claim~\ref{claim:sorkin lab K exists for any mode}, Appendix~\ref{app:Sorkin scenario results} we show that for any test function $f$ in any globally hyperbolic spacetime $\spctm{M}$, one can always find some equivalent test function $\tilde{f}$, i.e. it generates the same classical solution as $\Delta f = \Delta \tilde{f}$, and some $K = \supseteq \tilde{f}$ that is \textit{short enough in time} to permit a Sorkin scenario. This means that, for any smeared field operator, $\phi(f)$, there exists some compact subset $K$ in which there exists an equivalent $\tilde{f}$ (so that $\phi(f) = \phi(\tilde{f})$), and such that a Sorkin scenario exists. Since the existence of a Sorkin scenario means one has to worry about the corresponding update map $\mathcal{E}_{\phi(f),\bor{R}}$ violating causality, it seems reasonable to physically summarise this as follows: \textit{for any classical solution, $\varphi = \Delta f$, i.e. some spatially compact smooth wavepacket, there always exists a subset of spacetime $K$ which is large enough in space to encompass the wavepacket, but short enough in time to warrant concerns about whether an ideal measurement of the associated quantum operator, $\phi(f)$, and the data collection involved in such a measurement, can be performed in a causally consistent manner in $K$}. 

The local ground state assumption (ii) means that we can take $\rho=\gs \gsb$ in any calculations of expectation values of operators in our setup. Regarding Alice's assumption (iii), a unitary kick with a smeared field, i.e. the update map $\mathcal{U}_{\phi(f)}$, seems reasonable, as it amounts to a local operation which, by Example~\ref{ex:unitary kick causal}, cannot enable any superluminal signals on its own. This operation can also be seen as introducing a local and compactly supported linear coupling of the scalar field to a classical source term in the Lagrangian; something which is quite pedestrian in QFT.

Regarding Bob's assumption (iv), it is important to note that we are \textit{not} assuming it is possible to make an ideal measurement of $\phi(g)$, as this is one of the things we will be testing for $\phi(f)$ in what follows. Instead, we are only assuming that it is possible to measure $\phi(g)$ in \textit{some} way that is consistent with causality. This seems to be a reasonable assumption. Conversely, if one were not able to measure, in any way, the most basic observable of the theory, it is hard to imagine what the theory could be used for. 

To further justify (iv), we note that the weak measurement update map in Example~\ref{ex:weak measurement causal} (and in~\cite{Jubb_2022}) provides an example of a causally consistent update map for a measurement of a smeared field. In fact, Example~\ref{ex:weak measurement causal} tells us that there are infinitely many possibilities of such causally consistent candidates for update maps describing measurements; one for every normalisable $L^2$-Kraus update map. For these $L^2$-Kraus update maps one can further show that it is possible to recover the expectation value of $e^{it\phi(g)}$ for some $t\in\mathbb{R}$, by taking particular functions of the outcomes of a sample of measurements (Appendix~\ref{app:recovering tr rho exp from L2 kraus}).

\subsection{General Causality Condition for Smeared Field Operations}\label{sec:General causality condition for a smeared field operations}

Consider the setup and assumptions in Section~\ref{sec:Assumptions}. We want to test whether the Kraus update map $\mathcal{E}_{\phi(f),\kappa}$ enables a superluminal signal from Alice to Bob (Fig.~\ref{fig:sorkin_scenario}). 

We start in the ground state $\rho = \gs\gsb$. Alice unitarily kicks with $\phi(h)$ in $K'\subset K^-$, which is described by the map $\mathcal{U}_{s\phi(h)}$ (for $s\in\mathbb{R}$). Charlie implements the Kraus operation $\mathcal{E}_{\phi(f),\kappa}$ in $K\supseteq \supp f$, and Bob uses their measurement of $\phi(g)$ in the region $\spctm{R}\subset K^+\cap K'^{\perp}$ to recover the expectation value of $e^{it\phi(g)}$ (for $t\in\mathbb{R}$). We let
\begin{align}\label{eq:chi s}
    \chi (s) & = \gsb \mathcal{U}_{s\phi(h)}( \mathcal{E}_{\phi(f),\kappa}(e^{it\phi(g)}) )\gs \nonumber
    \\
    & =  \gsb e^{is\phi(h)} \mathcal{E}_{\phi(f),\kappa}(e^{it\phi(g)}) e^{-is\phi(h)}\gs \; ,
\end{align}
denote this value for Bob as a function of $s$, which can be thought of as the strength of Alice's kick. Recall that this order of composition --- $\mathcal{E}_{\phi(f),\kappa}$ first then $\mathcal{U}_{s\phi(h)}$ --- follows as $K'\subset K^-$, and since we are in the dual picture where the maps act on the operators instead of the state. From our assumptions in Section~\ref{sec:Assumptions} we have that $[\phi(f) , \phi(g)] = i\Delta(f,g)\neq 0$ and $[\phi(f) , \phi(h)] = i\Delta(f,h)\neq 0$, and that $[\phi(g) , \phi(h)] = i\Delta(g,h)=0$.

Now, for no superluminal signal to be possible in this setup, Bob's expectation value must be independent of Alice's kick strength $s$. That is, $\chi(s)$ must be constant in $s$ for the Kraus update map, $\mathcal{E}_{\phi(f),\kappa}$, to not enable a superluminal signal in our setup. We now make the following claim that relates this statement about causality to the Kraus family $\kappa$:

\begin{claim}[Causality condition on the Kraus family]\label{claim:causality condition on kappa tilde}
The Kraus update map, $\mathcal{E}_{\phi(f),\kappa}$, will not enable a superluminal signal in our setup iff the Kraus family, $\kappa$, has the property that the function $\tilde{\kappa}:\mathbb{R}\times \mathbb{R}\rightarrow \mathbb{C}$, defined as
\begin{equation}\label{def:kappa tilde}
    \tilde{\kappa}(\lambda , t) = \int_{\Gamma}d\nu(\gamma ) \kappa(\lambda,\gamma)\kappa(\lambda+t\Delta(f,g),\gamma)^* \; ,
\end{equation}
is constant almost everywhere in $\lambda$ for any $t\in\mathbb{R}$.
\end{claim}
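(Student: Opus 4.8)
The plan is to reduce $\chi(s)$ in~\eqref{eq:chi s} to an explicit integral of $\tilde{\kappa}$ against the Gaussian density $q$ from~\eqref{eq:complex_density}, and then extract the causality condition by a Fourier argument. First I would expand the Kraus map inside the second line of~\eqref{eq:chi s} as $\chi(s) = \int_{\Gamma}d\nu(\gamma)\,\gsb e^{is\phi(h)}\kappa(\phi(f),\gamma)\,e^{it\phi(g)}\,\kappa(\phi(f),\gamma)^{\dagger} e^{-is\phi(h)}\gs$, and insert $e^{-is\phi(h)}e^{is\phi(h)}=\mathds{1}$ between the three interior factors. Applying the conjugation identity~\eqref{eq:unitary_kick_on_zeta_phi_f} (with $h$ in place of $g$ and kick strength $s$) turns each factor $\kappa(\phi(f),\gamma)$ into $\kappa(\phi(f)+s\Delta(f,h),\gamma)$, while the central factor $e^{it\phi(g)}$ is untouched because $[\phi(g),\phi(h)]=i\Delta(g,h)=0$ by Assumption (i). This manoeuvre moves the entire dependence on Alice's kick into the argument of $\kappa$.

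Next I would collapse the two $\kappa$-factors into a single function of $\phi(f)$. Using~\eqref{eq:unitary_kick_on_zeta_phi_f} once more to commute $e^{it\phi(g)}$ past the right-hand factor replaces $\kappa(\phi(f)+s\Delta(f,h),\gamma)^{\dagger}$ by $\kappa(\phi(f)+t\Delta(f,g)+s\Delta(f,h),\gamma)^{\dagger}$ and leaves $e^{it\phi(g)}$ acting on the ground state. Each $\gamma$-integrand becomes $\gsb \Psi_{s,\gamma}(\phi(f))\,e^{it\phi(g)}\gs$, where $\Psi_{s,\gamma}(\lambda)=\kappa(\lambda+s\Delta(f,h),\gamma)\,\kappa(\lambda+s\Delta(f,h)+t\Delta(f,g),\gamma)^{*}$, and the functional-calculus formula evaluates this to $\int_{\mathbb{R}}\Psi_{s,\gamma}(\lambda)\,q(\lambda)\,d\lambda$. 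Boundedness of $\kappa$ together with the normalisation~\eqref{eq:kappa normalisation condition} gives $\int_{\Gamma}|\Psi_{s,\gamma}(\lambda)|\,d\nu(\gamma)\le 1$ pointwise in $\lambda$ (Cauchy--Schwarz in $L^2(\Gamma,\nu)$), which together with the integrability of the Gaussian $q$ licenses Fubini to swap the $\gamma$- and $\lambda$-integrals. The inner $\gamma$-integral is exactly $\tilde{\kappa}$ from~\eqref{def:kappa tilde} evaluated at the shifted point, yielding the clean formula
\begin{equation}
    \chi(s) = \int_{\mathbb{R}}\tilde{\kappa}(\lambda+s\Delta(f,h),t)\,q(\lambda)\,d\lambda \; .
\end{equation}

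With this identity the equivalence becomes a statement about a single real integral. Writing $a=\Delta(f,h)\neq 0$ and changing variables, $\chi(s)=F(sa)$ where $F(u)=\int_{\mathbb{R}}\tilde{\kappa}(\lambda,t)\,q(\lambda-u)\,d\lambda$ is the cross-correlation of $\tilde{\kappa}(\cdot,t)$ with the Gaussian $q$. The \emph{if} direction is immediate: a constant $\tilde{\kappa}$ factors out, and $\int_{\mathbb{R}}q$ is $s$-independent, so $\chi$ is constant. For the \emph{only if} direction, constancy of $\chi$ with $a\neq 0$ forces $F$ to be constant in $u$, i.e.\ $\widehat{F}\propto\delta$. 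The decisive input is that $q$ is a shifted Gaussian whose Fourier transform, read off from~\eqref{eq:explicit_ft_of_q}, has modulus $e^{-s^2 W(f,f)/2}$ (times a phase) and is therefore \emph{nowhere vanishing}. Treating the bounded function $\tilde{\kappa}(\cdot,t)$ as a tempered distribution and invoking the convolution theorem, $\widehat{F}$ equals $\widehat{\tilde{\kappa}}$ times this nonvanishing factor; since multiplication by a nowhere-vanishing smooth function neither enlarges nor cancels support, $\widehat{\tilde{\kappa}}$ must also be supported at $\{0\}$, so $\tilde{\kappa}(\cdot,t)$ is a polynomial, and a bounded polynomial is constant almost everywhere.

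I expect the converse, and in particular this final distributional step, to be the main obstacle: one must justify treating $\tilde{\kappa}$ as a tempered distribution, verify the correlation theorem in that setting, and argue carefully that a nowhere-vanishing Gaussian multiplier preserves support at the origin (via dividing test functions supported away from $0$ by the nonvanishing transform). The bound $|\tilde{\kappa}|\le 1$ coming from~\eqref{eq:kappa normalisation condition} is precisely what keeps everything tempered and lets me close the argument through ``bounded polynomial is constant''.
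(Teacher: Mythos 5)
Your reduction of $\chi(s)$ to the integral $\int_{\mathbb{R}}\tilde{\kappa}(\lambda+s\Delta(f,h),t)\,q(\lambda)\,d\lambda$ is, up to a reordering of the operator manipulations (you conjugate by Alice's kick before collapsing the $\gamma$-integral, the paper does the reverse), the same as the paper's derivation of~\eqref{eq:chi to weierstrass transform final}: the paper packages that very integral as a Weierstrass transform evaluated at a complex argument, whereas you keep the imaginary shift inside the complex density $q$; the ``if'' direction then coincides in both treatments. Where you genuinely diverge is the converse, which is the analytic heart of the claim. The paper (Appendix~\ref{app:Constant Weierstrass transform implies constant input}) extends the Weierstrass transform to an entire function, argues via its power series that constancy on one horizontal line forces constancy on all of $\mathbb{C}$, and then concludes by $L^2$ Fourier inversion that $\tilde{\kappa}(\cdot,t)$ is a.e.\ constant. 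You instead run a tempered-distribution argument on the real line: $F$ is the correlation of $\tilde{\kappa}(\cdot,t)$ with a Schwartz kernel, $\widehat{F}\propto\delta$, the Gaussian factor $\widehat{q}$ is nowhere vanishing and is an admissible multiplier, division of compactly supported test functions by $\widehat{q}$ shows $\supp\widehat{\tilde{\kappa}}\subseteq\{0\}$, and the structure theorem for point-supported distributions plus the bound $|\tilde{\kappa}|\leq 1$ (which the paper also establishes via Cauchy--Schwarz and~\eqref{eq:kappa normalisation condition}) yields constancy. Both routes are sound: yours is the standard Wiener-type injectivity argument and isolates the single property doing the work --- the nowhere-vanishing Fourier transform of the Gaussian --- while the paper's is self-contained given its Weierstrass machinery and avoids distribution theory entirely, needing only elementary complex analysis and $L^2$ inversion. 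Two small points: the modulus of $\sqrt{2\pi}\,\fwt{F}\lbrace q\rbrace(s)$ read off from~\eqref{eq:explicit_ft_of_q} is $e^{-W(sf+tg,\,sf+tg)/2}$, which carries a cross term $st\,(W(f,g)+W(g,f))/2$ in addition to your $e^{-s^2W(f,f)/2}$ --- harmless, since nowhere-vanishing is all you use; and your explicit Fubini justification for swapping the $\gamma$- and $\lambda$-integrals, which the paper elides by collapsing the $\gamma$-integral at the operator level, is a welcome piece of added rigour.
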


\begin{proof}
Focussing on $\mathcal{E}_{\phi(f),\kappa}(e^{it\phi(g)})$ we have
\begin{align}
    \mathcal{E}_{\phi(f),\kappa}(e^{it\phi(g)}) & = \int_{\Gamma}d\nu(\gamma ) \kappa(\phi(f),\gamma)e^{it\phi(g)}\kappa(\phi(f),\gamma)^{\dagger} \nonumber
    \\
    & = \int_{\Gamma}d\nu(\gamma ) \kappa(\phi(f),\gamma)e^{it\phi(g)}\kappa(\phi(f),\gamma)^{\dagger}e^{-it\phi(g)}e^{it\phi(g)} \nonumber
    \\
    & = \int_{\Gamma}d\nu(\gamma ) \kappa(\phi(f),\gamma)\kappa(\phi(f)+t\Delta(f,g),\gamma)^{\dagger}e^{it\phi(g)} \nonumber
    \\
    & = \tilde{\kappa}(\phi(f),t)e^{it\phi(g)} \; ,
\end{align}
where we have inserted $\mathds{1} = e^{-it\phi(g)}e^{it\phi(g)}$ in line 2, used~\eqref{eq:unitary_kick_on_zeta_phi_f} in line 3, and in the last line we have used the definition of $\tilde{\kappa}$ in~\eqref{def:kappa tilde}. Inserting this back into~\eqref{eq:chi s} we find
\begin{align}
    \chi (s) & = \gsb e^{is\phi(h)} \mathcal{E}_{\phi(f),\kappa}(e^{it\phi(g)}) e^{-is\phi(h)}\gs \nonumber
    \\
    & = \gsb e^{is\phi(h)} \tilde{\kappa}(\phi(f),t)e^{it\phi(g)} e^{-is\phi(h)}\gs \nonumber
    \\
    & = \gsb e^{is\phi(h)} \tilde{\kappa}(\phi(f),t) e^{-is\phi(h)} e^{it\phi(g)}\gs \nonumber
    \\
    & = \gsb  \tilde{\kappa}(\phi(f)+s\Delta(f,h),t)  e^{it\phi(g)}\gs \; ,
\end{align}
using the fact that $[\phi(g),\phi(h)]=0$ in line 3 to commute $e^{it\phi(g)}$ and $e^{-is\phi(h)}$, and using~\eqref{eq:unitary_kick_on_zeta_phi_f} in the last line.

The last line in the previous equation is now of the form~\eqref{eq:weierstrass_inner_product}, and hence we can write this expectation value as the following Weierstrass transform:
\begin{equation}
    \chi (s) = e^{-t^2\frac{W(g,g)}{2}} \fwt{W}\Bigg\lbrace \tilde{\kappa}\Bigg(\sqrt{\frac{W(f,f)}{2}} (\cdot ) + s\Delta(f,h) ,t \Bigg) \Bigg\rbrace ( z) \; ,
\end{equation}
where $z = i t\sqrt{\frac{2}{W(f,f)}} W(f,g)$. We can then use~\eqref{eq:weierstrass_shift} to move the shift by $s\Delta(f,h)$ from inside the transform to the argument. We then have
\begin{equation}\label{eq:chi to weierstrass transform final}
    \chi(s) = e^{-t^2\frac{W(g,g)}{2}} \fwt{W}\lbrace \tilde{\kappa}( a (\cdot ) ,t ) \rbrace ( a^{-1}\left( s\Delta(f,h) + i t W(f,g) \right)) \; ,
\end{equation}
where $a = \sqrt{\frac{W(f,f)}{2}}$.

Clearly if $\tilde{\kappa}(\lambda , t)$ is constant almost everywhere in $\lambda$, then under the integral in  $\fwt{W}\lbrace \tilde{\kappa}( a (\cdot ) ,t )\rbrace$ we can treat $\tilde{\kappa}(\cdot , t)$ as a constant function. Now, for any constant $c\in\mathbb{C}$ we have $\fwt{W}\lbrace c \rbrace (z) = c$. Thus, an almost everywhere constant $\tilde{\kappa}(\cdot , t)$ implies a constant $\chi(\cdot )$, and hence no superluminal signal. This takes care of the ``if'' part of the claim.

For the ``only if'' part of the claim we need to show that a constant $\chi(\cdot )$ implies an almost everywhere constant $\tilde{\kappa}(\cdot , t)$ for any $t\in\mathbb{R}$. By~\eqref{eq:chi to weierstrass transform final}, a constant $\chi(\cdot )$ implies the Weierstrass transform of $\tilde{\kappa}(\cdot , t)$ is constant along the real axis of the complex plane. Note that $\tilde{\kappa}(\cdot , t)$ is a bounded function. One can see this by first noting that $\kappa (\lambda , \cdot )$ is in $L^2 (\Gamma , \nu)$ (i.e. it is square integrable with the measure $\nu$ on $\Gamma$) for any $\lambda\in\mathbb{R}$, which follows from the normalisation condition~\eqref{eq:kappa normalisation condition}. Then, we note that $\tilde{\kappa}(\lambda , t)$ is simply the $L^2(\Gamma , \nu)$ inner product of $\kappa(\lambda , \cdot)$ and $\kappa(\lambda +t , \cdot)$ in $L^2(\Gamma , \nu)$. This inner product is finite, as by the Cauchy-Schwarz inequality we know it is bounded in absolute value by the $L^2(\Gamma , \nu)$ norm of $\kappa (\lambda , \cdot )$ (which is $1$ by~\eqref{eq:kappa normalisation condition}).

In Appendix~\ref{app:Constant Weierstrass transform implies constant input} we show that if $\fwt{W}\lbrace \zeta (\cdot , t) \rbrace (s+it)$ is constant in $s$, for some bounded (and measurable) $\zeta:\mathbb{R}\times \mathbb{R}\rightarrow\mathbb{C}$, then $\zeta(\cdot  , t)$ is an almost everywhere constant function. Thus, $\tilde{\kappa}(\cdot , t)$ is almost everywhere constant as desired. 

\end{proof}

Let us now consider some examples to see how this condition on the Kraus family can be quickly evaluated.

\begin{example}[Unitary kicks]
Recall Example~\ref{ex:unitary kick kraus} where the Kraus update map was $\mathcal{U}_{\phi(f)}$ for some smeared field $\phi(f)$. There we saw that $\kappa(\lambda , \gamma) = e^{i\lambda}$, and that $\Gamma = \lbrace \gamma\rbrace$. Thus,
\begin{align}
    \tilde{\kappa}(\lambda , t) & = \kappa(\lambda , \gamma)\kappa(\lambda + t \Delta(f,g)\gamma)^* \nonumber
    \\
    & = e^{i\lambda}e^{-i(\lambda+ t \Delta(f,g))} \nonumber
    \\
    & = e^{-it \Delta(f,g)} \; .
\end{align}
This is constant in $\lambda$, and hence by Claim~\ref{claim:causality condition on kappa tilde} we know that $\mathcal{U}_{\phi(f)}$ does not enable a superluminal signal, as shown by other methods in Example~\ref{ex:unitary kick causal}.

Consider, instead, the map $\mathcal{U}_{\phi(f)^2}$ from Example~\ref{ex:acausal}. The only difference to the previous case is that $\kappa(\lambda , \gamma) = e^{i\lambda^2}$. We then get
\begin{align}
    \tilde{\kappa}(\lambda , t) & = e^{i\lambda^2}e^{-i(\lambda+ t \Delta(f,g))^2} \nonumber
    \\
    & = e^{-it \Delta(f,g)(2\lambda + t\Delta(f,g))} \; .
\end{align}
Since this depends on $\lambda$, Claim~\ref{claim:causality condition on kappa tilde} tells us that the map $\mathcal{U}_{\phi(f)^2}$ enables a superluminal signal, which was shown via other means in Example~\ref{ex:acausal}.
\end{example}

\begin{example}[$L^2$-Kraus update map for a smeared field]\label{ex:L2-Kraus update map for a smeared field}
Recall Example~\ref{ex:L2 kraus} where the Kraus update map was $\mathcal{E}_{\phi(f),\kappa}$ for some smeared field $\phi(f)$, where $\Gamma = \mathbb{R}$ with $\nu$ the Lebesgue measure, and where $\kappa(\lambda , \gamma) = k(\lambda - \gamma)$ for some normalised $L^2$ function $k:\mathbb{R}\rightarrow\mathbb{C}$. Thus,
\begin{align}
    \tilde{\kappa}(\lambda , t) & = \int_{\mathbb{R}}d\gamma k(\lambda - \gamma)k(\lambda +t\Delta(f,g)-\gamma)^* \nonumber
    \\
    & = \int_{\mathbb{R}}d\gamma' k(\gamma')k(\gamma' +t\Delta(f,g))^* \; ,
\end{align}
where we have changed integration variables to $\gamma' = \lambda - \gamma$ in the last line. The final integral evaluates to some function of $t$ only, and hence is constant in $\lambda$. Thus, by Claim~\ref{claim:causality condition on kappa tilde} we know this update map does not enable a superluminal signal, which was shown via other methods in Example~\ref{ex:weak measurement causal}.
\end{example}

These examples illustrate the utility of Claim~\ref{claim:causality condition on kappa tilde} in quickly determining the causal nature of a given Kraus update map. In the next section we focus on the case of ideal measurements.

\subsection{The Acausality of an Ideal measurement of a Smeared Field}\label{sec:The acausality of an ideal measurement of a smeared field}

In Example~\ref{ex:ideal measurement} we discussed how ideal measurements can be expressed as a Kraus update map. Given some resolution $\bor{R}=\lbrace \bor{B}_n \rbrace_{n\in I}$, for some countable indexing set $I$, the Kraus family $\kappa$ for the ideal measurement update map $\mathcal{E}_{\phi(f),\bor{R}}$ can be described as follows.

We take $\Gamma = I$, with $\nu$ the counting measure, and set $\kappa(\lambda , n) = 1_{\bor{B}_n}(\lambda)$ for $\lambda\in\mathbb{R}$ and $n\in I$. In this case, we find
\begin{align}\label{eq:kappa tilde for ideal measurement}
    \tilde{\kappa}(\lambda , t) & = \sum_{n\in I} = 1_{\bor{B}_n}(\lambda)1_{\bor{B}_n}(\lambda+t\Delta(f,g)) \nonumber
    \\
    & = \sum_{n\in I}1_{\bor{B}_n}(\lambda)1_{\bor{B}_n-t\Delta(f,g)}(\lambda) \nonumber
    \\
    & = \sum_{n\in I}1_{\bor{B}_n\cap ( \bor{B}_n-t\Delta(f,g) )}(\lambda) \nonumber
    \\
    & = 1_{\bor{R}_{t\Delta(g,f)}}(\lambda) \; ,
\end{align}
where in the last line we have defined, for each $t\in \mathbb{R}$, the following subset of the real line:
\begin{equation}\label{eq:Rt definition}
    \bor{R}_t = \cup_{n\in I}\bor{B}_n\cap ( \bor{B}_n+t ) \; .
\end{equation}
Note, for $t=0$ we get 
\begin{equation}
    \bor{R}_0 = \cup_{n\in I}\bor{B}_n = \mathbb{R} \; .
\end{equation}
Equation~\eqref{eq:kappa tilde for ideal measurement} tells us that, for the case of an ideal measurement of $\phi(f)$, $\tilde{\kappa}(\lambda ,t)$ is simply an indicator function for the set $\bor{R}_t \subseteq \mathbb{R}$.

By Claim~\ref{claim:causality condition on kappa tilde}, such an ideal measurement will not enable a superluminal signal in our setup for any $t\in\mathbb{R}$ if $1_{\bor{R}_t}(\lambda)$ is constant almost everywhere in $\lambda$. As $1_{\bor{R}_t}(\lambda)$ takes values $0$ or $1$, this can only happen if, for each value of $t\in\mathbb{R}$, either $\bor{R}_t \approx \varnothing$ (hence $1_{\bor{R}_t}(\lambda) = 0$ almost everywhere) or $\bor{R}_t \approx \mathbb{R}$ (hence $1_{\bor{R}_t}(\lambda) = 1$ almost everywhere). Here we have used ``$\approx$'' to denote equivalence almost everywhere. That is, the two sets differ on a set of Lebesgue measure zero. 

For a given resolution, $\bor{R}$, it can be straightforward to check whether this is the case, as the following example illustrates.
\begin{example}[Integer bin resolution]\label{ex:Integer bin resolution}
Consider the resolution with bins $\bor{B}_n = [n,n+1)$, for $n\in\mathbb{Z}$. We then have $\bor{B}_n \cap ( \bor{B}_n + t ) = [n+t,n+1) $ for $0\leq t < 1$, and the empty set otherwise. Thus, $\bor{R}_t = \cup_{n\in \mathbb{Z}}[n+t,n+1)$ for $0\leq t < 1$, and the empty set otherwise. For $0 <t <1$ the set $\bor{R}_t$ is clearly not equivalent to $\mathbb{R}$ or $\varnothing$ almost everywhere. Therefore, such ideal measurements enable superluminal signals in our setup, and hence it is physically impossible to measure $\phi(f)$ in such a way that amounts to an ideal measurement update map with this integer bin resolution.
\end{example}

What is not necessarily straightforward to see is whether ideal measurements of the form $\mathcal{E}_{\phi(f),\bor{R}}$ enable superluminal signals for \textit{every} resolution $\bor{R}$. That is, is there some resolution $\bor{R}$ for which an ideal measurement is physically realisable?

We do not consider the trivial resolution containing a single bin, i.e. $\bor{R} = \lbrace \mathbb{R} \rbrace$, as this corresponds to no measurement at all. We also do not consider bins of zero measure, as we can never obtain a measurement outcome in such a zero-measure set (the associated projectors also kill any state). In summary, we only consider resolutions $\bor{R}= \lbrace \bor{B}_n \rbrace_{n\in I}$ for which each $\bor{B}_n$ is of non-zero Lebesgue measure, and such that there are at least two bins. 

At first glance, it may seem likely that every such resolution enables a superluminal signal, given Example~\ref{ex:Integer bin resolution}. On the other hand, the space of possible resolutions is vast, and one can consider extremely complicated cases. For example, one can consider bins, $\bor{B}_n$, that are Cantor sets of non-zero measure (Smith–Volterra–Cantor sets). In such a case what does $\bor{B}_n \cap (\bor{B}_n + t)$, or more importantly $\bor{R}_t$, look like?

In Appendix~\ref{app:Non-triviality of Borel set self-intersection} we show the technical result that for any such resolution, $\bor{R}$, there exists some $t\in\mathbb{R}$ for which $\bor{R}_t$ is not equivalent to $\mathbb{R}$ or $\varnothing$. That is, for every `way' of making an ideal measurement of the smeared field $\phi(f)$ --- every non-trivial choice of bins --- the act of measurement enables a superluminal signal. \textit{Thus, given our assumptions in Section~\ref{sec:Assumptions}, ideal measurements of smeared fields are not physically realisable}. Recall from Section~\ref{sec:Transitive loophole and discrete spacetimes} that there are loopholes to our assumptions in Section~\ref{sec:Assumptions}. For instance, in a discrete spacetime such as a causal set, one can consider ideal measurements of smeared fields local to a single discrete spacetime point without any causal repercussions. 

\textit{However, for continuum spacetimes in which our assumptions in Section~\ref{sec:Assumptions} are always satisfied, e.g. $1+1$ Minkowski spacetime (and potentially higher dimensions too), this means that ideal measurements of smeared field operators are not possible at all. The projection postulate from which the ideal measurement update arises is simply not the correct way to describe a measurement of the field, despite the fact that the update map is perfectly local.} 

The projection postulate nevertheless provides a useful description of measurements in non-relativistic quantum mechanics, in the sense that it yields predictions that accord with experiments. Thus, even if the projection postulate is not the correct description in relativistic QFT, it must be the case that the projection postulate arises as an effective description of some causality respecting update map in the underlying relativistic QFT. Example~\ref{ex:L2-Kraus update map for a smeared field} already provides an infinite family of candidate update maps in QFT which are local and respect causality. It remains to be seen, however, whether any member from this family yields the projection postulate in a non-relativistic limit. We leave this task --- and the general question of recovering the projection postulate from a causal update map in QFT --- for a future project.

\section{Revisiting Previous Literature with an Illustrative Example}\label{sec:Revisiting previous results with an illustrative example}

The result that an ideal measurement of a smeared field generically enables a superluminal signal may seem surprising when contrasted with the heuristic arguments presented in~\cite{Sorkin_impossible,Benincasa_2014,Borsten_2021} that it should in fact not enable a signal.

Those heuristic arguments first note that a smeared field can be written as a sum (or more technically an integral) of the field operator-valued distribution on some Cauchy surface, $\phi(x)$, and its conjugate variable, $\pi(x)$ (see~\cite{Benincasa_2014}, equation (15), for details). Such a `sum' of local operators (or operator-valued distributions more accurately) looks like a `continuous version' of a sum of local operators on a multipartite Hilbert space, e.g. an operator of the form $O = O_A\otimes \mathds{1}_B + \mathds{1}_A \otimes O_B$ acting on some bipartite Hilbert space $\alg{H} = \alg{H}_A \otimes \alg{H}_B$. The intuition, then, is that an ideal measurement of a smeared field will exhibit the same features as an ideal measurement of an operator such as $O$, and in~\cite{Borsten_2021} the latter was shown to not enable a signal between agents acting on the Hilbert spaces $\alg{H}_A$ and $\alg{H}_B$ respectively. Thus, an ideal measurement of a smeared field was not expected to signal.

Following the previous sections, we now know that an ideal measurement of a smeared field does in fact enable a superluminal signal (generically), and so it must evade the above argument in some way. \textit{In short, the result in~\cite{Borsten_2021} that an ideal measurement of $O$ does not signal assumes that both $O_A$ and $O_B$ have pure point spectra. A smeared field does not meet this requirement, and thus it evades the above argument.}

\subsection{An Illustrative Example}\label{sec:An illustrative example}

We can demonstrate the importance of pure point spectra with a simple example. Consider a 2D quantum harmonic oscillator. The Hilbert space consists of complex-valued square integrable functions on $\mathbb{R}^2$: $\alg{H} = L^2(\mathbb{R}^2;\mathbb{C})$. We can write this as $\alg{H}=\alg{H}_A\otimes\alg{H}_B$, where $\alg{H}_A = L^2(\mathbb{R};\mathbb{C})$ is the space of square integrable functions in the $x$-axis, and $\alg{H}_B = L^2(\mathbb{R};\mathbb{C})$ is the space of square integrable functions in the $y$-axis. We assume the system starts in the state $\psi(x,y) = \pi^{-1/2} e^{-(x^2+y^2)/2}$, which we write as $\ket{\psi}$ in bra-ket notation.

Note that in this example the two Hilbert spaces --- $\alg{H}_A$ and $\alg{H}_B$ --- do not correspond to distinct physical systems separated in space. This is in contrast to, say, two spatially separated qubits, as was considered in~\cite{Borsten_2021,Beckman_2001}. It may seem strange, then, to think of a ``signal'' between two agents Alice and Bob acting on the Hilbert spaces $\alg{H}_A$ and $\alg{H}_B$ respectively, as they are both acting on the same physical system --- the 2D oscillator. 
The analogous signalling protocol in this example can be described as follows. Alice first performs some unitary operation with respect to the $x$-axis degrees of freedom of the 2D oscillator. A third agent, Charlie, then performs an ideal measurement of some observable of the 2D oscillator, described by some operator of the form of $O$ above. Finally, Bob measures some operator associated with the $y$-axis degrees of freedom of the 2D oscillator. We can then ask whether Charlie's measurement enables Bob to tell whether Alice performed her operation or not, starting in the state $\ket{\psi}$. If it does, then we say that Alice can signal Bob. Note that if Charlie does nothing then it is easy to see that Bob cannot tell whether Alice has performed her operation, since they act on independent parts of the Hilbert space.

Before moving on, we note that one can equivalently write the $x$ and $y$ position operators, $\hat{x}$ and $\hat{y}$, as $\hat{x}\otimes \mathds{1}_B$ and $\mathds{1}_A\otimes\hat{y}$ respectively. This helps highlight their trivial action on $\alg{H}_A$ and $\alg{H}_B$ respectively. Similarly, we can write the associated momentum operators as $\hat{p}_x$ and $\hat{p}_y$, or as $\hat{p}_x\otimes \mathds{1}_B$ and $\mathds{1}_A\otimes\hat{p}_y$.

Now, consider any self-adjoint operator of the form $O = O_A\otimes \mathds{1}_B + \mathds{1}_A \otimes O_B$, where $O_{A,B}$ has a pure point spectrum, i.e. we can write $O_{A,B}$ as a sum of eigenprojectors (not necessarily of rank 1, or even finite rank) weighted by their respective eigenvalues.

For example, we could take $O_A = {\hat{p}_x}^2 + \hat{x}^2$, and $O_B = {\hat{p}_y}^2 + \hat{y}^2$. $O_A$ then has the eigenstates (labelled by $n=0,1,2...$) $\psi_n(x) = (2^n n!)^{-1/2}\pi^{-1/4}H_n(x)e^{-x^2/2}$ (where $H_n$ is the $n$'th Hermite polynomial), and the associated eigenvalues $2n+1$. $O_B$ similarly has eigenstates $\psi_n(y)$ and eigenvalues $2n+1$. Up to a constant, we can think of an ideal measurement of $O = O_A\otimes \mathds{1}_B + \mathds{1}_A \otimes O_B$ as a measurement of the sum of the energies of our 2D oscillator in the $x$- and $y$-axes.

The result in~\cite{Borsten_2021} tells us that an ideal measurement (using the `canonical' resolution described in Section~\ref{sec:Interpreting resolution}) of any $O$ of this form cannot enable a signal between Alice and Bob, for any starting state. In particular, if Alice kicks the state $\ket{\psi}$ with the unitary operator $U_A = e^{-is\hat{p}_x}$ (for some $s\in\mathbb{R}$), and if Bob uses a variety of measurements to extract the expected value of the operator $e^{i t \hat{p}_y}$ (for some $t\in\mathbb{R}$) after Charlie's measurement of $O$, then Bob's expected value does not depend on Alice's kick strength $s$, i.e. Bob cannot tell if Alice has kicked or not.

Now consider the operator $O = \hat{x}\otimes \mathds{1}_B + \mathds{1}_A\otimes \hat{y}$, which can be equivalently written as $O = \hat{x}+\hat{y}$. Both $\hat{x}$ and $\hat{y}$ have continuous spectra. Thus, to describe an ideal measurement of $O = \hat{x}+\hat{y}$ we now need to pick some resolution $\bor{R}=\lbrace \bor{B}_n\rbrace_{n\in I}$, as the lack of a pure point spectrum means there is no obvious `canonical' resolution (c.f. Section~\ref{sec:Interpreting resolution}). Such an ideal measurement with resolution $\bor{R}$ amounts to measuring the position of the particle in the $(x+y)$-direction and binning the result into the bins $\bor{B}_n$. $O = \hat{x}+\hat{y}$ is still a sum of `local' observables (in the sense that $\hat{x}$ and $\hat{y}$ are local to their respective Hilbert spaces $\alg{H}_A$ and $\alg{H}_B$), but the associated operators, $\hat{x}$ and $\hat{y}$, no longer have pure point spectra. Note, the projector for a given bin $\bor{B}_n$ is simply the indicator function $1_{\bor{B}_n}(x+y)$.

After Alice's kick with $U_A = e^{-is\hat{p}_x}$ and Charlies ideal measurement of $O = \hat{x}+\hat{y}$ with some resolution $\bor{R}$, Bob's expectation value of $e^{i t \hat{p}_y}$ is explicitly $\chi(s) = \bra{\psi}e^{is\hat{p}_x}\mathcal{E}_{O,\bor{R}}(e^{i t \hat{p}_y})e^{-i s \hat{p}_x}\ket{\psi}$, c.f. equation~\eqref{eq:chi s}. Alice's kick with $e^{-i s \hat{p}_x}$ acts as a shift operator in the $x$-direction, and similarly Bob's operator $e^{i t \hat{p}_y}$ acts as a shift in the $y$-direction. We then have $e^{-i s \hat{p}_x} \psi(x,y) = \psi(x-s,y)$. Furthermore, for any operator $T$ we have
\begin{equation}
    \mathcal{E}_{O,\bor{R}}(T) = \sum_{n\in I} 1_{\bor{B}_n}(x+y) T 1_{\bor{B}_n}(x+y) \; .
\end{equation}
Putting all of this together, we then get
\begin{align}
    \chi(s) & = \int_{\mathbb{R}^2}dx dy \, \psi(x-s,y)^* \sum_{n\in I}  1_{\bor{B}_n}(x+y) e^{i t \hat{p}_y} \left( 1_{\bor{B}_n}(x+y) \psi(x-s,y) \right) \nonumber
    \\
    & = \int_{\mathbb{R}^2}dx dy \, \psi(x-s,y)^* \sum_{n\in I}  1_{\bor{B}_n}(x+y) 1_{\bor{B}_n}(x+t+y) \psi(x-s,y+t) \; ,
\end{align}
and
\begin{align}
    \sum_{n\in I}  1_{\bor{B}_n}(x+y) 1_{\bor{B}_n}(x+t+y) & = \sum_{n\in I}  1_{\bor{B}_n}(x+y) 1_{\bor{B}_n-t}(x+y) \nonumber
    \\
    & = \sum_{n\in I}  1_{\bor{B}_n\cap (\bor{B}_n - t)}(x+y) \nonumber
    \\
    & = 1_{\bor{R}_{-t}}(x+y) \; ,
\end{align}
where we recall the definition, $\bor{R}_t = \cup_{n\in I}\bor{B}_n \cap (\bor{B}_n + t)$, from~\eqref{eq:Rt definition}. Thus,
\begin{equation}\label{eq:final x y chi s integral}
    \chi(s) = \int_{\mathbb{R}^2}dx dy \, 1_{\bor{R}_{-t}}(x+y) \psi(x-s,y)^* \psi(x-s,y+t)  \; .
\end{equation}
The indicator function $1_{\bor{R}_{-t}}(x+y)$ in the integrand means that, to compute $\chi(s)$, one simply restricts the integral of $\psi(x-s,y)^* \psi(x-s,y+t)$ from the whole $\mathbb{R}^2$ plane to the subset of points $S_{-t}=\lbrace (x,y)\in\mathbb{R}^2 | x+y\in\bor{R}_{-t}\rbrace$. It may help to visualise the subset $S_{-t}$ as a set of infinitely long `stripes' at right angles to the $(x+y)$-direction, and with this in mind it seems fairly intuitive that, for some generic resolution $\bor{R}$, the integral $\chi(s)$, will depend on $s$, i.e. how much $\psi(x-s,y)^* \psi(x-s,y+t)$ is shifted in the $x$-direction.

In fact, we can show this is true for any resolution $\bor{R}$ by following the calculations in Section~\ref{sec:General causality condition for a smeared field operations} and Appendices~\ref{app:Constant Weierstrass transform implies constant input} and~\ref{app:Non-triviality of Borel set self-intersection}. We first recall that $\psi(x,y) = \pi^{-1/2}e^{-(x^2+y^2)/2}$. Next, the fact that the indicator function in~\eqref{eq:final x y chi s integral} depends only on $x+y$ suggests we change variables to $u=(x-y)/2$ and $v=x+y$. With this we get
\begin{equation}
    \chi(s) = \frac{e^{-s^2 - \frac{t^2}{2}}}{\pi}\int_{\mathbb{R}}dv \, e^{-\frac{v^2}{2}-\left(\frac{t}{2}-s\right)v}1_{\bor{R}_{-t}}(v)\int_{\mathbb{R}}du \, e^{-2u^2 + (2s+t)u} \; .
\end{equation}
The $u$ integral then evaluates to $\sqrt{\frac{\pi}{2}}e^{(2s+t)^2/8}$, and after some manipulation we are left with the $v$ integral
\begin{equation}
    \chi(s) = \frac{e^{-\frac{t^2}{4}}}{\sqrt{2\pi}}\int_{\mathbb{R}}dv \, e^{-\frac{1}{2}\left(v-\left(s-\frac{t}{2}\right)\right))^2} 1_{\bor{R}_{-t}}(v) \; .
\end{equation}
By changing variables to $w = \sqrt{2}v$ we can recast this expression as a Weierstrass transform:
\begin{equation}
    \chi(s) = e^{-\frac{t^2}{4}} \fwt{W}\left\lbrace 1_{\bor{R}_{-t}}\left(\frac{\cdot}{\sqrt{2}} \right) \right\rbrace (z) \; ,
\end{equation}
where $z=\sqrt{2}s - \frac{t}{\sqrt{2}}$.

Finally, following the arguments in Section~\ref{sec:General causality condition for a smeared field operations} and Appendices~\ref{app:Constant Weierstrass transform implies constant input} and~\ref{app:Non-triviality of Borel set self-intersection}, we know that for $\chi(s)$ to be constant in $s$, it must be the case that $1_{\bor{R}_{-t}}(\cdot )$ is a constant function almost everywhere. From Appendices~\ref{app:Constant Weierstrass transform implies constant input} and~\ref{app:Non-triviality of Borel set self-intersection} we know that this cannot be the case for any resolution $\bor{R}$, and hence $\chi(s)$ has a non-trivial dependence on $s$, i.e. Alice can signal Bob.

\subsection{Relation to the QFT Scenario}\label{sec:Relation to the QFT scenario}

This calculation is clearly very similar to the corresponding calculation for an ideal measurement of a smeared field, and in fact, they can be mapped into one another. Explicitly, one maps $\hat{x}\mapsto\phi(f_1)$ and $\hat{y}\mapsto\phi(f_2)$, where $f_1$ and $f_2$ are spacelike supported test functions. Thus, $O = \hat{x}+\hat{y}\mapsto \phi(f_1)+\phi(f_2) = \phi(f_1+f_2) = \phi(f)$, where we have set $f=f_1+f_2$. Next, one maps $\hat{p}_x\mapsto \phi(h)$ and $\hat{p}_y\mapsto \phi(g)$, where $h$ has support in the past of $f_1$ (with $\Delta(h,f_1)\neq 0$) and spacelike to $f_2$, $g$ has support in the future of $f_2$ (with $\Delta(g,f_2)\neq 0$) and spacelike to $f_1$, and the supports of $g$ and $h$ are mutually spacelike. This ensures that, up to constant factors, the commutation relations between $\hat{x}+\hat{y}$, $\hat{p}_x$ and $\hat{p}_y$ mirror those of $\phi(f)$, $\phi(h)$ and $\phi(g)$ respectively, e.g.
\begin{align}
    [\hat{x}+\hat{y},\hat{p}_x] &= i   &  [\hat{x}+\hat{y},\hat{p}_y] &= i & [\hat{p}_x , \hat{p}_y] &= 0 \nonumber
    \\
    [\phi(f),\phi(h)] &= i\Delta(f,h)   &[\phi(f),\phi(g)] &= i\Delta(f,g) &  [\phi(h) ,\phi(g)] &= 0
\end{align}

We stress that this mapping should be understood at the mathematical level alone, and we are not claiming any physical correspondence between a 2D harmonic oscillator and our QFT setup. In particular, the kick by Alice and the measurement by Bob in the QFT setup happen in spacelike separated regions of spacetime, whereas for the 2D oscillator, Alice and Bob's actions occur in the same place in space, one after another, and are thus timelike related. Furthermore, the 2D oscillator example is simple enough that the signalling protocol could (likely) be implemented in a real experiment. Since Alice and Bob's actions are not spacelike in this case, it should not be surprising that Alice can signal Bob using this protocol. We further note that the fact that this signal is (likely) visible in a real experiment \textit{does not} mean that the corresponding \textit{superluminal} signal in the QFT case is visible in any real experiments. The latter is clearly impossible, and simply points to a deficiency in our description of measurement in QFT.

\subsection{Evading the Signal with Pure Point Approximations}\label{sec:Evading the signal with pure point approximations}

Before moving on, let us discuss an interesting way in which the signal can be evaded in the simple 2D harmonic oscillator example. One can imagine approximating the position operators, $\hat{x}$ and $\hat{y}$, by operators $\hat{x}_\varepsilon$ and $\hat{y}_\varepsilon$ with pure point spectra, where $\varepsilon >0$ controls the accuracy of the approximation. For example, we could take $\hat{x}_\varepsilon$ to act on a general state $\Psi(x,y)$ as
\begin{equation}
    \hat{x}_{\varepsilon} \Psi(x,y) = \sum_{n\in\mathbb{Z}} n \varepsilon \, 1_{[n \varepsilon, (n+1)\varepsilon)}(x)\, \Psi(x,y) \; ,
\end{equation}
and similarly for $\hat{y}_\varepsilon$. Both $\hat{x}_{\varepsilon}$ and $\hat{y}_{\varepsilon}$ have pure point spectra with eigenvalues $n\varepsilon$ (labelled by $n\in\mathbb{Z}$), and the associated projectors are the indicator functions $1_{[n \varepsilon, (n+1)\varepsilon)}(x)$ and $1_{[n \varepsilon, (n+1)\varepsilon)}(y)$ respectively.

In making an ideal measurement of either $\hat{x}_{\varepsilon}$ or $\hat{y}_{\varepsilon}$ we can pick any resolution $\bor{R}$, but there is a `canonical' choice of resolution in which each bin contains one and only one eigenvalue. For example, the resolution $\bor{R}_{\varepsilon} = \lbrace [n\varepsilon , (n+1)\varepsilon ) \rbrace_{n\in\mathbb{Z}}$, with open-closed bins $[n\varepsilon , (n+1)\varepsilon )$ does the job. The associated projectors that appear in the ideal measurement update maps $\mathcal{E}_{\hat{x}_{\varepsilon},\bor{R}_{\varepsilon}}$ and $\mathcal{E}_{\hat{y}_{\varepsilon},\bor{R}_{\varepsilon}}$ are then $1_{[n \varepsilon, (n+1)\varepsilon)}(x)$ and $1_{[n \varepsilon, (n+1)\varepsilon)}(y)$ respectively.

One can show that $\norm{(\hat{x}_{\varepsilon}-\hat{x})\Psi}\leq \varepsilon \norm{\Psi}$, for any state $\Psi(x,y)$ in the domains of both $\hat{x}_{\varepsilon}$ and $\hat{x}$, and thus $\norm{(\hat{x}_{\varepsilon}-\hat{x})\Psi}\rightarrow 0$ as $\varepsilon\rightarrow 0$. In this precise sense $\hat{x}_{\varepsilon}$ approximates $\hat{x}$, and the same can be said for $\hat{y}_{\varepsilon}$ and $\hat{y}$. Even though we can approximate $\hat{x}$ and $\hat{y}$ arbitrarily closely in this sense, $\hat{x}_{\varepsilon}$ and $\hat{y}_{\varepsilon}$ have pure point spectra for all $\varepsilon >0$, unlike $\hat{x}$ and $\hat{y}$ which have continuous spectra.

We are interested in a measurement of $O = \hat{x} + \hat{y}$, and so we define $O_{\varepsilon} = \hat{x}_{\varepsilon} + \hat{y}_{\varepsilon}$. From the triangle inequality we can easily see that $\norm{(O_{\varepsilon}-O)\Psi}\leq 2\varepsilon \norm{\Psi}$, and so $O_{\varepsilon}$ approximates $O$ in the same sense as above. We can also write $O_{\varepsilon}$ as
\begin{align}
    O_{\varepsilon} & = \sum_{n\in\mathbb{Z}}n\varepsilon \, 1_{[n \varepsilon, (n+1)\varepsilon)}(x) + \sum_{m\in\mathbb{Z}}m\varepsilon \, 1_{[m \varepsilon, (m+1)\varepsilon)}(y)  \nonumber
    \\
    & = \sum_{n\in\mathbb{Z}}n\varepsilon \, 1_{[n \varepsilon, (n+1)\varepsilon)}(x)\left(\sum_{m\in\mathbb{Z}} 1_{[m \varepsilon, (m+1)\varepsilon)}(y)\right) + \sum_{m\in\mathbb{Z}}m\varepsilon \, 1_{[m \varepsilon, (m+1)\varepsilon)}(y) \left(\sum_{n\in\mathbb{Z}} 1_{[n \varepsilon, (n+1)\varepsilon)}(x)\right) \nonumber
    \\
    & = \sum_{n,m\in\mathbb{Z}}n\varepsilon \, 1_{[n \varepsilon, (n+1)\varepsilon)}(x) 1_{[m \varepsilon, (m+1)\varepsilon)}(y) + \sum_{n,m\in\mathbb{Z}}m\varepsilon \, 1_{[n \varepsilon, (n+1)\varepsilon)}(x) 1_{[m \varepsilon, (m+1)\varepsilon)}(y) \nonumber
    \\
    & = \sum_{n,m\in\mathbb{Z}}(n+ m)\varepsilon \, 1_{[n \varepsilon, (n+1)\varepsilon)}(x) 1_{[m \varepsilon, (m+1)\varepsilon)}(y) \; .
\end{align}
This makes it clear that $O_{\varepsilon}$ also has a pure point spectrum, and its eigenvalues are simply the distinct values of $(n+m)\varepsilon$ that appear in the sum on the final line, i.e. its eigenvalues are $k\varepsilon$ for $k\in\mathbb{Z}$. For a given eigenvalue we can read off the associated projector from the last line above. For eigenvalue $k\varepsilon$ the associated projector is the 2D indicator function $1_{S_k}(x,y)$, where we have defined the subset $S_k\subset\mathbb{R}^2$ as
\begin{equation}
    S_k = \cup_{\substack{n,m\in\mathbb{Z}\, ,\\ n+m=k}}[n \varepsilon, (n+1)\varepsilon)\times [m \varepsilon, (m+1)\varepsilon) \; .
\end{equation}
As an example, we illustrate $S_0$ in Fig.~\ref{fig:borel set}. We can now write
\begin{equation}\label{eq:O_eps diagonalised}
    O_{\varepsilon} = \sum_{k\in\mathbb{Z}}k\varepsilon \, 1_{S_k}(x,y) \; .
\end{equation}
In making an ideal measurement of $O_{\varepsilon}$ we can choose any resolution, but again there is a `canonical' choice corresponding to bins which contain one and only one eigenvalue. Since the eigenvalues of $O_{\varepsilon}$ are $k\varepsilon$, the same choice of resolution as above, i.e. $\bor{R}_{\varepsilon} = \lbrace [k\varepsilon , (k+1)\varepsilon ) \rbrace_{k\in\mathbb{Z}}$, does the job. For a given bin, $[k\varepsilon , (k+1)\varepsilon )$, the associated projector for $O_{\varepsilon}$ is then $1_{S_k}(x,y)$, and these are the projectors that would appear in the ideal measurement update map $\mathcal{E}_{O_{\varepsilon} , \bor{R}_{\varepsilon}}$. 
\begin{figure}
 \centering
 \includegraphics[width=0.5\textwidth]{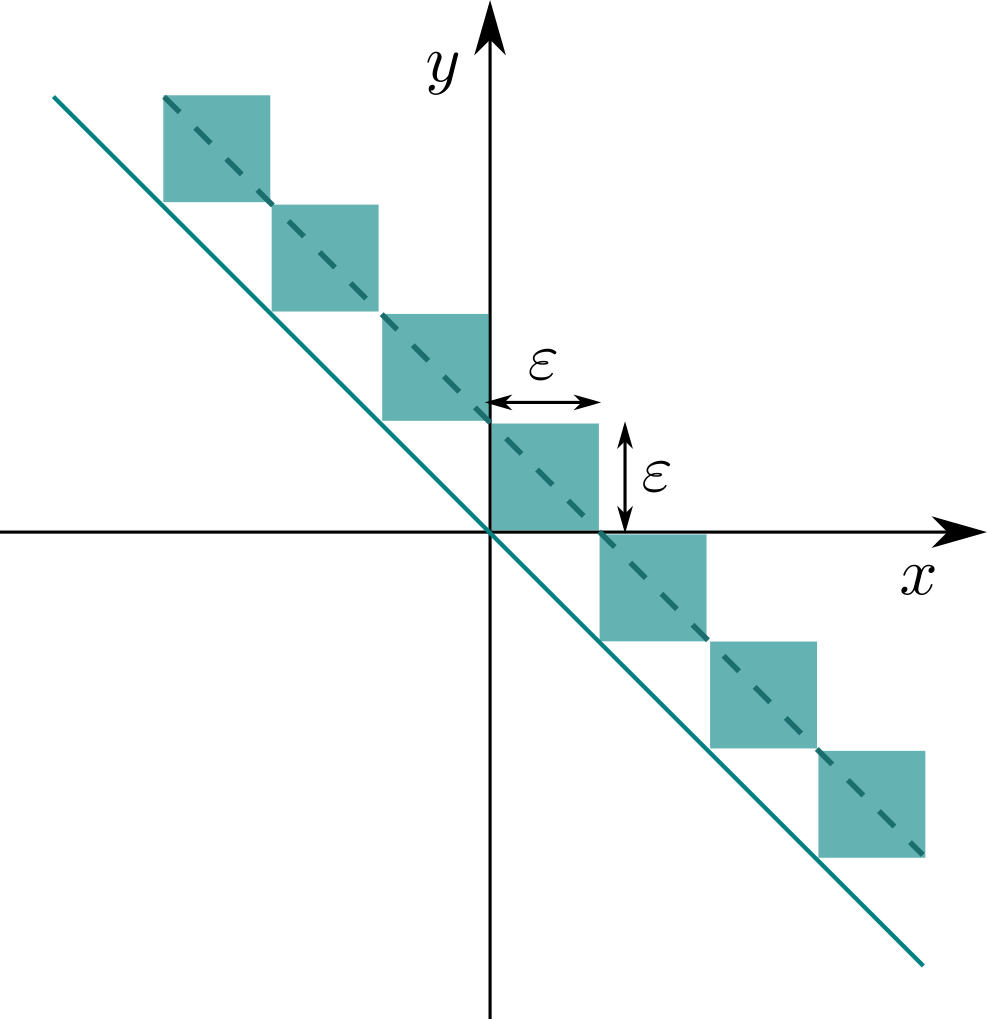}
 \caption{The set $S_0$ is the union of the shaded squares. The support of the indicator function $1_{[0,\varepsilon )}(x,y)$ lies on and above the solid line, and strictly below the dashed line.}
 \label{fig:borel set}
\end{figure}

From~\cite{Borsten_2021} we know that since both $\hat{x}_{\varepsilon}$ and $\hat{y}_{\varepsilon}$ have pure point spectra, an ideal measurement by Charlie of $O_{\varepsilon}$ (with the `canonical' resolution $\bor{R}_{\varepsilon}$) will \textit{not} enable a signal from Alice to Bob, no matter how small $\varepsilon >0$ is. This may seem puzzling since $O$ does enable a signal, and since $O_{\varepsilon}$ can approximate $O$ arbitrarily closely in the sense above. Nevertheless, there is still a physical difference between an ideal measurement of $O = \hat{x}+ \hat{y}$ and an ideal measurement of $O_{\varepsilon} = \hat{x}_{\varepsilon} + \hat{y}_{\varepsilon}$. 

The ideal measurement of $O_{\varepsilon}$ implicitly comes with the `canonical' resolution, e.g. $\bor{R}_{\varepsilon}$, but the ideal measurement of $O = \hat{x}+ \hat{y}$ requires us to specify some resolution. For comparison's sake, we consider the same resolution $\bor{R}_{\varepsilon}$.

Even with the same resolution for both measurements, there is still a physical difference between the two cases, and one that appears to make all the difference in terms of the signalling protocol. Let us focus on the bin $[0,\varepsilon)\in\bor{R}_{\varepsilon}$ to highlight this difference. As noted in Section~\ref{sec:An illustrative example}, given some Borel set, $\bor{B}$, the associated projector for $O=\hat{x}+\hat{y}$ is the indicator function $1_{\bor{B}}(x+y)$, and thus for the bin $[0,\varepsilon)$ the associated projector is $1_{[0,\varepsilon)}(x+y)$. The support of this projector looks like an infinitely long strip in the $(x,y)$-plane at right angles to the $(x+y)$-direction (Fig.~\ref{fig:borel set}). For $O_{\varepsilon} = \hat{x}_{\varepsilon} + \hat{y}_{\varepsilon}$, on the other hand, the associated projector is $1_{S_0}(x,y)$, as mentioned above~\footnote{One can also see this by taking $1_{[0,\varepsilon )}(O_{\varepsilon})$, with $O_{\varepsilon}$ given by the rhs of~\eqref{eq:O_eps diagonalised}. The only way to get $1_{[0,\varepsilon )}(O_{\varepsilon})=1$ is if $O_{\varepsilon}$ evaluates to something in the open-closed interval $[0,\varepsilon)$, and that only happens if if $(x,y)$ lies in the set $S_0$, and thus $1_{[0,\varepsilon )}(O_{\varepsilon}) = 1_{S_0}(x,y)$}. The support of this projector is the set $S_0$ (shown in Fig.~\ref{fig:borel set}), which is clearly different to the former case, and thus amounts to a physically distinct yes/no question. In the former case we are asking whether the particle's position lies within the strip in Fig.~\ref{fig:borel set}, and in the latter case we are asking whether its position lies within any of the squares shown in Fig.~\ref{fig:borel set}.

\section{A Decoherence Functional/Path Integral Perspective}\label{sec:A decoherence functional/path integral perspective}

To finish we turn to the decoherence functional or path integral formalism, as this alternative perspective on the calculations in Section~\ref{sec:Causality of smeared field operations} may offer some further intuition as to why an ideal measurement of a smeared field operator enables a signal. The path integral formalism of Sorkin's scenario was considered in~\cite{Fuksa2021mfe}, though they were not optimistic that it would lead to further insight. 

Here we consider a background causal set (following the corresponding QFT path integral formalism in~\cite{deco_scalarfield}), instead of a continuum spacetime, as the discreteness of spacetime means we can rigorously define the QFT path integral, and thus avoid any technical issues surrounding the continuum path integral.

In reformulating the above calculations in Section~\ref{sec:Causality of smeared field operations}, and in particular the non-selective ideal measurement, we arrive not at the single path integral formalism, but at the \textit{double} path integral formalism, where one sums, or integrates, over all \textit{pairs} of paths. In the context of QFT on a fixed causet $\spctm{C}$, a single `path' correspond to a field configuration on $\spctm{C}$ --- an assignment of a real number to each causet point. A pair of `paths' then corresponds to a pair of field configurations on $\spctm{C}$. For more background on the double path integral see~\cite{hartle2014spacetime}. We also note that the single path integral can always be recovered from the double path integral by simply integrating out one of the paths in the pair.

For simplicity we consider a causet, $\spctm{C}$, consisting of only 4 points labelled $A$, $1$, $2$, and $B$, with the order relations $A\preceq 1$ and $2\preceq B$ (see Fig.~\ref{fig:causet double path integral}). This is the minimal setup in which we can formulate Sorkin's scenario. For brevity, we write $f_x$ for the value of some function $f$ at a point $x\in\lbrace A, 1 , 2, B\rbrace$, instead of as $f(x)$ as was done in the preceding sections. Unlike in the continuum, the discreteness means we can define the field operator at a point $x\in\lbrace A, 1 , 2, B\rbrace$, which we similarly denote as $\phi_x$. Note, $\phi_x$ is equivalent to the smeared field operator, $\phi(f)$, with $f$ taken to be $1$ on $x$ and $0$ everywhere else.
\begin{figure}
 \centering
 \includegraphics[width=0.4\textwidth]{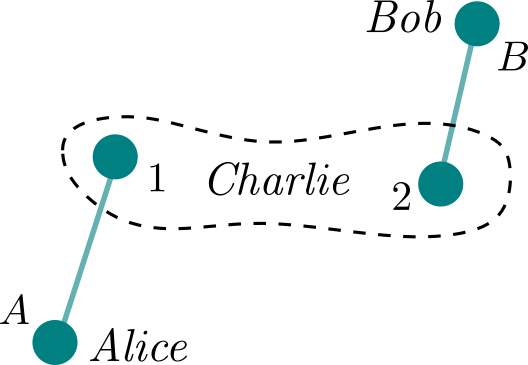}
 \caption{A minimal causet of four elements in which we can consider Sorkin's scenario. Alice unitarily kicks on point $A$, Charlie makes a measurement of a smeared field operator over points $1$ and $2$, and Bob measures his expectation value on point $B$.}
 \label{fig:causet double path integral}
\end{figure}

Moving on to Sorkin's scenario in this setup, we start in the ground state $\gs$ (which one can take to be the Sorkin-Johnston state~\cite{Afshordi_2012_A,Afshordi_2012_B}, or some other candidate vacuum state). Alice then unitarily kicks with $e^{-is \phi_A}$ on point $A$, Charlie performs an ideal measurement (with some resolution $\bor{R}$) of the smeared field operator $\phi(f)$ on points $1$ and $2$, and finally Bob measures their field $\phi_B$ on point $B$ and reconstructs the expectation value of $e^{i t \phi_B}$ from their measurements (see Sections~\ref{sec:Assumptions} for more details). To restrict Charlie's operator $\phi(f)$ to sites $1$ and $2$ we take $f$ to be $0$ on points $A$ and $B$, and to have some non-zero values, $f_1,f_2\neq 0$, on points $1$ and $2$. We can then equivalently write $\phi(f) = f_1 \phi_1 + f_2 \phi_2$.

Following~\eqref{eq:chi s} in Section~\ref{sec:General causality condition for a smeared field operations}, Bob's expectation value, as a function of Alice's kick strength $s$, is then
\begin{align}\label{eq:chi s decoherence}
    \chi (s) & = \gsb \mathcal{U}_{s\phi_A}( \mathcal{E}_{\phi(f),\bor{R}}(e^{it\phi_B}) )\gs \nonumber
    \\
    & =  \gsb e^{is\phi_A} \mathcal{E}_{\phi(f),\bor{R}}(e^{it\phi_B}) e^{-is\phi_A}\gs \nonumber
    \\
    & = \sum_{n\in I} \gsb e^{is\phi_A} \, 1_{\bor{B}_n}(f_1 \phi_1 + f_2 \phi_2) \, e^{it\phi_B} \, 1_{\bor{B}_n}(f_1 \phi_1 + f_2 \phi_2) \, e^{-is\phi_A}\gs \; ,
\end{align}
where in the last line we have substituted in $f_1 \phi_1 + f_2 \phi_2$ for $\phi(f)$ in the arguments of the indicator functions.

In~\cite{deco_scalarfield} Sorkin discusses how to translate expectation values (such as the rhs in~\eqref{eq:chi s decoherence}) from the operator formalism into double path integrals, assuming $\gs$ is the Sorkin-Johnston state. For our simple causet $\spctm{C}$, this is done by integrating all pairs of paths --- pairs of field configurations on $\spctm{C}$ --- against the integral measure, or \textit{decoherence functional},
\begin{align}\label{eq:decoherence functional}
    D(\xi , \overline{\xi}) = \gsb& \delta (\phi_A - \xi_A )\delta (\phi_1 - \xi_1 )\delta (\phi_2- \xi_2)\delta (\phi_B - \xi_B )
    \nonumber
    \\
    & \times \delta (\phi_B - \overline{\xi}_B )\delta (\phi_2 - \overline{\xi}_2 )\delta (\phi_1 - \overline{\xi}_1 )\delta (\phi_A - \overline{\xi}_A )\gs \; ,
\end{align}
where $\xi , \overline{\xi}\in\mathbb{R}^4$ represent a pair of field configurations on the four element causet $\spctm{C}$. Note that going from left to right the delta functions for the $\xi$ terms have been inserted in an order that respects the causal ordering (e.g. the term for point $1$ comes after that for $A$). The $\overline{\xi}$ terms are then in the opposite order. In this way one can encode the spacetime causal structure in the decoherence functional $D(\xi , \overline{\xi})$.

One of the main results in~\cite{deco_scalarfield} is that we can write the decoherence functional in~\eqref{eq:decoherence functional} as
\begin{align}\label{eq:decoherence functional action form}
    D(\xi,\bar{\xi}) = \delta (L(\xi, \bar{\xi})) \, e^{i \Delta S(\xi , \bar{\xi})}, \; 
\end{align} 
where $\delta(\cdot )$ is the usual delta-function, $L$ is some linear function, and $\Delta S$ is a quadratic function. The factor $e^{i \Delta S(\xi , \bar{\xi})}$ is analogous to $e^{i (S[\gamma ] - S[ \gamma' ])}$ in the double path integral for a non-relativistic particle propagating in $d$ dimensions~\cite{hartle2014spacetime,Hilbert_deco}, where $\gamma$ and  $\gamma'$ are different trajectories over which the integration is computed, and $S[\cdot ]$ is the action of the trajectory.

Now, by integrating the measure $D(\xi,\overline{\xi})$ over all pairs of field configurations (i.e. all $\xi , \overline{\xi}\in\mathbb{R}^4$) against some function of $\xi$ and $\overline{\xi}$, one can recover any expectation value from the operator formalism. In particular, for~\eqref{eq:chi s decoherence} we find 
\begin{align}\label{eq:chi s decoherence double path integral}
    \chi (s) = \int_{\mathbb{R}^4}d^4\xi \int_{\mathbb{R}^4}d^4\overline{\xi} \, D(\xi,\overline{\xi}) \, e^{-is(\xi_A - \overline{\xi}_A)}e^{it\xi_B}\sum_{n\in I}1_{\bor{B}_n}(f_1 \xi_1 + f_2 \xi_2) 1_{\bor{B}_n}(f_1 \overline{\xi}_1 + f_2 \overline{\xi}_2) \, .
\end{align}
Note, instead of $e^{i t \xi_B}$ we could have also integrated over $e^{i t \overline{\xi}_B}$ to arrive at the expression on the last line of~\eqref{eq:chi s decoherence}.

Now, without Charlie's ideal measurement of $\phi(f) = f_1\phi_1 + f_2\phi_2$, and hence without the indicator functions in~\eqref{eq:chi s decoherence double path integral}, we find that $\chi(s)$ is independent of $s$; Alice cannot send a signal to Bob. This manifests in the double path integral formalism as an independence of Bob's marginal measure from $s$. That is, if we were to integrate out all the variables not to the past of Bob, i.e. $\xi_A$, $\xi_1$, $\overline{\xi}_A$ and $\overline{\xi}_1$, we would find that the resulting marginal measure does not depend on $s$. This essentially follows from the causal nature of $D(\xi , \overline{\xi})$, which was encoded in the ordering of the delta functions in~\eqref{eq:decoherence functional}.

With Charlie's ideal measurement, however, the indicator functions in~\eqref{eq:chi s decoherence double path integral} `tie' together the field values on points $1$ and $2$ in a way that means that Bob's marginal measure does depend on $s$. To make this clearer, we define the set $S_n = \lbrace (\xi_A , \xi_1 , \xi_2 , \xi_B )\in\mathbb{R}^4 | f_1 \xi_1 + f_2 \xi_2 \in \bor{B}_n \rbrace$, for each $n\in I$. We can then write~\eqref{eq:chi s decoherence double path integral} as
\begin{align}\label{eq:chi s decoherence restricted double path integral}
    \chi (s) = \sum_{n\in I}\int_{S_n}d^4\xi \int_{S_n}d^4\overline{\xi} \, D(\xi,\overline{\xi}) \, e^{-is(\xi_A - \overline{\xi}_A)}e^{it\xi_B}\, .
\end{align}
The integrals over field configurations $\xi$ and $\overline{\xi}$ have now been restricted to the subsets $S_n\subset\mathbb{R}^4$. Importantly, whether or not a given value $\xi_2$ is in a subset $S_n$, and hence whether it is integrated over, depends on the current value of $\xi_1$ under the integral. In this precise sense we say that the values of $\xi_1$ and $\xi_2$ have been `tied' together in the integral over $\xi$. The same is true for $\overline{\xi}_1$ and $\overline{\xi}_2$ in the integral over $\overline{\xi}$. It is this tying together of the field values on the pair of spacelike points $1$ and $2$ that results in Bob's marginal measure depending on Alice's kick strength $s$, despite her being causally disconnected from Bob.

In a future project it would be interesting to study the double path integral picture further, and to determine what sort of functions of the field values on points $1$ and $2$ give rise to this acausal behaviour. This could lead to a more intuitive condition for whether a given operation is consistent with relativistic causality.

\section{Discussion}\label{sec:Discussion}

We have focused on one of the simplest types of operators in real scalar field theory, namely, smeared field operators, which, together with the identity, generate the entire algebra. We formulated and studied the general class of, what we call, Kraus update maps, which includes the case of projective, or ideal, measurements, as well as a host of other typical maps seen in QFT, e.g. unitary kicks. Using the machinery of functional calculus we were able to reduce the question of whether a given Kraus map enables a superluminal signal in Sorkin's scenario to a question of whether the functions involved in the Kraus map have a certain property (see Claim~\ref{claim:causality condition on kappa tilde}). We further illustrated the utility of this property in quickly determining the causal nature of a given Kraus map (Section~\ref{sec:General causality condition for a smeared field operations}). 

We then turned to the specific case of ideal measurements of smeared fields, and in Section~\ref{sec:The acausality of an ideal measurement of a smeared field} we showed our main result that, for any resolution, such a measurement enables a superluminal signal in Sorkin's scenario, and is thus not possible to realise in any experiment. This result casts doubt on the role of the projection postulate in relativistic QFT. If one wishes to retain the projection postulate as an idealised, operationalist description of measurement at the level of QFT, then one must evade one of the assumptions of our calculation (see Section~\ref{sec:Assumptions}). That said, we have only considered ideal measurements of smeared field operators, and it may be the case that ideal measurements of some other local operators in the theory are not acausal. This seems somewhat unlikely, however, since smeared field operators are the generators of all other local operators.

Returning to the point about evading our assumptions, this could be achieved by only considering setups in which Sorkin's scenario is not possible. As in Example~\ref{ex:Cylinder spacetime} this could be due to the spatial topology of the spacetime. This exemption does not cover the case of $1+1$ Minkowski spacetime (and most likely higher dimensions), however, as for any compact subset $K$ in which the measurement takes place Sorkin's scenario is always possible. Another way to avoid Sorkin's scenario is to consider a discrete spacetime and only allow measurements local to a single point (Example~\ref{ex:Single causal set element}). This option is somewhat appealing in its simplicity, and it highlights an interesting relationship between the projection postulate and the continuum or discrete nature of spacetime. On the other hand, one can still consider ideal measurements of smeared field operators covering a pair of spacelike points in a discrete spacetime (Fig.~\ref{fig:causet double path integral}), and this would enable a superluminal signal. Thus, we have to impose, by hand, the restriction that such measurements are not possible, while similar measurements on single points in spacetime are.

All this concern surrounding the projection postulate seems somewhat unnecessary, however, as there already exist perfectly causal and local update maps (in continuum and discrete spacetimes), e.g. the $L^2$-Kraus maps for smeared fields in Example~\ref{ex:L2-Kraus update map for a smeared field}, which we can use to describe measurements in the theory. These $L^2$-Kraus maps are just one family of candidates that could take the place of the projection postulate in QFT as an operationalist description of how a measurement affects the system. The choice of $L^2$ function in the given map could also take the place of the choice of resolution in the ideal measurement update map.

It is important to note, however, that the projection postulate still provides a description of measurement that accords with experimental evidence for non-relativistic quantum systems. Thus, even if we were to drop the projection postulate from QFT, we must still recover it at some effective level when taking a non-relativistic limit of the underlying QFT. An interesting future direction in this regard is then to study the non-relativistic limit of the causal and local $L^2$-Kraus update maps, for example, and determine whether any such maps give rise to the projection postulate in non-relativistic quantum mechanics. In addition to this, it would also be important to connect such update maps to scattering amplitudes (and their resultant probabilities), as this is our primary route to describing physical QFT experiments such as those at colliders.

\section*{Acknowledgements}

EA is supported by the STFC Consolidated Grant ST/W507519/1, and IJ is supported by a DIAS Scholarship.

\section{Appendices}\label{sec:Appendices}

\subsection{Unitary Action on p.v.m}\label{app:unitary_action_on_pvm}

\begin{claim}
Given a self-adjoint operator $X$, with dense domain $Dom(X)\subseteq \alg{F}$, and some unitary operator, $U$, the p.v.m's for $X$ and $U^{-1}X U$ satisfy:
\begin{equation}\label{eq:app_unitary_action_on_pvm}
    P^{U^{-1}XU}(\cdot) = U^{-1}P^X(\cdot ) U \; .
\end{equation}
\end{claim}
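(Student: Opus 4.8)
The plan is to reduce the identity to the \emph{uniqueness} clause of the spectral theorem: since a self-adjoint operator determines its projection-valued measure uniquely, it suffices to exhibit a p.v.m whose first moment reproduces $U^{-1}XU$. First I would define the candidate
\begin{equation}
    Q(\bor{B}) = U^{-1}P^X(\bor{B})U \; , \qquad \bor{B}\in\BorR \; ,
\end{equation}
and verify that $Q$ satisfies the defining properties of a p.v.m in Properties~\ref{prop:pvm}; then I would show its first moment equals $U^{-1}XU$, so that uniqueness forces $Q=P^{U^{-1}XU}$.

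Checking that $Q$ is a p.v.m is routine and uses only the unitarity of $U$. Each $Q(\bor{B})$ is again an orthogonal projector, since $Q(\bor{B})^{\dagger} = U^{-1}P^X(\bor{B})^{\dagger}U = U^{-1}P^X(\bor{B})U = Q(\bor{B})$ and $Q(\bor{B})^2 = U^{-1}P^X(\bor{B})UU^{-1}P^X(\bor{B})U = Q(\bor{B})$. Properties (i) and (ii) follow from $U^{-1}\mathds{1}U = \mathds{1}$ and $U^{-1}0U = 0$; multiplicativity (iii) follows by inserting $UU^{-1}=\mathds{1}$ between the two projectors and using the corresponding property of $P^X$; and countable additivity (iv) follows because conjugation by the bounded operator $U$ is continuous in the strong operator topology, so it commutes with the strongly convergent sum $\sum_n P^X(\bor{B}_n)$.

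It then remains to identify $Q$ as \emph{the} spectral measure of $U^{-1}XU$. For this I would compute the first moment of $Q$ through its transition elements: for $\ket{\psi}$ and $\ket{\chi}$ in the appropriate domain,
\begin{align}
    \bra{\psi} \left( \int_{\mathbb{R}}\lambda\, dQ(\lambda)\right)\ket{\chi} & = \int_{\mathbb{R}}\lambda\, d\bra{\psi} Q(\lambda)\ket{\chi} \nonumber
    \\
    & = \int_{\mathbb{R}}\lambda\, d\bra{U\psi} P^X(\lambda)\ket{U\chi} \nonumber
    \\
    & = \bra{U\psi}X\ket{U\chi} \nonumber
    \\
    & = \bra{\psi} U^{-1}XU\ket{\chi} \; ,
\end{align}
where the third line uses the spectral theorem~\eqref{eq:spectral_thm_A} for $X$. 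Hence $\int_{\mathbb{R}}\lambda\, dQ(\lambda) = U^{-1}XU$, and by uniqueness of the spectral measure we conclude $Q(\cdot) = P^{U^{-1}XU}(\cdot)$, which is~\eqref{eq:app_unitary_action_on_pvm}.

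The main obstacle, and the one place requiring care, is the unboundedness of $X$: every manipulation above must respect operator domains. Here unitarity does the work again. The operator $U^{-1}XU$ is self-adjoint on $U^{-1}Dom(X)$, being unitarily equivalent to the self-adjoint $X$, and the natural domain of $\int_{\mathbb{R}}\lambda\, dQ(\lambda)$ --- the set of $\ket{\chi}$ with $\int_{\mathbb{R}}\lambda^2\, d\bra{\chi} Q(\lambda)\ket{\chi} = \int_{\mathbb{R}}\lambda^2\, d\bra{U\chi}P^X(\lambda)\ket{U\chi} < \infty$ --- is precisely the set of $\ket{\chi}$ with $\ket{U\chi}\in Dom(X)$, i.e. $U^{-1}Dom(X)$. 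The two domains therefore coincide, so the moment computation is justified on a dense domain, completing the argument.
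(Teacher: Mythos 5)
Your proposal is correct and follows essentially the same route as the paper's proof: both observe that $U^{-1}P^X(\cdot)U$ is a p.v.m, identify the self-adjoint operator it generates as $U^{-1}XU$ by computing transition elements $\bra{\psi}\,\cdot\,\ket{\chi}$ against the conjugated measure, and conclude via the spectral theorem's one-to-one correspondence, with the same observation that the relevant domain is $U^{-1}\mathrm{Dom}(X)$. Your write-up merely makes explicit the p.v.m verification and the uniqueness step that the paper states in passing.
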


\begin{proof}
We first note that $U^{-1}P^X(\cdot ) U$ satisfies all the requirements of a p.v.m. The Spectral Theorem then tells us there exists a corresponding self-adjoint operator, $Y$, with this p.v.m, i.e. $P^{Y}(\cdot) = U^{-1}P^X(\cdot ) U$. 

For any states $\ket{\psi},\ket{\chi}\in\alg{F}$, we get the following relationship for the associated complex measures:
\begin{align}
    \mu^{Y}_{\psi , \chi} (\cdot ) & = \bra{\psi} P^Y (\cdot ) \ket{\chi} \nonumber
    \\
    & = \bra{\psi} U^{-1} P^X (\cdot ) U \ket{\chi} \nonumber
    \\
    & = \mu^{X}_{U\psi , U\chi} (\cdot ) \; .
\end{align}
With this we can determine $Y$ in terms of $X$ and $U$ explicitly. Using these measures we find
\begin{align}\label{eq:determining_Y}
    \bra{\psi}Y \ket{\chi} & = \int_{\mathbb{R}} \lambda \, d\mu^{Y}_{\psi , \chi}(\lambda ) \nonumber
    \\
    & = \int_{\mathbb{R}} \lambda \, d\mu^{X}_{U\psi , U\chi}(\lambda) \nonumber
    \\
    & = \bra{\psi}U^{-1}X U \ket{\chi} \; ,
\end{align}
and hence $Y= U^{-1}X U$, which, as $P^{Y}(\cdot) = U^{-1}P^X(\cdot ) U$, further implies our desired result in~\eqref{eq:app_unitary_action_on_pvm}.

\end{proof}

Note that, for the expression in the last line of~\eqref{eq:determining_Y} to make sense, we need $\ket{\psi},\ket{\chi}\in U^{-1}\text{Dom}(X)$. This sets the domain of $Y$ as $\text{Dom}(Y) = U^{-1}\text{Dom}(X)$.

\subsection{Smeared Field Unitary Action on Functions of Another Smeared Field}\label{app:unitary_kick_on_pvm_phi_f}

\begin{claim}
Given two real-valued test functions $f$ and $g$, the p.v.m for $\phi(f)$ satisfies
\begin{equation}\label{eq:app_unitary_kick_on_pvm_phi_f}
    e^{-i\phi(g)}P^{\phi(f)}(\bor{B} )e^{i\phi(g)} = P^{\phi(f) - \Delta(f,g)}(\bor{B}) = P^{\phi(f)}(\bor{B} + \Delta(f,g)) \; ,
\end{equation}
for any $\bor{B}\in\BorR$.
\end{claim}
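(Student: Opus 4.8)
The plan is to reduce the claim to the general spectral-conjugation identity already established in Appendix~\ref{app:unitary_action_on_pvm}, namely $U^{-1}P^{X}(\cdot)U = P^{U^{-1}XU}(\cdot)$, applied with $X=\phi(f)$ and $U=e^{i\phi(g)}$ (so that $U^{-1}=e^{-i\phi(g)}$, with $U$ unitary since $\phi(g)$ is self-adjoint). With this choice, \eqref{eq:unitary_on_pvm} immediately gives $e^{-i\phi(g)}P^{\phi(f)}(\bor{B})e^{i\phi(g)} = P^{\,e^{-i\phi(g)}\phi(f)e^{i\phi(g)}}(\bor{B})$, so the entire content of the first equality is the computation of the conjugated operator $e^{-i\phi(g)}\phi(f)e^{i\phi(g)}$.

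First I would compute this conjugation via the Hadamard lemma, $e^{X}Ae^{-X}=A+[X,A]+\tfrac{1}{2!}[X,[X,A]]+\cdots$, with $A=\phi(f)$ and $X=-i\phi(g)$. The covariant commutation relation (Properties~\ref{prop:smeared_fields}(iv)) gives $[-i\phi(g),\phi(f)]=-i[\phi(g),\phi(f)]=-i\cdot i\Delta(g,f)\mathds{1}=-\Delta(f,g)\mathds{1}$, where I have used the antisymmetry $\Delta(g,f)=-\Delta(f,g)$ of the Pauli--Jordan function. Since this first commutator is a \emph{central} element (a multiple of $\mathds{1}$), every higher nested commutator vanishes and the series truncates after one term, yielding $e^{-i\phi(g)}\phi(f)e^{i\phi(g)}=\phi(f)-\Delta(f,g)\mathds{1}$. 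Substituting this into the expression above establishes the first equality, $e^{-i\phi(g)}P^{\phi(f)}(\bor{B})e^{i\phi(g)}=P^{\phi(f)-\Delta(f,g)}(\bor{B})$.

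For the second equality I would prove the general spectral-shift identity $P^{A-c}(\bor{B})=P^{A}(\bor{B}+c)$ for any self-adjoint $A$ and real constant $c$, then specialise to $A=\phi(f)$, $c=\Delta(f,g)$. Using the functional-calculus characterisation of spectral projectors as indicator functions (Example~\ref{example:Indicator functions and projectors}, i.e.\ $P^{A}(\bor{B})=1_{\bor{B}}(A)$ as in \eqref{eq:pvm_as_indicator_function}) together with the composition rule $h(A-c)=\tilde{h}(A)$ for $\tilde{h}(\lambda)=h(\lambda-c)$, I would take $h=1_{\bor{B}}$ and note that $1_{\bor{B}}(\lambda-c)=1_{\bor{B}+c}(\lambda)$, since $\lambda-c\in\bor{B}\Leftrightarrow\lambda\in\bor{B}+c$. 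Hence $P^{A-c}(\bor{B})=1_{\bor{B}}(A-c)=1_{\bor{B}+c}(A)=P^{A}(\bor{B}+c)$, which is exactly the claimed second equality.

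The main obstacle is rigour for \emph{unbounded} operators: the Hadamard expansion is a priori formal, and one must ensure the manipulations are valid on a suitable dense invariant domain (for instance the finite-particle subspace, or the span of states reachable from $\gs$ by smeared fields). The cleanest rigorous route, which I would adopt to sidestep convergence questions, is to define $A(s)=e^{-is\phi(g)}\phi(f)e^{is\phi(g)}$ on such a domain, differentiate to obtain $A'(s)=e^{-is\phi(g)}\bigl(-i[\phi(g),\phi(f)]\bigr)e^{is\phi(g)}=-\Delta(f,g)\mathds{1}$, and integrate to get $A(1)=\phi(f)-\Delta(f,g)\mathds{1}$. The self-adjointness of $\phi(f)-\Delta(f,g)\mathds{1}$ together with the domain identity $\mathrm{Dom}(U^{-1}\phi(f)U)=U^{-1}\mathrm{Dom}(\phi(f))$ noted after \eqref{eq:determining_Y} then makes the application of \eqref{eq:unitary_on_pvm} fully legitimate.
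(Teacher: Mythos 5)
Your proposal is correct and matches the paper's own proof in all essentials: both reduce the first equality to the conjugation identity of Appendix~\ref{app:unitary_action_on_pvm} together with the computation $e^{-i\phi(g)}\phi(f)e^{i\phi(g)}=\phi(f)-\Delta(f,g)\mathds{1}$, and both obtain the second equality from the indicator-function shift $1_{\bor{B}}(\lambda-c)=1_{\bor{B}+c}(\lambda)$ via functional calculus, exactly as in~\eqref{eq:pvm_as_indicator_function}. The only inessential difference is how the operator shift is derived --- the paper applies the Baker--Campbell--Hausdorff relation to the exponentials $e^{is\phi(f)}$, $e^{i\phi(g)}$ and differentiates at $s=0$, whereas you use the Hadamard expansion truncated at first order (with the same differentiation argument offered as your rigorous fallback); both rest on the commutator being the same central element.
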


\begin{proof}
From the Baker–Campbell–Hausdorff formula, e.g. $e^Xe^Y  = e^{X+Y + \frac{1}{2}[X,Y]+...}$, we find that
\begin{equation}
    e^{is\phi(f)}e^{i\phi(g)} = e^{is\phi(f)+i\phi(g)-\frac{1}{2}s[\phi(f),\phi(g)]}=e^{is\phi(f)+i\phi(g)-\frac{i}{2}s\Delta(f,g)} \; ,
\end{equation}
where $f$ and $g$ are real-valued test functions, and $s\in\mathbb{R}$. With this we can further show that
\begin{equation}
    e^{-i\phi(g)}e^{is\phi(f)}e^{i\phi(g)} = e^{-is\Delta(f,g)}e^{is\phi(f)} \; .
\end{equation}
By differentiating with respect to $s$ we can then show that
\begin{align}
    e^{-i\phi(g)}\phi(f)e^{i\phi(g)} & = - i \frac{\partial}{\partial s}\left( e^{-i\phi(g)}e^{is\phi(f)}e^{i\phi(g)} \right) \Big|_{s=0} \nonumber
    \\
    & = - i \frac{\partial}{\partial s}\left( e^{-is\Delta(f,g)}e^{is\phi(f)} \right) \Big|_{s=0} \nonumber
    \\
    & = -i \left( i\phi(f) - i \Delta(f,g)) \right) \nonumber
    \\
    & = \phi(f) - \Delta(f,g) \; .
\end{align}

Now, using~\eqref{eq:app_unitary_action_on_pvm} with the unitary $U = e^{i\phi(g)}$ and the self-adjoint operator $X=\phi(f)$, we then have
\begin{equation}
    e^{-i\phi(g)}P^{\phi(f)}(\cdot )e^{i\phi(g)} = P^{e^{-i\phi(g)}\phi(f)e^{i\phi(g)}}(\cdot) = P^{\phi(f) - \Delta(f,g)}(\cdot) \; .
\end{equation}
Given some $\bor{B}\in\BorR$, and using~\eqref{eq:pvm_as_indicator_function}, we can write the far rhs as
\begin{equation}
    P^{\phi(f) - \Delta(f,g)}(\bor{B}) = 1_{\bor{B}}(\phi(f)-\Delta(f,g)) \; .
\end{equation}
As a function, $1_{\bor{B}} : \mathbb{R}\rightarrow \mathbb{R}$ satisfies $1_{\bor{B}}(\lambda - c) = 1_{\bor{B}+c}(\lambda)$ for any $c\in\mathbb{R}$, and hence the analogous operator statement also holds through functional calculus. Namely,
\begin{align}
    1_{\bor{B}}(\phi(f)-\Delta(f,g)) & = \int_{\mathbb{R}} 1_{\bor{B}}(\lambda - \Delta(f,g)) dP^{\phi(f)}(\lambda) \nonumber
    \\
    & = \int_{\mathbb{R}} 1_{\bor{B}+\Delta(f,g)}(\lambda) dP^{\phi(f)}(\lambda) \nonumber
    \\
    & = 1_{\bor{B}+\Delta(f,g)}(\phi(f)) \; .
\end{align}
Using~\eqref{eq:pvm_as_indicator_function} we can equivalently write $1_{\bor{B}+\Delta(f,g)}(\phi(f)) = P^{\phi(f)}(\bor{B}+\Delta(f,g))$, and thus we get our desired result in~\eqref{eq:app_unitary_kick_on_pvm_phi_f}.
\end{proof}

\begin{claim}
Given two real-valued test functions $f$ and $g$, and some function $\zeta:\mathbb{R}\rightarrow\mathbb{C}$ for which $\zeta(\phi(f))$ has dense domain $\text{Dom}(\zeta(\phi(f)))\subseteq \alg{F}$, we have the operator equation
\begin{equation}\label{eq:app_unitary_kick_on_zeta_phi_f}
    e^{-i\phi(g)}\zeta(\phi(f))e^{i\phi(g)} = \zeta(\phi(f) - \Delta(f,g)) \; ,
\end{equation}
on the domain $e^{-i\phi(g)}\text{Dom}(\zeta(\phi(f)))\subseteq \alg{F}$.
\end{claim}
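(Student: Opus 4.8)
The plan is to deduce this operator equation from the p.v.m result \eqref{eq:app_unitary_kick_on_pvm_phi_f} of the previous claim, together with the general fact that functional calculus is covariant under unitary conjugation. Writing $U = e^{i\phi(g)}$, the statement to prove is $U^{-1}\zeta(\phi(f))U = \zeta(\phi(f)-\Delta(f,g))$, and the previous claim already supplies $U^{-1}P^{\phi(f)}(\cdot)U = P^{\phi(f)-\Delta(f,g)}(\cdot)$. So the essential point reduces to the lemma that, for any self-adjoint $A$ and any unitary $U$, one has $U^{-1}\zeta(A)U = \zeta(U^{-1}AU)$ whenever $\zeta(A)$ is defined.

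First I would establish this lemma directly from the relation \eqref{eq:unitary_on_pvm}, namely $U^{-1}P^A(\cdot)U = P^{U^{-1}AU}(\cdot)$, the idea being that conjugation by $U$ commutes with the spectral integral. For states $\ket{\psi},\ket{\chi}$ in the appropriate domain, the transition element of $\zeta(U^{-1}AU)$ is computed against the measure $\mu^{U^{-1}AU}_{\psi,\chi}(\cdot) = \bra{\psi}P^{U^{-1}AU}(\cdot)\ket{\chi} = \bra{U\psi}P^{A}(\cdot)\ket{U\chi} = \mu^{A}_{U\psi,U\chi}(\cdot)$, where I used \eqref{eq:unitary_on_pvm} and $\bra{\psi}U^{-1} = \bra{U\psi}$. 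Hence
\begin{equation}
\bra{\psi}\zeta(U^{-1}AU)\ket{\chi} = \int_{\mathbb{R}}\zeta(\lambda)\,d\mu^{A}_{U\psi,U\chi}(\lambda) = \bra{U\psi}\zeta(A)\ket{U\chi} = \bra{\psi}U^{-1}\zeta(A)U\ket{\chi},
\end{equation}
and since this holds for all admissible $\ket{\psi},\ket{\chi}$ the two operators coincide. Substituting $A=\phi(f)$ and $U=e^{i\phi(g)}$, and recalling from the previous claim that $e^{-i\phi(g)}\phi(f)e^{i\phi(g)} = \phi(f)-\Delta(f,g)$, immediately yields \eqref{eq:app_unitary_kick_on_zeta_phi_f}.

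The main obstacle is purely the domain bookkeeping, since $\zeta(\phi(f))$ is generally unbounded. I would restrict attention to $\ket{\chi}\in e^{-i\phi(g)}\text{Dom}(\zeta(\phi(f)))$, so that $\ket{U\chi}=e^{i\phi(g)}\ket{\chi}\in\text{Dom}(\zeta(\phi(f)))$ and the integral defining the transition element converges; this is exactly the domain quoted in the claim, and it is consistent with the domain identity $\text{Dom}(U^{-1}XU)=U^{-1}\text{Dom}(X)$ noted after \eqref{eq:determining_Y}. One should also check that $\zeta$ is square-integrable against $\mu^{\phi(f)}_{U\psi,U\chi}$ precisely when it is square-integrable against $\mu^{\phi(f)-\Delta(f,g)}_{\psi,\chi}$, which follows from the measure identity above, so that both sides of \eqref{eq:app_unitary_kick_on_zeta_phi_f} share the same natural domain. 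Everything else is a routine consequence of the spectral calculus set up in Section~\ref{sec:The spectral theorem and functional calculus}.
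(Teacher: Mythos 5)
Your proof is correct, and it rests on the same machinery as the paper's --- spectral measures of unitarily shifted states together with the p.v.m.\ results of the appendix --- but it is organised differently. You first prove the general covariance lemma $U^{-1}\zeta(A)U = \zeta(U^{-1}AU)$ for an arbitrary unitary $U$ and self-adjoint $A$, directly from \eqref{eq:unitary_on_pvm}, and then specialise via the identification $e^{-i\phi(g)}\phi(f)e^{i\phi(g)} = \phi(f)-\Delta(f,g)$; in effect you use the middle expression $P^{\phi(f)-\Delta(f,g)}(\bor{B})$ of \eqref{eq:app_unitary_kick_on_pvm_phi_f}. The paper instead computes inline with the right-most expression, $P^{\phi(f)}(\bor{B}+\Delta(f,g))$: it shows $\mu^{\phi(f)}_{e^{i\phi(g)}\psi,\, e^{i\phi(g)}\chi}(\bor{B}) = \mu^{\phi(f)}_{\psi,\chi}(\bor{B}+\Delta(f,g))$ and then performs the change of variables $\lambda' = \lambda + \Delta(f,g)$ in the spectral integral, so that $\zeta(\phi(f)-\Delta(f,g))$ is read off concretely as the function $\lambda\mapsto\zeta(\lambda-\Delta(f,g))$ applied to $\phi(f)$. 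What your route buys is generality and economy: the covariance lemma is a reusable statement about functional calculus, and no change of variables is needed because you never translate back to the spectral measure of $\phi(f)$ itself (note only that the operator identity you quote is established inside the proof of the previous claim rather than in its statement, though it also follows from the p.v.m.\ identity). What the paper's route buys is explicitness: the meaning of the right-hand side of \eqref{eq:app_unitary_kick_on_zeta_phi_f} as a function of $\phi(f)$ shifted by the scalar $\Delta(f,g)$ appears directly in the computation, rather than being mediated by identifying the two operators $e^{-i\phi(g)}\phi(f)e^{i\phi(g)}$ and $\phi(f)-\Delta(f,g)$ through their common p.v.m. Your domain bookkeeping --- restricting to $e^{-i\phi(g)}\text{Dom}(\zeta(\phi(f)))$, consistent with the remark after \eqref{eq:determining_Y} --- matches the paper's exactly.
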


\begin{proof}
For any two vectors $\ket{\psi},\ket{\chi}\in e^{-i\phi(g)}\text{Dom}(\zeta(\phi(f)))\subseteq\alg{F}$, the following inner product is well defined:
\begin{equation}
    \bra{\psi}e^{-i\phi(g)}\zeta(\phi(f))e^{i\phi(g)}\ket{\chi} = \bra{e^{i\phi(g)}\psi}\zeta(\phi(f))\ket{e^{i\phi(g)}\chi} \; .
\end{equation}
Through functional calculus we can rewrite this as
\begin{equation}\label{eq:unitary kick on zeta of phi f integral}
    \bra{e^{i\phi(g)}\psi}\zeta(\phi(f))\ket{e^{i\phi(g)}\chi} = \int_{\mathbb{R}}\zeta(\lambda) d\mu^{\phi(f)}_{e^{i\phi(g)}\psi, e^{i\phi(g)}\chi}(\lambda) \; ,
\end{equation}
where, for any $\bor{B}\in\BorR$, the complex measure on the rhs is given by
\begin{align}
    \mu^{\phi(f)}_{e^{i\phi(g)}\psi, e^{i\phi(g)}\chi}(\bor{B}) & = \bra{e^{i\phi(g)}\psi}P^{\phi(f)}(\bor{B})\ket{e^{i\phi(g)}\chi} \nonumber
    \\
    & = \bra{\psi}e^{-i\phi(g)}P^{\phi(f)}(\bor{B})e^{i\phi(g)}\ket{\chi} \nonumber
    \\
    & = \bra{\psi}P^{\phi(f)}(\bor{B}+\Delta(f,g))\ket{\chi} \nonumber
    \\
    & = \mu^{\phi(f)}_{\psi, \chi}(\bor{B}+\Delta(f,g)) \; ,
\end{align}
using~\eqref{eq:app_unitary_kick_on_pvm_phi_f} to get line 3. Thus, we can rewrite~\eqref{eq:unitary kick on zeta of phi f integral} as
\begin{align}
    \bra{e^{i\phi(g)}\psi}\zeta(\phi(f))\ket{e^{i\phi(g)}\chi} & = \int_{\mathbb{R}}\zeta(\lambda) d\mu^{\phi(f)}_{e^{i\phi(g)}\psi, e^{i\phi(g)}\chi}(\lambda) \nonumber
    \\
    & = \int_{\mathbb{R}}\zeta(\lambda) d\mu^{\phi(f)}_{\psi,\chi}(\lambda +\Delta(f,g)) \nonumber
    \\
    & = \int_{\mathbb{R}}\zeta(\lambda' - \Delta(f,g)) d\mu^{\phi(f)}_{\psi,\chi}(\lambda') \nonumber
    \\
    & = \bra{\psi}\zeta(\phi(f) - \Delta(f,g))\ket{\chi} \; ,
\end{align}
where we have changed variables to $\lambda' = \lambda +\Delta(f,g)$ from line 2 to 3. In summary, we have the equality
\begin{equation}
    \bra{\psi}e^{-i\phi(g)}\zeta(\phi(f))e^{i\phi(g)}\ket{\chi} = \bra{\psi}\zeta(\phi(f) - \Delta(f,g))\ket{\chi} \; ,
\end{equation}
for any $\ket{\psi},\ket{\chi}\in e^{-i\phi(g)}\text{Dom}(\zeta(\phi(f)))\subseteq\alg{F}$, and thus the operator equation $e^{-i\phi(g)}\zeta(\phi(f))e^{i\phi(g)} = \zeta(\phi(f) - \Delta(f,g))$ on this domain.

\end{proof}

\subsection{Existence of a Density for the Complex Measure}\label{app:absolute_continuity}

Defining the complex measure $\mu^{\phi(f)}_{\Omega , e^{it\phi(g)}\Omega}(\cdot ) = \gsb P^{\phi(f)}(\cdot ) e^{it \phi(g)}\gs$, where $g$ is a real-valued test function, we show the following:
\begin{claim}
A Lebesgue integrable complex density, $q$, exists for the complex measure $\mu^{\phi(f)}_{\Omega , e^{it\phi(g)}\Omega}$, such that under the integral we can write $d\mu^{\phi(f)}_{\Omega , e^{it\phi(g)}\Omega}(\lambda ) = q(\lambda) d\lambda$.
\end{claim}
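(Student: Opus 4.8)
The plan is to show that the complex measure $\mu^{\phi(f)}_{\Omega , e^{it\phi(g)}\Omega}$ is absolutely continuous with respect to the Lebesgue measure on $\mathbb{R}$, and then invoke the Radon--Nikodym theorem to extract the density $q$. The key point is that absolute continuity of this \emph{off-diagonal} (complex) measure can be controlled by the two associated \emph{diagonal} (non-negative) measures via an operator Cauchy--Schwarz inequality. Specifically, I would first record that both diagonal measures are already known to be absolutely continuous: the measure $\mu^{\phi(f)}_{\Omega,\Omega}$ has the explicit Gaussian density in~\eqref{eq:p Gaussian function}, while for the state $e^{it\phi(g)}\gs$ the shift property~\eqref{eq:unitary_kick_on_pvm_phi_f} (with $g$ rescaled by $t$, using linearity of $\Delta(f,\cdot)$) gives
\begin{equation}
    \mu^{\phi(f)}_{e^{it\phi(g)}\Omega , e^{it\phi(g)}\Omega}(\bor{B}) = \gsb e^{-it\phi(g)}P^{\phi(f)}(\bor{B})e^{it\phi(g)}\gs = \mu^{\phi(f)}_{\Omega,\Omega}(\bor{B} + t\Delta(f,g)) \; ,
\end{equation}
which is just the same Gaussian measure translated, and hence also absolutely continuous.

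Next I would establish the Cauchy--Schwarz bound. Using $P^{\phi(f)}(\bor{B})^2 = P^{\phi(f)}(\bor{B})$ and its self-adjointness, for any $\bor{B}\in\BorR$ one has
\begin{align}
    |\mu^{\phi(f)}_{\Omega , e^{it\phi(g)}\Omega}(\bor{B})| & = |\gsb P^{\phi(f)}(\bor{B})\, e^{it\phi(g)}\gs| = |\braket{P^{\phi(f)}(\bor{B})\Omega \, | \, P^{\phi(f)}(\bor{B})e^{it\phi(g)}\Omega}| \nonumber
    \\
    & \leq \norm{P^{\phi(f)}(\bor{B})\gs}\,\norm{P^{\phi(f)}(\bor{B})e^{it\phi(g)}\gs} = \sqrt{\mu^{\phi(f)}_{\Omega,\Omega}(\bor{B})}\,\sqrt{\mu^{\phi(f)}_{e^{it\phi(g)}\Omega , e^{it\phi(g)}\Omega}(\bor{B})} \; .
\end{align}
I would then promote this to a statement about the total variation $|\mu^{\phi(f)}_{\Omega , e^{it\phi(g)}\Omega}|$: applying the inequality to each piece of a measurable partition $\lbrace \bor{B}_j\rbrace$ of a set $\bor{B}$ and then using the discrete Cauchy--Schwarz inequality on the sum yields $|\mu^{\phi(f)}_{\Omega , e^{it\phi(g)}\Omega}|(\bor{B}) \leq \sqrt{\mu^{\phi(f)}_{\Omega,\Omega}(\bor{B})}\,\sqrt{\mu^{\phi(f)}_{e^{it\phi(g)}\Omega , e^{it\phi(g)}\Omega}(\bor{B})}$. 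Consequently, whenever $\bor{B}$ has Lebesgue measure zero, both diagonal measures vanish on $\bor{B}$ and therefore so does the total variation of our complex measure.

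Finally, this shows $|\mu^{\phi(f)}_{\Omega , e^{it\phi(g)}\Omega}| \ll \mathrm{Leb}$, and since the measure is finite (its total variation is bounded by $\norm{\gs}\,\norm{e^{it\phi(g)}\gs}=1$) and the Lebesgue measure is $\sigma$-finite, the Radon--Nikodym theorem furnishes a density $q\in L^1(\mathbb{R}, d\lambda)$ with $d\mu^{\phi(f)}_{\Omega , e^{it\phi(g)}\Omega}(\lambda) = q(\lambda)\,d\lambda$, as claimed. The only genuinely delicate step is the passage from the pointwise (single-set) Cauchy--Schwarz bound to the total-variation statement, since absolute continuity of a \emph{complex} measure is properly phrased in terms of $|\mu|$; the partition argument above is what makes this rigorous, and everything else is standard measure theory. (In the main text one can alternatively read off $q$ explicitly from~\eqref{eq:complex_density}; the present claim guarantees \emph{a priori} that such an integrable density exists.)
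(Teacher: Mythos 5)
Your proof is correct, and it reaches the key point --- absolute continuity of the off-diagonal complex measure --- by a genuinely different route from the paper's Appendix~\ref{app:absolute_continuity}. The paper uses the polarisation identity~\eqref{eq:measure_polarisation_identity} to write $\mu^{\phi(f)}_{\Omega , e^{it\phi(g)}\Omega}$ as a complex combination of four \emph{non-negative} measures $\mu^{\phi(f)}_{a,b}$ built from the vectors $(a+b\,e^{it\phi(g)})\gs$, and shows each of these vanishes on a Lebesgue-null $\bor{B}$ by proving that the vector $P^{\phi(f)}(\bor{B})(a+b\,e^{it\phi(g)})\gs$ is itself zero: the shift identity~\eqref{eq:app_unitary_kick_on_pvm_phi_f} splits it into $a\,P^{\phi(f)}(\bor{B})\gs + b\,e^{it\phi(g)}P^{\phi(f)}(\bor{B}+t\Delta(f,g))\gs$, and both pieces have vanishing norm by the Gaussian form of $\mu^{\phi(f)}_{\Omega,\Omega}$; the scalar Radon--Nikodym theorem is then applied to each non-negative $\mu^{\phi(f)}_{a,b}$ separately, and $q$ is the corresponding combination of the four densities $p_{a,b}$. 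You instead bound the complex measure directly via the projector Cauchy--Schwarz inequality against the two diagonal measures, identify the second diagonal measure as a translate of the Gaussian one (using the same shift identity), promote the single-set bound to the total variation by a partition argument, and invoke the complex Radon--Nikodym theorem once. The essential ingredients are identical in both proofs --- equation~\eqref{eq:app_unitary_kick_on_pvm_phi_f} together with the absolute continuity of $\mu^{\phi(f)}_{\Omega,\Omega}$ --- so this is a repackaging rather than a new idea, but each version has a small advantage: the paper's polarisation ensures that only non-negative measures ever appear, so total variation never needs to be discussed, while your version must handle the total-variation bookkeeping but applies Radon--Nikodym only once and yields the clean quantitative bound $|\mu^{\phi(f)}_{\Omega , e^{it\phi(g)}\Omega}|(\bor{B})\leq \sqrt{\mu^{\phi(f)}_{\Omega,\Omega}(\bor{B})}\sqrt{\mu^{\phi(f)}_{e^{it\phi(g)}\Omega,e^{it\phi(g)}\Omega}(\bor{B})}$ as a by-product. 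One small simplification for your write-up: the step you flag as genuinely delicate can be bypassed, since for a complex measure the condition that $\mu(\bor{B})=0$ for every Lebesgue-null $\bor{B}$ already implies that the total variation is absolutely continuous with respect to Lebesgue measure --- every measurable subset of a null set is null, so the supremum defining $|\mu|(\bor{B})$ is a supremum of sums of zeros. Your partition argument is correct, just not strictly necessary.
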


\begin{proof}
Through the polarisation identity we can write the complex measure $\mu^{\phi(f)}_{\Omega , e^{it\phi(g)}\Omega}$ in terms of the real (and non-negative) measures
\begin{equation}
    \mu^{\phi(f)}_{a ,b }(\cdot) = \gsb (a+b e^{it\phi(g)})^{\dagger} P^{\phi(f)}(\cdot ) (a+b e^{it\phi(g)}) \gs \; ,
\end{equation}
where $a=\pm 1$ and $b=\pm 1 , \pm i$. Specifically,
\begin{equation}\label{eq:measure_polarisation_identity}
    \mu^{\phi(f)}_{\Omega , e^{it\phi(g)}\Omega} = \frac{1}{4}\left( \mu^{\phi(f)}_{+1 ,+1 } -\mu^{\phi(f)}_{+1 ,-1 } -i \mu^{\phi(f)}_{+1 ,+i } +i \mu^{\phi(f)}_{+1 ,-i } \right) \; .
\end{equation}
The existence of a Lebesgue integrable complex density $q$ for $\mu^{\phi(f)}_{\Omega , e^{it\phi(g)}\Omega}$ is then implied by the existence of a Lebesgue integrable density, $p_{a,b}$, for the measure $\mu^{\phi(f)}_{a ,b }$ (for any $a,b\in\mathbb{C}$). By the Radon–Nikodym theorem, the existence of such a density is implied by the absolute continuity of $\mu^{\phi(f)}_{a ,b }$ with respect to the Lebesgue measure, which we will now prove.

Absolute continuity with respect to the Lebesgue measure means that, if $\bor{B}\in\BorR$ has Lebesgue measure $0$, then $\mu^{\phi(f)}_{a ,b }(\bor{B})=0$ (for any $a,b\in\mathbb{C}$). We first note that
\begin{equation}
    P^{\phi(f)}(\bor{B} ) (a+b e^{it\phi(g)}) \gs = a \, P^{\phi(f)}(\bor{B})\gs + b \, e^{it\phi(g)}P^{\phi(f)}(\bor{B}+t\Delta(f,g))\gs \; ,
\end{equation}
where we have used the fact that $P^{\phi(f)}(\bor{B})e^{it\phi(g)} = e^{it\phi(g)}P^{\phi(f)}(\bor{B}+t\Delta(f,g))$, which follows from~\eqref{eq:app_unitary_kick_on_pvm_phi_f}. Defining $\ket{\psi_a} = a P^{\phi(f)}(\bor{B})\gs$ and $\ket{\psi_b} = b P^{\phi(f)}(\bor{B}+t\Delta(f,g))\gs$ we can then write
\begin{equation}
    P^{\phi(f)}(\bor{B} ) (a+b e^{it\phi(g)}) \gs = \ket{\psi_a} + e^{it\phi(g)}\ket{\psi_b} \; .
\end{equation}
We further note that
\begin{align}
    \norm{\psi_a }^2 & = \braket{\psi_a | \psi_a } \nonumber
    \\
    & = |a|^2 \gsb P^{\phi(f)}(\bor{B})\gs  \nonumber
    \\
    & =|a|^2  \mu^{\phi(f)}_{\Omega ,\Omega}(\bor{B}) \nonumber
    \\
    & =  0 \; , 
\end{align}
which follows from the absolute continuity of $\mu^{\phi(f)}_{\Omega ,\Omega}$ and the fact that $\bor{B}$ has Lebesgue measure $0$. Similarly, 
\begin{align}
    \norm{\psi_b }^2 & = \braket{\psi_b | \psi_b } \nonumber
    \\
    & = |b|^2 \gsb P^{\phi(f)}(\bor{B}+t\Delta(f,g))\gs  \nonumber
    \\
    & =|b|^2  \mu^{\phi(f)}_{\Omega ,\Omega}(\bor{B}+t\Delta(f,g)) \nonumber
    \\
    & =  0 \; , 
\end{align} 
since $\bor{B}+t\Delta(f,g)$ also has Lebesgue measure $0$.

Thus, $\ket{\psi_a}$ and $\ket{\psi_b}$ are both simply the zero vector, and so $P^{\phi(f)}(\bor{B} ) (a+b e^{it\phi(g)}) \gs = \ket{\psi_a} + e^{it\phi(g)}\ket{\psi_b}$ is the zero vector too. Thus, 
\begin{equation}
    \mu^{\phi(f)}_{a ,b }(\bor{B}) = \gsb (a+b e^{it\phi(g)})^{\dagger} P^{\phi(f)}(\bor{B} ) (a+b e^{it\phi(g)}) \gs = 0 \; ,
\end{equation}
as it is simply the inner product of $(a+b e^{it\phi(g)}) \gs$ and the zero vector. Since we now have our desired absolute continuity of $\mu^{\phi(f)}_{a ,b }$ (for any $a,b\in\mathbb{C}$), we know that via the Radon–Nikodym theorem there exists an associated Lebesgue integrable density, $p_{a,b}$, and by~\eqref{eq:measure_polarisation_identity} there therefore exists a Lebesgue integrable complex density, $q$, for the complex measure $\mu^{\phi(f)}_{\Omega , e^{it\phi(g)}\Omega}$. Specifically,
\begin{equation}
    q = \frac{1}{4}\left( p_{+1 ,+1 } -p_{+1 ,-1 } -i p_{+1 ,+i } +i p_{+1 ,-i } \right) \; .
\end{equation}
\end{proof}

\subsection{Sorkin Scenario Results}\label{app:Sorkin scenario results}

Consider a globally hyperbolic (continuum) spacetime $(\spctm{M},\metric)$ and some classical solution, or mode, $\varphi\in \text{Sol}(\spctm{M})$ (a real-valued smooth solutions of compact spatial support).

For clarity in what follows we call a compact subset $K\subset \spctm{M}$ a \textit{lab} for $\varphi$ if there exists a test function $f\in C^{\infty}_0(\spctm{M})$ with $\supp f \subseteq K$, and such that $\varphi = \Delta f$. We call such an $f$ a \textit{lab source}.

It is useful to note that, given some $g\in C^{\infty}_0(\spctm{M})$, 
\begin{equation}
    \Delta(g,f) = \int_{\spctm{M}\times \spctm{M}} dx\, dy \; g(x) \Delta(x,y) f(y) = \int_{\spctm{M}} dx\; g(x) \varphi(x) \; ,
\end{equation}
and so the value $\Delta(g,f) = - \Delta(f,g)$ can also be computed by integrating the test function $g$ against the solution $\varphi$ generated by $f$.

\begin{definition}[Sorkin scenario]\label{def:Sorkin scenario}
    Given a mode $\varphi \in \text{Sol}(\spctm{M})$ and a corresponding lab $K$, we say a \textit{Sorkin scenario} exists for the mode-lab pair, $(\varphi , K)$, if there exists two test functions $h,g\in C^{\infty}_0(\spctm{M} ; \mathbb{R})$ of mutually spacelike support, and such that $\supp h\subset K^-$ and $\supp g\subset K^+$, and such that $\Delta(f,h)\neq 0$ and $\Delta(f,g)\neq 0$ for any lab source $f$ for $\varphi$.
\end{definition}

\begin{claim}\label{claim:Sorkin scenario equivalence 1}
For a given mode $\varphi\in \text{Sol}(\spctm{M})$ and corresponding lab $K$, a Sorkin scenario exists iff there are two mutually spacelike points $x_{\pm}\in K^{\pm}\cap \supp \varphi$. 
\end{claim}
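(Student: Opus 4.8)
The plan is to first reduce the Sorkin-scenario condition to a statement purely about $\varphi$. The crucial observation is the identity recorded just above the claim: for any lab source $f$ and any $g\in C^{\infty}_0(\spctm{M})$ one has $\Delta(f,g) = -\Delta(g,f) = -\int_{\spctm{M}}dx\, g(x)\varphi(x)$, which depends only on the mode $\varphi$ and on $g$, and \emph{not} on the particular lab source $f$ chosen. Hence the clause ``for any lab source $f$'' in Definition~\ref{def:Sorkin scenario} is automatically handled, and a Sorkin scenario exists iff there are real test functions $h,g$ of mutually spacelike support, with $\supp h\subset K^-$ and $\supp g\subset K^+$, such that $\int_{\spctm{M}} h\varphi\neq 0$ and $\int_{\spctm{M}} g\varphi\neq 0$. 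I would then prove both directions against this reformulation.

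For the ($\Rightarrow$) direction I would argue as follows. If $\int g\varphi\neq 0$ then the product $g\varphi$ is not identically zero, so there is a point $x_+$ with $g(x_+)\neq 0$ and $\varphi(x_+)\neq 0$; since $x_+\in\{g\neq 0\}\subseteq\supp g\subset K^+$ and $x_+\in\{\varphi\neq 0\}\subseteq\supp\varphi$, we have $x_+\in K^+\cap\supp\varphi$. The identical argument applied to $h$ produces $x_-\in K^-\cap\supp\varphi$. Finally, since $x_+\in\supp g$ and $x_-\in\supp h$ and these two supports are by hypothesis mutually spacelike, $x_+$ and $x_-$ are mutually spacelike, as required.

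For the ($\Leftarrow$) direction I would fatten the two points into test functions. Given mutually spacelike $x_{\pm}\in K^{\pm}\cap\supp\varphi$, I first invoke that in a globally hyperbolic spacetime the causal sets $J^{\pm}$ of a point are closed, so the set of spacelike-separated pairs is open; hence there exist open neighbourhoods $U_{\pm}\ni x_{\pm}$ with $U_+$ entirely spacelike to $U_-$, and, shrinking them if necessary (using that $K^{\pm}$ are open as complements of the closed sets $J^{\mp}(K)$), with $U_{\pm}\subset K^{\pm}$. Because $x_{\pm}\in\supp\varphi=\overline{\{\varphi\neq 0\}}$, each $U_{\pm}$ contains a point $y_{\pm}$ with $\varphi(y_{\pm})\neq 0$; by continuity $\varphi$ keeps a constant sign on some small ball $B_{\pm}\subset U_{\pm}$ about $y_{\pm}$. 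Choosing $g$ and $h$ to be nonnegative bump functions supported in $B_+$ and $B_-$ respectively then yields $\int g\varphi\neq 0$ and $\int h\varphi\neq 0$, with $\supp g\subset K^+$, $\supp h\subset K^-$ and $\supp g$, $\supp h$ mutually spacelike, producing a Sorkin scenario.

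The routine parts --- existence of bump functions, constancy of sign by continuity --- are standard; the one step deserving care, and the main obstacle, is the simultaneous satisfaction of all three support constraints in the ($\Leftarrow$) direction. This rests on the \emph{openness} of the spacelike-separation relation, i.e. that two spacelike points can be thickened to mutually spacelike open sets, which is exactly where global hyperbolicity (closedness of $J^{\pm}$) is used; once this is in hand, all the constraints become open conditions that can be met by shrinking the neighbourhoods.
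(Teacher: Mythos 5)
Your proof is correct and follows essentially the same route as the paper's: both directions reduce $\Delta(f,h)$, $\Delta(f,g)$ to integrals of $\varphi$ against the test functions, extract points where $\varphi\neq 0$ from the non-vanishing integrals for one implication, and for the other use openness of the spacelike relation together with continuity of $\varphi$ to thicken the points $x_{\pm}$ into mutually spacelike neighbourhoods supporting non-negative bump functions. Your version is, if anything, slightly more careful than the paper's in explicitly shrinking the neighbourhoods to lie inside the open sets $K^{\pm}$.
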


\begin{proof}
\textbf{if}. By assumption there exist the two mutually spacelike points $x_{\pm}\in K^{\pm}\cap \supp\varphi$. Now, given $x_{\pm}\in \supp \varphi$, we know that for any neighbourhoods $N_{\pm}\ni x_{\pm}$, there are points $y_{\pm}\in N_{\pm}$ at which $\varphi(y_{\pm}) \neq 0$. More specifically, if $\varphi(x_{\pm})\neq 0$ we can simply take $y_{\pm} = x_{\pm}$. It may be the case, however, that $\varphi(x_{\pm}) = 0$ (even though $x_{\pm}\in \supp\varphi$, as $x_{\pm}$ could lie on the boundary of $\supp\varphi$), in which case there will be points $y_{\pm}$ arbitrarily close to $x_{\pm}$ at which $\varphi(y_{\pm})\neq 0$. The open-ness of the set of points spacelike to $x_+$ (which includes $x_-$ by assumption) ensures we can also pick $y_-$ spacelike to $x_+$. We can similarly pick $y_+$ spacelike to $y_-$. Next, by the smoothness of $\varphi$ we know there are two mutually spacelike neighbourhoods, $N_{\pm}\ni y_{\pm}$ where $\varphi|_{N_{\pm}}\neq 0$ . We can then pick any test function $h$ supported in $N_-$, and $g$ supported in $N_+$, and both non-negative. This ensures that integrating $\varphi$ against $h$ or $g$ will give a non-zero result. These integrals are the same as $\Delta(f,h)$ and $\Delta(f,g)$ respectively.

\textbf{only if}. By assumption there exist two such test functions, $h$ and $g$, as described above. As $\Delta(f,h)\neq 0$ we know that the integral of $\varphi$ against $h$ is non-zero, and thus there must be a point $x_-\in \supp h$ at which $\varphi(x_-)\neq 0$. Similarly, there must be a point $x_+\in \supp g$ at which $\varphi(x_+)\neq 0$. Since the supports of $h$ and $g$ are spacelike, we know that $x_-$ and $x_+$ are spacelike.
\end{proof}

The previous claim turns the statement of whether a Sorkin scenario exists for a given mode-lab pair to a statement about the modes functional form outside of the lab $K$. Next, we find a further equivalent statement in terms of the mode's Cauchy data. Given a Cauchy surface $\Sigma$, we let $C(\varphi,\Sigma) = (\varphi , \dot{\varphi})|_{\Sigma}$ denote the Cauchy data on $\Sigma$. Here the dot above denotes the derivative in the direction normal to $\Sigma$.

\begin{claim}\label{claim:Sorkin scenario equivalence 2}
There exist two ordered Cauchy surfaces $\Sigma_{\pm}$ (i.e. $I^{+}(\Sigma_-)\supset \Sigma_+$) such that the two subsets $\supp C(\varphi,\Sigma_-)$ and $\supp C(\varphi,\Sigma_+)$ are totally ordered (no pairs of mutually spacelike points $x_{\pm}\in \supp C(\varphi,\Sigma_{\pm})$ ) iff there exists a lab $K$ for $\varphi$ for which no Sorkin scenario is possible.
\end{claim}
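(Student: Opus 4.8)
The plan is to prove both directions by relating the Cauchy data supports $S_{\pm}:=\supp C(\varphi,\Sigma_{\pm})$ to the support of $\varphi$ inside the in/out-regions $K^{\mp}=\spctm{M}\setminus J^{\pm}(K)$, and then to invoke Claim~\ref{claim:Sorkin scenario equivalence 1}, which already reduces the (non)existence of a Sorkin scenario to the (non)existence of a mutually spacelike pair $x_{\pm}\in K^{\pm}\cap\supp\varphi$. Throughout I will use two standard facts about the Klein--Gordon operator: (a) finite propagation speed, namely that if the Cauchy data of $\varphi$ on a Cauchy surface $\Sigma$ is supported in $S\subseteq\Sigma$ then $\supp\varphi\cap J^{+}(\Sigma)\subseteq J^{+}(S)$ and $\supp\varphi\cap J^{-}(\Sigma)\subseteq J^{-}(S)$; and (b) the usual lab-source construction: for a smooth $\chi$ with $\chi\equiv0$ to the past of $\Sigma_-$ and $\chi\equiv1$ to the future of $\Sigma_+$, the function $f=P(\chi\varphi)$ is a compactly supported test function with $\supp f\subseteq\supp\varphi\cap J^{+}(\Sigma_-)\cap J^{-}(\Sigma_+)$ and $\Delta f=\pm\varphi$ (the sign being irrelevant, since $-f$ is an equally good source).

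For the ``only if'' direction (ordered surfaces $\Rightarrow$ good lab): given $\Sigma_+\subseteq I^{+}(\Sigma_-)$ with $S_-,S_+$ totally ordered, take the slab $\mathcal{S}=J^{+}(\Sigma_-)\cap J^{-}(\Sigma_+)$ and set $K:=\supp\varphi\cap\mathcal{S}$, which is compact (spatially compact support intersected with a time-bounded slab) and is a lab for $\varphi$ by (b). The key step is to show $\supp\varphi\cap K^{-}\subseteq J^{-}(S_-)$ and $\supp\varphi\cap K^{+}\subseteq J^{+}(S_+)$. Note first that $S_{\pm}\subseteq\Sigma_{\pm}\subseteq\mathcal{S}$ and $S_{\pm}\subseteq\supp\varphi$, hence $S_{\pm}\subseteq K$. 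A point of $\supp\varphi\cap K^{-}$ cannot lie in $\mathcal{S}$ (else it is in $K\subseteq J^{+}(K)$, contradicting $K^{-}=\spctm{M}\setminus J^{+}(K)$), nor strictly to the future of $\Sigma_+$ (there (a) places it in $J^{+}(S_+)\subseteq J^{+}(K)$, again a contradiction); so it lies in $J^{-}(\Sigma_-)$, where (a) places it in $J^{-}(S_-)$. With these inclusions in hand, a putative spacelike pair $x_-\in K^{-}\cap\supp\varphi$, $x_+\in K^{+}\cap\supp\varphi$ gives $x_-\in J^{-}(s_-)$ and $x_+\in J^{+}(s_+)$ for some $s_{\pm}\in S_{\pm}$. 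Since the surfaces are ordered, total ordering forces $s_-\in J^{-}(s_+)$, and transitivity of the causal order chains this to $x_-\in J^{-}(s_-)\subseteq J^{-}(s_+)\subseteq J^{-}(x_+)$, contradicting spacelikeness. By Claim~\ref{claim:Sorkin scenario equivalence 1}, no Sorkin scenario exists for $(\varphi,K)$.

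For the ``if'' direction (good lab $\Rightarrow$ ordered surfaces): given a lab $K$ with no Sorkin scenario, fix a Cauchy foliation $\{\Sigma_{\tau}\}$ of $\spctm{M}$ and choose $\Sigma_-=\Sigma_{\tau_-}$, $\Sigma_+=\Sigma_{\tau_+}$ with $\tau_-$ small and $\tau_+$ large enough that $K\subseteq I^{+}(\Sigma_-)$ and $K\subseteq I^{-}(\Sigma_+)$; compactness of $K$ guarantees such $\tau_{\pm}$ exist, and then $\tau_+>\tau_-$, so $\Sigma_+\subseteq I^{+}(\Sigma_-)$ automatically. The main step is to verify $\Sigma_-\subseteq K^{-}$ and $\Sigma_+\subseteq K^{+}$: from $K\subseteq I^{+}(\Sigma_-)$ one gets $J^{+}(K)\subseteq I^{+}(\Sigma_-)$ (as $I^{+}\circ J^{+}\subseteq I^{+}$), which is disjoint from the achronal $\Sigma_-$, so $\Sigma_-\cap J^{+}(K)=\varnothing$, i.e. $\Sigma_-\subseteq K^{-}$ (symmetrically for $\Sigma_+$). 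Since $S_{\pm}\subseteq\Sigma_{\pm}\cap\supp\varphi$, this yields $S_-\subseteq K^{-}\cap\supp\varphi$ and $S_+\subseteq K^{+}\cap\supp\varphi$. The no-Sorkin hypothesis, via Claim~\ref{claim:Sorkin scenario equivalence 1}, says there is no mutually spacelike pair in $(K^{-}\cap\supp\varphi)\times(K^{+}\cap\supp\varphi)$, hence none between $S_-$ and $S_+$, so $S_-,S_+$ are totally ordered as required.

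The step I expect to be the main obstacle is establishing the two support inclusions $\supp\varphi\cap K^{\mp}\subseteq J^{\mp}(S_{\mp})$ in the ``only if'' direction, since this is where the slab geometry, the causal definitions of $K^{\pm}$, and finite propagation speed must all be combined carefully (in particular checking $S_{\pm}\subseteq K$ and excluding the slab and wrong-side future/past regions). The remaining ingredients---the cutoff-function lab construction, the bracketing of $K$ by foliation surfaces, and the transitivity chaining---are routine once fact (a) is granted. I would also dispatch the trivial case $\varphi\equiv0$ separately, where $\supp\varphi=\varnothing$ and both sides hold vacuously.
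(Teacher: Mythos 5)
Your proof is correct and follows essentially the same route as the paper's: both directions reduce to Claim~\ref{claim:Sorkin scenario equivalence 1} via finite propagation speed, a slab-shaped lab $K \supseteq \supp\varphi\cap J^{+}(\Sigma_-)\cap J^{-}(\Sigma_+)$ for the ``ordered surfaces $\Rightarrow$ lab'' direction, and bracketing a given lab between Cauchy surfaces for the converse. The only notable difference is that for the converse the paper argues by contraposition with non-strict bracketing, which forces an extra perturbation step (moving the spacelike pair off the surfaces to points $y_{\pm}\in I^{\pm}(x_{\pm})$ with $\varphi(y_{\pm})\neq 0$, using smoothness of $\varphi$ and openness of the spacelike relation), whereas your direct argument with strict bracketing $K\subseteq I^{+}(\Sigma_-)\cap I^{-}(\Sigma_+)$ puts $\supp C(\varphi,\Sigma_{\pm})$ inside $K^{\pm}\cap\supp\varphi$ immediately and cleanly avoids that step.
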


\begin{proof}
\textbf{if}. Consider any lab $K$ for $\varphi$. As $K$ is compact we know there exists a pair of ordered Cauchy surfaces $\Sigma_{\pm}$ such that $K\subset J^+(\Sigma_-)\cap J^-(\Sigma_+)$. By assumption, there are no ordered Cauchy surfaces $\tilde{\Sigma}_{\pm}$ in which the supports of the respective Cauchy data are totally ordered. Thus, this also applies to our surfaces $\Sigma_{\pm}$. That is, there exist two points $x_{\pm}\in \supp C(\varphi,\Sigma_{\pm})$ which are mutually spacelike. By the open-ness of the set of points spacelike to a given point, and the smoothness of $\varphi$, one can then find two mutually spacelike points $y_{\pm}\in I^{\pm}(x_{\pm})$ for which $\varphi(y_{\pm})\neq 0$. By construction we also have that $y_{\pm}\in K^{\pm}\cap\supp\varphi$. By Claim~\ref{claim:Sorkin scenario equivalence 1} we know a Sorkin scenario then exists for the lab $K$. Thus, we have shown that the assumption that there are no ordered Cauchy surfaces in which the supports of the respective data are totally ordered implies that for \textit{every} lab $K$ there exists a Sorkin scenario.

\textbf{only if}. By assumption there exist two ordered Cauchy surfaces $\Sigma_{\pm}$ for which the supports of the corresponding Cauchy data are totally ordered, so that each point $x_- \in \supp C(\varphi,\Sigma_-)$ is timelike or null related to each point in $x_+ \in \supp C(\varphi,\Sigma_+)$. As $\supp\varphi \cap J^+(\Sigma_+)\subseteq J^+( \supp C(\varphi,\Sigma_+) )$, we know that $\supp\varphi \cap J^+(\Sigma_+)$ is contained in the future of each point $x_- \in \supp C(\varphi,\Sigma_-)$. Similarly, we know that $\supp\varphi \cap J^-(\Sigma_-)$ is contained in the past of each point $x_+ \in \supp C(\varphi,\Sigma_+)$. This further implies that $\supp\varphi \cap J^{\pm}(\Sigma_{\pm})$ is contained in the future/past of any point $y_{\mp}\in \supp\varphi \cap J^{\mp}(\Sigma_{\mp})$, and hence if we can find a pair of mutually spacelike points $z_{\pm}\in J^{\pm}(\Sigma_{\pm})$ (there may not be any such pair, e.g. a cylinder spacetime with $\Sigma_{\pm}$ sufficiently separated in time), we know that at least one of the points is not in $\supp\varphi$. Taking the lab $K \supset J^+(\Sigma_-)\cap J^-(\Sigma_+)\cap\supp\varphi$, we then know that there are no pairs of mutually spacelike points $z_{\pm}\in K^{\pm}$ which are both in $\supp\varphi$, and hence by Claim~\ref{claim:Sorkin scenario equivalence 1} we know that no Sorkin scenario exists for the lab $K$.

\end{proof}

\begin{claim}\label{claim:sorkin lab K exists for any mode}
    For any mode $\varphi\in\text{Sol}(\spctm{M})$, and any spacelike Cauchy surface $\Sigma$, there is a lab $K$ for $\varphi$, containing $\supp C(\varphi,\Sigma)$, for which a Sorkin scenario exists.
\end{claim}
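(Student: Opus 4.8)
The plan is to reduce the statement to Claim~\ref{claim:Sorkin scenario equivalence 1} and then to engineer a \emph{thin-slab} lab about $\Sigma$. By that claim it suffices to produce a lab $K$ for $\varphi$ with $\supp C(\varphi,\Sigma)\subseteq K$ such that $K^+\cap\supp\varphi$ and $K^-\cap\supp\varphi$ contain a pair of mutually spacelike points. First I would build $K$: using a Cauchy foliation through $\Sigma$ (by Bernal--S\'anchez any spacelike Cauchy surface embeds in such a foliation), pick Cauchy surfaces $\Sigma_{\pm}=\Sigma_{\pm\varepsilon}$ just to the future/past of $\Sigma=\Sigma_0$, and a cutoff $\chi$ equal to $1$ to the future of $\Sigma_+$ and $0$ to the past of $\Sigma_-$. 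Since $P\varphi=0$, the time-slice construction $f=P(\chi\varphi)$ produces terms supported only where $\nabla\chi\neq 0$, so $\supp f\subseteq\supp\varphi\cap(\text{slab between }\Sigma_- \text{ and }\Sigma_+)$ is compact, and the standard time-slice argument gives $\Delta f=\varphi$. Taking $K$ to be a compact box inside the slab that contains both $\supp f$ and $D:=\supp C(\varphi,\Sigma)$ makes $K$ a lab for $\varphi$ containing $D$.

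Two elementary facts then organise the rest. Because $\varphi=\Delta f$, we have $\supp\varphi\subseteq J(\supp f)\subseteq J(K)$, so no point of $\supp\varphi$ is spacelike to $K$; every support point lies genuinely to the future or the past of $K$. Moreover any point strictly to the future of $\Sigma_+$ lies in $K^+=\spctm{M}\setminus J^-(K)$: a future-directed causal curve issuing from such a point cannot reach $K\subseteq J^-(\Sigma_+)$, since that would require crossing $\Sigma_+$ to the past. Symmetrically, points strictly to the past of $\Sigma_-$ lie in $K^-$. Hence it is enough to exhibit $x_+\in\supp\varphi\cap I^+(\Sigma_+)$ and $x_-\in\supp\varphi\cap I^-(\Sigma_-)$ that are mutually spacelike.

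To find these I would exploit the spatial extent of the Cauchy data. Since $\varphi\neq 0$, at least one of $\varphi|_\Sigma$ and $\dot\varphi|_\Sigma$ is not identically zero, and the set on $\Sigma$ where it is non-zero is open and non-empty, hence contains two distinct points $p_+,p_-$; as $\Sigma$ is an acausal spacelike Cauchy surface, $p_+$ and $p_-$ are spacelike separated with a definite margin. If $\varphi|_\Sigma(p_\pm)\neq 0$ then $\varphi$ is non-zero throughout a spacetime neighbourhood of each $p_\pm$, in particular at points on either side of $\Sigma$; if only $\dot\varphi|_\Sigma(p_\pm)\neq 0$, a first-order Taylor expansion in the normal direction shows $\varphi$ is non-zero just off $\Sigma$ on both sides. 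Shrinking $\varepsilon$ so that $\Sigma_\pm$ lie inside these neighbourhoods, I obtain $x_+\in\supp\varphi$ just above $\Sigma_+$ near $p_+$ and $x_-\in\supp\varphi$ just below $\Sigma_-$ near $p_-$. By openness of the spacelike relation, taking the slab thin enough (so $x_\pm$ are close to $p_\pm$) keeps $x_+$ and $x_-$ spacelike, and Claim~\ref{claim:Sorkin scenario equivalence 1} then delivers a Sorkin scenario for $(\varphi,K)$.

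The main obstacle is precisely the coordination in this last step: one must guarantee that $\supp\varphi$ genuinely protrudes to the \emph{strict} future and \emph{strict} past of the lab near two spacelike-separated data points, while keeping the protruding points mutually spacelike. The case split on which component of the Cauchy data is non-zero is what rules out an accidental cancellation of $\varphi$ immediately off $\Sigma$, and the openness of the spacelike relation, together with the freedom to shrink $\varepsilon$, is what transfers the spacelike margin of $p_+,p_-$ to the thin-slab points $x_\pm$. For concreteness the existence of such a spacelike pair is verified directly for the massless and massive Klein--Gordon fields on $1+1$ Minkowski spacetime in Claims~\ref{claim:massless KG field sorkin scenario always} and~\ref{claim:massive KG field sorkin scenario always}.
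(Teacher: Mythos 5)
Your proposal is correct and follows essentially the same route as the paper's proof: reduce to Claim~\ref{claim:Sorkin scenario equivalence 1}, sandwich $\Sigma$ between nearby Cauchy surfaces $\Sigma_{\pm}$, source $\varphi$ inside the resulting thin slab via the time-slice construction, and exhibit a mutually spacelike pair $x_{\pm}\in K^{\pm}\cap\supp\varphi$ by perturbing two spacelike-separated points of $\supp C(\varphi,\Sigma)$ off the surface. Your write-up is in fact somewhat more explicit than the paper's (the explicit $f=P(\chi\varphi)$ construction, the case split on which component of the Cauchy data is non-vanishing, and the verification that $I^{\pm}(\Sigma_{\pm})\subseteq K^{\pm}$), where the paper compresses these steps into ``by smoothness of $\varphi$''.
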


\begin{proof}
    While not important to the proof, we first note that by Theorem 3.78 in~\cite{minguzzi2008causal} it is always possible to foliate a globally hyperbolic spacetime with smooth spacelike Cauchy hypersurfaces. Now, since $\Sigma$ is spacelike, any pair of distinct points $x,y\in \supp C(\varphi,\Sigma)$ are mutually spacelike. By smoothness of $\varphi$ we can then find a further pair of spacelike points, $x_+$ to the future of $x$, and $x_-$ to the past of $y$, such that $x_{\pm}\in\supp \varphi$. We can then find some pair of Cauchy surfaces, $\Sigma_{\pm}$, after and before $\Sigma$ respectively, and close enough to $\Sigma$ such that $x_{\pm}\in I^{\pm}(\Sigma_{\pm})$. One can further find some $f\in C^{\infty}_0(\spctm{M})$ supported in $I^+(\Sigma_-)\cap I^-(\Sigma_+)$ such that $\varphi = \Delta f$. Finally, we can then take any compact $K\subset J^+(\Sigma_-)\cap J^-(\Sigma_+)$ and containing $\supp f$. Such a $K$ is a lab for $\varphi$, and, by construction, there exists a pair of spacelike points $x_{\pm}\in K^{\pm}\cap\supp \varphi$, and so by Claim~\ref{claim:Sorkin scenario equivalence 1} we know there exists a Sorkin scenario for this mode-lab pair.
\end{proof}

As we show in Section~\ref{sec:The acausality of an ideal measurement of a smeared field}, the existence of a Sorkin scenario leads to the fact that an ideal measurement of the associated smeared field, $\phi(f)$, violates causality, and is thus impossible. The above claim, then, is essentially saying that for any given `reference frame' (defined by a foliation of the spacetime into spacelike Cauchy surfaces), and some point of (coordinate) time in that reference frame (defined by a given Cauchy surface in that foliation), there exists a designated lab subset which is spatially `wide' enough to measure the mode $\varphi$ (in the sense that is contains $\supp C(\varphi , \Sigma)$), but too `brief' in time (in the sense that the proof involves making $K$ shorter in time than some time scale related to the mode $\varphi$) for one to make an ideal measurement of the associated smeared field.

This seems physically reasonable. Another question that comes to mind is whether there exists, for the given mode $\varphi$, a lab $K$ for which no Sorkin scenario exists, and hence a lab in which an ideal measurement is in principle possible. Furthermore, one may wonder whether such a lab exists for all modes in a given spacetime. By Claim~\ref{claim:Sorkin scenario equivalence 2} such a lab exists for a mode $\varphi$ if the supports of $\varphi$'s Cauchy data on some pair of ordered Cauchy surfaces are totally ordered. Any mode not satisfying this property for any pair of ordered Cauchy surfaces is therefore impossible to measure (in an ideal measurement sense) anywhere in spacetime.

\begin{claim}\label{claim:massless KG field sorkin scenario always}
    For a massless real scalar field in $1+1$-Minkowski spacetime, denoted 
 by $\mathbb{R}^{1,1}$, a Sorkin scenario exists for every mode-lab pair.
\end{claim}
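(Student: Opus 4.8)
By Claim~\ref{claim:Sorkin scenario equivalence 1} it suffices to produce two mutually spacelike points $x_\pm\in K^\pm\cap\supp\varphi$ for an arbitrary nonzero mode $\varphi$ and any compact lab $K$. The plan is to exploit the fact that in $1+1$ dimensions a massless field splits cleanly into left- and right-movers, turning the whole question into an elementary statement about a single continuous function of one null coordinate. I would introduce null coordinates $u=x^1-x^0$ and $v=x^1+x^0$, in which $\metric = du\,dv$; hence two points are spacelike iff $\Delta u$ and $\Delta v$ have the \emph{same} sign, while the causal future and past of a point are the coordinate quadrants $\{u\le u_0,\,v\ge v_0\}$ and $\{u\ge u_0,\,v\le v_0\}$. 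Since $\Box\varphi=0$ becomes $\partial_u\partial_v\varphi=0$, I may write $\varphi(u,v)=R(u)+L(v)$ with $R,L$ smooth.

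\textbf{Consequences of compact support.} Spatial compactness of the Cauchy data forces $R'$ and $L'$ to be compactly supported (indeed $2R'=\partial_{x^1}\varphi-\partial_{x^0}\varphi$ and $2L'=\partial_{x^1}\varphi+\partial_{x^0}\varphi$ on a slice), so $R$ and $L$ are constant, equal to $r_\pm$ and $l_\pm$ say, outside a compact set, and compactness of $\varphi$ on the initial slice gives $l_\pm=-r_\pm$. Therefore, once $v$ lies above $\supp L'$ one has the \emph{exact} identity $\varphi(u,v)=R(u)-r_+$, and once $v$ lies below it, $\varphi(u,v)=R(u)-r_-$. Separately, compactness of $K$ means $K$ sits inside a coordinate box $[U_0,U_1]\times[V_0,V_1]$, so that $K^+\supseteq\{v>V_1\}$ and $K^-\supseteq\{v<V_0\}$: any point of $\supp\varphi$ with large enough (resp. small enough) $v$ automatically lands in the out- (resp. in-) region, whatever the lab.

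\textbf{The ordering lemma and the two points.} Because $\varphi\neq0$, at least one of $R,L$ is non-constant; by the reflection $x^1\mapsto-x^1$, i.e. $(u,v)\mapsto(-v,-u)$, which preserves both spacelikeness and the box structure, I may assume $R$ is non-constant. Put $A_+=\{u:R(u)\neq r_+\}$ and $A_-=\{u:R(u)\neq r_-\}$, which are open. The key step is the inequality $\sup A_+>\inf A_-$: if it failed, then $R=r_-$ to the left of $\inf A_-$ and $R=r_+$ to the right of $\sup A_+$ would abut, and continuity of $R$ would force $r_+=r_-$ and hence $R$ constant, a contradiction. Thus I can choose $u_1\in A_+$ and $u_2\in A_-$ with $u_1>u_2$. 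Finally I pick $v_1>V_1$ above $\supp L'$ and $v_2<V_0$ below it, and set $x_+=(u_1,v_1)$, $x_-=(u_2,v_2)$. By the exact identities, $\varphi(x_+)=R(u_1)-r_+\neq0$ and $\varphi(x_-)=R(u_2)-r_-\neq0$, so both lie in $\supp\varphi$; by the box inclusions they lie in $K^+$ and $K^-$ respectively; and since $\Delta u=u_1-u_2>0$ and $\Delta v=v_1-v_2>0$ share a sign, they are mutually spacelike. Claim~\ref{claim:Sorkin scenario equivalence 1} then delivers the Sorkin scenario.

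\textbf{Main obstacle.} The one genuine subtlety is that any two support points sharing a null coordinate are \emph{null}, not spacelike, so it is essential to displace both $u$ and $v$; what makes this possible while retaining nonvanishing asymptotic field values is precisely the ordering lemma $\sup A_+>\inf A_-$, whose sole input is the intermediate value theorem for the continuous function $R$. This is exactly where masslessness (the clean $R(u)+L(v)$ decomposition) and smoothness of $\varphi$ enter, and it explains why the massive case (Claim~\ref{claim:massive KG field sorkin scenario always}) and higher dimensions demand a different argument, the left/right-mover structure no longer being available there.
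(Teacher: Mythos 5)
Your proof is correct and follows essentially the same route as the paper's: the d'Alembert decomposition in null coordinates, asymptotic constancy of the two movers forced by compactly supported Cauchy data, a continuity argument showing the non-constant mover must deviate from its two asymptotic values at a suitably ordered pair of arguments, and finally two mutually spacelike support points in $K^{\pm}$ fed into Claim~\ref{claim:Sorkin scenario equivalence 1}. Your ordering lemma $\sup A_+ > \inf A_-$ is precisely the paper's statement $v_- < v_+$ (proved there by the same continuity argument), and your explicit bookkeeping of the constants $r_{\pm}$, $l_{\pm} = -r_{\pm}$ merely replaces the paper's two-step shifting of constants between $U$ and $V$; the mathematical substance is identical.
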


\begin{proof}
    As previously stated, to prove this claim we need to show that Sorkin scenarios are possible for every mode $\varphi\in\text{Sol}(\mathbb{R}^{1,1})$ satisfying the massless Klein-Gordon equation $\Box\varphi = 0$, and every lab $K$ in which $\varphi$ can be sourced.

    Changing to lightcone coordinates $v = t+x$ and $u = t-x$ we find that any smooth solution $\varphi(u,v)$ must satisfy $\partial_u \partial_v \varphi = 0$. By integrating this equation over $[0,u]\times [0,v]$ we can rearrange to get
    \begin{equation}
        \varphi(u,v) = \varphi(0,v) + \varphi(u,0) - \varphi(0,0) \; .
    \end{equation}
    By defining $V(v) := \varphi(0,v) - \varphi(0,0)$ and $U(u) := \varphi(u,0)$ we see that any solution $\varphi$ can be decomposed as $\varphi(u,v) = U(u) + V(v)$, i.e. as a sum of a function of $u$ and a function of $v$. Note that $U,V\in C^{\infty}(\mathbb{R})$, but they are not necessarily compactly supported. Since $\varphi$ is necessarily not a constant, we can always ensure (by swapping $u$ and $v$ coordinates if needed) that $V$ is necessarily not a constant function.

    As we are interested in spatially compact solutions we can further constrain $U$ and $V$. In $(t,x)$ coordinates a general solution is of the form $\varphi(t,x) = U(t-x) + V(t+x)$. If the intersection of $\supp \varphi$ with any spatial slice (e.g. any $t=\text{const}$ surface) is a compact subset of that spatial slice, then the Cauchy data for $\varphi$ on any $t=\text{const}$ slice must be compactly supported on that slice. Consider the slice $t=0$ and denote the Cauchy data as $\phi(x) := \varphi(0,x)$ and $\pi(x) := (\partial_t\varphi)(0,x)$. We then have $\phi(x) = V(x) + U(-x)$ and $\pi(x) = V'(x) + U'(-x)$.
    
    The Cauchy data on this slice is compactly supported, i.e. $\phi , \pi\in C^{\infty}_0 (\mathbb{R})$. Therefore, there exists some $R>0$ for which, for all $|x|\geq R$, $\phi(x) = \pi(x) = 0$. Therefore, for all $|x|\geq R$, we get i) $V(x) = - U(-x)$ and ii) $V'(x) = - U'(-x)$. Taking a derivative of i) at any point $|x| > R$ we get $V'(x) = U'(-x)$, which by ii) implies $U'(-x)=0$ for all $|x| > R$. That is, $U$ is constant for a large enough $|x|$, and thus, by i), so is $V$. Furthermore, from i) we see that the constants have to match (at opposite ends) between $U$ and $V$.

    Let $v_+$ be the smallest value such that $V(v)$ is constant for any $v>v_+$. Similarly, let $v_-$ be the largest value such $V(v)$ is constant for any $v<v_-$. Recall that $V$ is necessarily not a constant function. Thus, $v_- < v_+$. If $U$ is not a constant function we can similarly define $u_{\pm}$ for $U$, otherwise we set $u_{\pm} = 0$.
    
    Without changing $\varphi$ we can redefine $V$ and $U$ by adding some constant to $V$ and subtracting it from $U$, and thus, without loss of generality we can always choose $U$ and $V$ such that $U(u)=0$ for any $u>u_+$. i) then implies that $V(v) = 0$ for any $v<v_-$. From the smoothness of $V$, the fact that $V$ is necessarily not a constant function, and the definition of $v_-$, we know that there is then some interval $(v_- , v_{-+})$ (with $v_{-+} < v_+$) in which $V\neq 0$. Therefore, at the points $(u,v)$ for which $u>u_+$ and any $v\in (v_{-} , v_{-+})$, we know that $U(u)=0$ and $V(v)\neq 0$, and hence $\varphi(u,v)\neq 0$. Let $S_+$ denote the set of these points.

    We can repeat this argument again but with the alternative choice that $U(u) = 0$ for any $u<u_-$ (by adding some constant to $V$ and subtracting it from $U$). Thus $V(v) = 0$ for any $v>v_+$. We then know that there is some interval $(v_{+-},v_+)$ (with $v_{+-} > v_{-+}$) in which $V\neq 0$, and so at any point $(u,v)$ for which $u<u_-$ and $v\in (v_{+-},v_+)$ we have $\varphi(u,v)\neq 0$. Let $S_-$ denote the set of these points.

    By construction, the sets $S_+$ and $S_-$ are mutually spacelike, and have non-zero intersection with any constant time slice for large enough $|t|$. Thus, for any lab $K$ for $\varphi$, we know by the compactness of $K$ that we can find two time coordinates $t_{\pm}$ such that the slab $[t_- , t_+]\times \mathbb{R}$ strictly contains $K$. By the above argument we can then find a pair of mutually spacelike points $s_{\pm}\in S_{\pm}$ with time coordinates greater/smaller than $t_{\pm}$, and so $s_{\pm}\in K^{\pm}$. We also have that $s_{\pm}\in\supp\varphi$. By Claim~\ref{claim:Sorkin scenario equivalence 1} we then know a Sorkin scenario exists for any such lab $K$. Finally, we note that this whole argument applies to any mode $\varphi$.
\end{proof}

\begin{claim}\label{claim:massive KG field sorkin scenario always}
    For a real scalar field of mass $m > 0$ in $1+1$-Minkowski spacetime, denoted 
 by $\mathbb{R}^{1,1}$, a Sorkin scenario exists for every mode-lab pair.
\end{claim}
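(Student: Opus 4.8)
The plan is to reduce the statement, via Claim~\ref{claim:Sorkin scenario equivalence 1}, to a precise description of the support of the classical mode $\varphi$, and then to pin down that support using Fourier analysis. By Claim~\ref{claim:Sorkin scenario equivalence 1} it suffices to produce, for every lab $K$, two mutually spacelike points $x_\pm\in K^\pm\cap\supp\varphi$. Since $K$ is compact it lies in a time-slab $\{t_-\le t\le t_+\}$, so (recalling $K^+=\spctm{M}\setminus J^-(K)$ and $K^-=\spctm{M}\setminus J^+(K)$) any point on $\Sigma_T=\{t=T\}$ with $T>t_+$ lies in $K^+$, and any point on $\Sigma_{-T}$ with $T>-t_-$ lies in $K^-$. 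Thus I only need the far-future and far-past extremities of $\supp\varphi$. Writing the Cauchy data on $t=0$ as $\phi(x):=\varphi(0,x)$ and $\pi(x):=(\partial_t\varphi)(0,x)$, and letting $[a,b]$ be the convex hull of $\supp\phi\cup\supp\pi$ (with $a<b$, since nonzero $C^\infty_0$ data cannot be supported at a single point), I will show that the right endpoint of $\supp\varphi\cap\Sigma_t$ equals $b+t$ and the left endpoint equals $a-t$. Granting this, I take $x_+\in\supp\varphi\cap\Sigma_T$ with spatial coordinate in $(b+T-\delta,b+T)$ and $x_-\in\supp\varphi\cap\Sigma_{-T}$ with spatial coordinate in $(a-T,a-T+\delta)$; a direct interval computation shows that for $\delta<(b-a)/2$ these are mutually spacelike for every $T>0$, and taking $T$ large places them in $K^\pm$.

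The core task is locating the support edges, which I carry out with the Paley--Wiener--Schwartz theorem: for each $t$ the right endpoint of $\supp\varphi\cap\Sigma_t$ equals $\limsup_{\kappa\to+\infty}\kappa^{-1}\log|\hat\varphi(t,i\kappa)|$, where $\hat\varphi(t,k)=\int\varphi(t,x)e^{-ikx}\,dx$. Solving the massive equation in Fourier space gives $\hat\varphi(t,k)=\hat\phi(k)\cos(\omega_k t)+\hat\pi(k)\,\omega_k^{-1}\sin(\omega_k t)$ with $\omega_k=\sqrt{k^2+m^2}$. On the imaginary axis $\omega_{i\kappa}=i\beta$ with $\beta=\sqrt{\kappa^2-m^2}$, so for $t>0$ the dominant term as $\kappa\to+\infty$ is $\tfrac12 e^{\beta t}\beta^{-1}\Psi(\kappa)$, where $\Psi(\kappa)=\hat\pi(i\kappa)+\beta\,\hat\phi(i\kappa)$; since $\beta t=\kappa t+o(1)$ this yields right endpoint $t+\mathrm{type}_{+\infty}\Psi$, with $\mathrm{type}_{+\infty}$ the exponential rate along $+i\infty$. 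The edge is the light-cone value $b+t$ exactly when $\mathrm{type}_{+\infty}\Psi=b$; the only danger is a cancellation dropping this rate below $b$, which is precisely what happens in the massless case, where a pure left- or right-mover makes $\Psi$ vanish identically.

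The \emph{key step}, and the point where $m>0$ is essential, is excluding this cancellation. Consider the entire function $H(k)=\hat\pi(k)^2+(k^2+m^2)\hat\phi(k)^2$, whose inverse transform is $H^\vee=\pi*\pi+(m^2-\partial_x^2)(\phi*\phi)$, supported in $[2a,2b]$. On the imaginary axis $H(i\kappa)=\Psi(\kappa)\check\Psi(\kappa)$ with $\check\Psi(\kappa)=\hat\pi(i\kappa)-\beta\,\hat\phi(i\kappa)$, and $\mathrm{type}_{+\infty}\check\Psi\le b$ trivially; hence a cancellation $\mathrm{type}_{+\infty}\Psi<b$ would force $\mathrm{type}_{+\infty}H<2b$, i.e.\ $\sup\supp H^\vee<2b$. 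I will instead prove $\sup\supp H^\vee=2b$. When the right endpoints of $\supp\phi$ and $\supp\pi$ differ this is immediate; the delicate case is when they coincide, where the two terms could in principle cancel near $x=2b$. Passing to the endpoint germs $c_\phi(s)=\phi(b-s)$, $c_\pi(s)=\pi(b-s)$ (supported in $s\ge0$), such a cancellation requires $\hat c_\pi^2=-(\xi^2+m^2)\hat c_\phi^2$. This is impossible: the right-hand side has a zero of \emph{odd} order at $\xi=\pm im$ (supplied by the factor $k^2+m^2$, hence by $m>0$), whereas the left-hand side, a perfect square, has only even-order zeros. The exact analogue fails at $m=0$, where the factor $\xi^2$ permits $\hat c_\pi=\pm i\xi\,\hat c_\phi$ and genuine movers, which is exactly why the massless Claim~\ref{claim:massless KG field sorkin scenario always} needs a separate argument.

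Putting this together, $\sup\supp H^\vee=2b$ forces $\mathrm{type}_{+\infty}\Psi=b$, so the right edge sits on the light cone; the left edge, and both past edges, follow by the symmetries $x\mapsto-x$ and $t\mapsto-t$ (equivalently $\pi\mapsto-\pi$, which interchanges $\Psi$ and $\check\Psi$). This furnishes the mutually spacelike pair $x_\pm\in K^\pm\cap\supp\varphi$ for every lab $K$, and hence, by Claim~\ref{claim:Sorkin scenario equivalence 1}, a Sorkin scenario for every mode--lab pair. I expect the parity-of-zeros step to be the main obstacle to render fully rigorously, as it requires care with the endpoint germs and with the Paley--Wiener characterisation of their transforms; everything else is bookkeeping with the explicit Fourier solution and the support-function formula.
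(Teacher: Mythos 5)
Your geometric reduction via Claim~\ref{claim:Sorkin scenario equivalence 1} is fine, but the analytic core of your argument fails, and not merely for technical reasons: both of your central claims are false. First, the support statement itself is internally inconsistent. Your claim (including its $t\mapsto -t$ extension) says that the edge $E_R(t)=\sup\{x:(t,x)\in\supp\varphi\}$ satisfies $E_R(t)=E_R(s)+|t-s|$ for any base slice $s$; applying this with base slices $s=0$ and $s=1$ gives $E_R(1)=E_R(0)+1$ and $E_R(0)=E_R(1)+1$, a contradiction. Concretely, a ``focusing'' solution --- one whose Cauchy data at $t=1$ is compactly supported --- has support that shrinks on $0<t<1$: edges of massive solutions \emph{can} recede; what is special about $m>0$ is only that they cannot recede forever. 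Second, and correspondingly, your key step $\sup\supp H^\vee=2b$ is false, and the parity-of-zeros argument cannot be repaired. The cancellation you must exclude is a statement about germs at the endpoint: it only gives $\hat c_\pi^{\,2}+(\xi^2+m^2)\hat c_\phi^{\,2}=\hat R$ with $R$ supported in $[\delta,\infty)$, and odd-order zeros of $\xi^2+m^2$ at $\pm im$ say nothing once $\hat R\neq 0$. Worse, germ-level cancellation genuinely occurs for $m>0$: the distribution $w=\delta_0'-\tfrac{mI_1(ms)}{s}\theta(s)$, supported in $[0,\infty)$, satisfies $w*w=\delta_0''-m^2\delta_0$ (it is the operational square root $\sqrt{p^2-m^2}$ of Mikusi\'nski calculus; check via $\mathcal{L}[I_0(ms)\theta(s)]=(p^2-m^2)^{-1/2}$ and injectivity of the Laplace transform). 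Taking $\pi$ equal to $\pm(w*c_\phi)(b-\cdot)$ near $x=b$ and cutting off elsewhere yields smooth, compactly supported, nontrivial data with $b_\phi=b_\pi=b$ for which $\sup\supp H^\vee<2b$, and for which one of the four edges really does sit strictly inside the light cone. Your obstruction lives in the ring of entire functions, but the relevant ring is the convolution algebra of endpoint germs, where $\delta''-m^2\delta$ is a perfect square for \emph{every} $m$; local one-sided massive ``movers'' exist.

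What survives of your scheme is only the local fact (via Titchmarsh) that $c_\pi$ cannot equal both $+w*c_\phi$ and $-w*c_\phi$, so at each endpoint at most one of the future/past edges is deficient. That is not enough for the claim: it leaves open the configuration in which, say, both future edges are sharp but both past edges are deficient by more than $b-a$, in which case the data supports on $\Sigma_0$ and $\Sigma_{-T}$ are totally ordered and no spacelike pair exists between those slices. Endpoint germs control whether a deficit vanishes, not its size, so no purely local analysis can exclude this; excluding it is precisely the global content of the theorem. This is exactly where the paper's proof does something different: using Claim~\ref{claim:Sorkin scenario equivalence 2} it assumes totally ordered data supports on two ordered surfaces, sources the solution to the past, and exploits the analyticity in the null coordinate of the retarded kernel $\tfrac{1}{2}\theta(u-u')\theta(v-v')J_0\bigl(m\sqrt{2(u-u')(v-v')}\bigr)$ together with the identity theorem to conclude that $\varphi$ vanishes everywhere spacelike to a point, contradicting nontrivial smooth compactly supported data. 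The mass enters through that unique-continuation step, not through any algebraic obstruction at the support endpoints; to close your route you would need an analogous global ingredient, at which point the Fourier machinery becomes superfluous.
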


\begin{proof}
    Consider any mode $\varphi\in \text{Sol}(\mathbb{R}^{1,1})$ satisfying $P\varphi = ( \Box + m^2 )\varphi = 0$, where in lightcone coordinates, $u = t-x$ and $v = t+x$, we have $\Box = 2\partial_u \partial_v$. The retarded Green function in these coordinates is
    \begin{equation}\label{eq:massive Green function}
        G(u,v ; u' , v') = \frac{1}{2}\theta (u-u')\theta(v-v')J_0 (m \sqrt{2(u-u')(v-v')}) \; ,
    \end{equation}
    where $J_0$ is the 0-index Bessel function of the first kind, and $\theta$ is the Heaviside step function. One can verify that $(PG)(u,v;u',v') = \delta(u-u')\delta(v-v')$.

    To show that a Sorkin scenario exists for every lab $K$ for $\varphi$, we can, by Claim~\ref{claim:Sorkin scenario equivalence 2}, show that there do not exist two ordered Cauchy surfaces $\Sigma_{\pm}$ for which the supports of the corresponding Cauchy data are totally ordered. We can do this by way of contradiction.

    Assuming there are two such surfaces $\Sigma_{\pm}$ for which the supports $C_{\pm} = \supp C(\varphi , \Sigma_{\pm})$ are totally ordered (no pair of mutually spacelike points), we can then see that there exists a $v = \text{const}$ surface for which $v' \geq v$ for any $(u',v')\in C_+$, and $v' \leq v$ for any $(u',v')\in C_-$. There similarly exists a $u = \text{const}$ surface which is less/greater than or equal to the $u$ coordinate of any point in $C_{+/-}$. We can always shift the origin of our $(u,v)$ coordinates such that these two surfaces are $v=0$ and $u=0$ respectively. We then have that $C_{\pm}\subset J^{\pm}((0,0))$, where $(0,0)$ is the new origin of our $(u,v)$ coordinates.

    We next note that we can source the solution $\varphi$ with a test function $f$ entirely supported in $J^-((0,0))$, i.e. $u,v \leq 0$. To see this we first note that there exists some $s_-< 0$ for which $C_-$ is in the future of the $t=s_-$ surface (as $C_-$ is compact), which we denote here by $\tilde{\Sigma}_-$. We can then take any smooth function $\chi$ which is $1$ on and to the future of $\tilde{\Sigma}_-$, and is 0 to the past of some other $t=s'_-$ surface for $s'_- < s_-$. Setting $f = P( \chi \varphi)$ then gives us a smooth test function which satisfies $\varphi = \Delta f$, and which is compactly supported in $J^-(C_-)\cap J^-(\tilde{\Sigma}_-)$ (as the support of $\varphi$ to the past of $\tilde{\Sigma}_-$ is contained in the past of $C_-$). Since $C_-\subset J^-((0,0))$, then $J^-(C_-)\subseteq J^-((0,0))$, and hence so is $\supp f$ as desired.

    Now consider some $s_+ > 0$ for which the $t=s_+$ surface, denoted by $\tilde{\Sigma}_+$, has $C_+$ in its past. Since the support of $\varphi$ to the future of $\tilde{\Sigma}_+$ is contained in $J^+(C_+)$, which is itself contained in $J^+((0,0))$, i.e. $u,v\geq 0$, we know that $\varphi (u,v) = 0$ for any $(u,v)$ for which $u\leq 0$, and $v > 2s_+ - u$, or $v\leq 0$ and $u > 2s_+ - v$, i.e. any point outside the future lightcone of $C_+$ and later in time than $s_+$. Consider the case where $u\leq 0$ and $v > 2s_+ - u$, as the other will be identical. At any such point $(u,v)$ we get $0 = \varphi(u,v) = (\Delta f)(u,v) = (G f)(u,v)$, where $G$ is the retarded Green function in~\eqref{eq:massive Green function}. Writing this out as an integral we get
    \begin{align}
        0 & = \varphi(u,v)\\ \nonumber
        & = \frac{1}{2} \int_{-\infty}^u du' \int_{-\infty}^v dv' \, J_0 (m \sqrt{2(u-u')(v-v')}) f(u',v') 
        \\ \nonumber
        & = \frac{1}{2} \int_{-\infty}^u du' \int_{-\infty}^0 dv' \,  J_0 (m \sqrt{2(u-u')(v-v')}) f(u',v')  \; ,
    \end{align}
    where line 3 follows from the fact that $v>0$ and that $f$ is supported in the portion for which $u,v \leq 0$. We can then see that $\varphi(u,v)$ depends only on $v$ through the Bessel function under the integral, at least for $u\leq 0$ and $v > 2s_+ - u$.

    In fact, the rhs in the final line is analytic in $v$ for $v>0$, and hence by the \textit{identity theorem} for analytic function (i.e. any analytic function that vanishes in an interval must vanish everywhere), the fact that $\varphi(u,v) = 0$ for $v > 2s_+ - u$ (with $u\leq 0$) implies $\varphi(u,v) = 0$ for all $v>0$ (with $u\leq 0$).

    We can repeat the same argument for the case where $v\leq 0$ and $u > 2s_+ - v$, and similarly show that $\varphi(u,v) = 0$ for all $u>0$ and $v\leq 0$. We now have that $\varphi$ is zero at any point spacelike to the origin $(0,0)$. Thus, on the Cauchy surface $t=0$ the Cauchy data for $\varphi$ must be zero everywhere apart from on the origin. This contradicts the fact that the Cauchy data on any Cauchy surface must be a smooth compactly supported (and non-trivial) function.

\end{proof}

\subsection{Recovering $\text{tr}(\rho e^{it\phi(g)})$ from an $L^2$-Kraus Measurement}\label{app:recovering tr rho exp from L2 kraus}

Consider some normalised $L^2$ function $k:\mathbb{R}\rightarrow\mathbb{C}$, a smeared field $\phi(g)$ for some $g\in C_0^{\infty}(\spctm{S})$, and the associated $L^2$-Kraus update map $\mathcal{E}_{\phi(g),k}$ given in Example~\ref{ex:L2 kraus} as
\begin{equation}
    \mathcal{E}_{\phi(g),k}(X) = \int_{\mathbb{R}}d\gamma \, k(\phi(g)-\gamma)X k(\phi(g)-\gamma)^{\dagger} \; ,
\end{equation}
for some bounded operator $X\in\alg{B}$.

The map $\mathcal{E}_{\phi(g),k}$ can be interpreted as the update map arising from a non-selective measurement of the observable corresponding to the operator $\phi(g)$, with the particular function $k$ encoding the precise effect of the experiment on future statistics (we assume we know, or have control over, the form of $k$). Given some state $\rho$, the probability that the outcome of this measurement lands in some Borel set $\bor{B}\subset \mathbb{R}$ is given by
\begin{equation}
    \text{prob}(\bor{B}) = \text{tr}(\rho \mathcal{E}_{\phi(g),k,\bor{B}}(\mathds{1})) \; ,
\end{equation}
where we have defined the map
\begin{equation}
    \mathcal{E}_{\phi(g),k,\bor{B}}(X) = \int_{\bor{B}}d\gamma \, k(\phi(g)-\gamma)X k(\phi(g)-\gamma)^{\dagger} \; ,
\end{equation}
for any $X\in\alg{B}$. For any $\bor{B}\in\BorR$ we can write $\text{prob}(\bor{B})$ in terms of the probability density $\mathrm{p}$ as
\begin{equation}
    \text{prob}(\bor{B}) = \int_{\bor{B}}d\gamma \, \mathrm{p}(\gamma) \; .
\end{equation}
Here the explicit form of $\mathrm{p}$ is
\begin{equation}\label{eq:prob density k}
    \mathrm{p}(\gamma) = \text{tr}\left( \rho \, k(\phi(g)-\gamma) k(\phi(g)-\gamma)^{\dagger} \right) = \text{tr}\left( \rho \, \tilde{k}(\phi(g)-\gamma) \right) \; ,
\end{equation}
where we have defined $\tilde{k} = |k|^2$.

The value one measures in a given experiment can then be thought of as the realisation of a random variable $\mathtt{X}$ with probability density $\mathrm{p}$. Given some function $\eta:\mathbb{R}\rightarrow\mathbb{C}$, the expected value of the random variable $\eta(\mathtt{X})$ with respect to this probability density is then
\begin{equation}\label{eq:exp eta of X}
    \mu = \mathbb{E}(\eta(\mathtt{X})) = \int_{\mathbb{R}}d\gamma \, \eta(\gamma )\mathrm{p}(\gamma ) \; .
\end{equation}
This true expected value, $\mu$, can be approximated arbitrarily closely in the following sense.

Consider $N$ independent runs of the experiment with outcomes $(x_1,x_2, ... ,x_N)$. These outcomes are realisations of the independent and identically distributed random variables $\mathtt{X}_n$ for $n=1,...,N$. For each run ($n=1,...,N$) we can compute $y_n = \eta (x_n)$. Let $m_N = \frac{1}{N}\sum_{n=1}^N \eta(x_n)$ denote the sample mean of these values. This sample mean is the realisation of the random variable $\mathtt{M}_N = \frac{1}{N}\sum_{n=1}^N \eta(\mathtt{X}_n)$. Note that $\mathbb{E}(\mathtt{M}_N) = \mu$, and that $\text{Var}(\mathtt{M}_N) = \frac{\sigma^2}{N}$, where $\sigma^2 = \text{Var}(\eta(\mathtt{X})) = \mathbb{E}(\eta(\mathtt{X})^2) - \mu^2$. This expression for the variance of $\mathtt{M}_N$ follows from Bienaymé's identity.

Now, we would like $m_N$ to be arbitrarily close to the true value $\mu$, say $|m_N - \mu| \leq \epsilon$ for some $\epsilon >0$. Since $m_N$ is the realisation of the random variable $\mathtt{M}_N$, we cannot say this for certain. However, through Chebyshev's inequality we can ensure that, with a large enough number of trials $N$, we will only ever have $|m_N - \mu| \geq \epsilon$ with some arbitrarily small probability $\delta >0$. Specifically, Chebyshev's inequality says that
\begin{equation}\label{eq:Chebyshevs inequality}
    \text{prob}\left( |\mathtt{M}_N - \mu | \geq \epsilon \right) \leq \frac{\sigma^2}{\epsilon^2 N} \; .
\end{equation}
Given some $\delta>0$, if we want $\text{prob}\left( |\mathtt{M}_N - \mu | \geq \epsilon \right) \leq \delta$, then we simply need to do $N\geq \frac{\sigma^2}{\epsilon^2 \delta}$ trials.

Now that we have established that one can approximate $\mu = \mathbb{E}(\eta(\mathtt{X}))$ arbitrarily closely with enough trials (in a probabilistic sense), for some function $\eta$, let us determine how this can be used to recover $\text{tr}(\rho e^{it\phi(g)})$.

Recalling~\eqref{eq:prob density k} and~\eqref{eq:exp eta of X} we compute
\begin{align}\label{eq:relationship between eta and k}
    \mathbb{E}(\eta(\mathtt{X})) & = \int_{\mathbb{R}}d\gamma \, \eta(\gamma )\mathrm{p}(\gamma ) \nonumber
    \\
    & = \int_{\mathbb{R}}d\gamma \, \eta(\gamma ) \text{tr}\left( \rho \, \tilde{k}(\phi(g)-\gamma) \right) \nonumber
    \\
    & = \text{tr}\left( \rho \int_{\mathbb{R}}d\gamma \, \eta(\gamma ) \, \tilde{k}(\phi(g)-\gamma) \right) \nonumber
    \\
    & = \text{tr}\left( \rho \left( \eta \ast \tilde{k} \right) (\phi(g)) \right) \; ,
\end{align}
where $(\eta \ast \tilde{k})(\cdot )$ denotes the convolution of the functions $\eta$ and $\tilde{k}$, and we have used functional calculus to define the operator $\left( \eta \ast \tilde{k} \right) (\phi(g))$. Given the $L^2$-Kraus measurement of $\phi(g)$ with the function $k$,~\eqref{eq:relationship between eta and k} provides the explicit relationship between the expected value of some function ($\eta$ in this case) of the experimental outcomes and the `quantum' expectation value.

Given we want to recover $\text{tr}(\rho e^{it\phi(g)})$, we thus need to find the function $\eta$ such that $(\eta\ast \tilde{k})(\cdot) = e^{it(\cdot)}$, for some $t\in\mathbb{R}$. One can check that
\begin{equation}\label{eq:eta that does the job}
    \eta(\cdot) =\frac{e^{it(\cdot)}}{\sqrt{2\pi}\fwt{F}\lbrace \tilde{k} \rbrace (-t)} \;,
\end{equation}
does the job. Recall that $\fwt{F}\lbrace f\rbrace$ denotes the Fourier transform of $f$ (see~\eqref{eq:def FT}). 

In summary, to recover $\text{tr}(\rho e^{it\phi(g)})$ via $L^2$-Kraus measurements of $\phi(g)$ with the function $k$, we need to do $N$ trials, and for each outcome $x_n$ we compute $\eta(x_n)$, where $\eta$ is given explicitly in~\eqref{eq:eta that does the job}. We then compute the sample mean of the values $\eta(x_n)$, and, given $N$ is large enough (c.f.~\eqref{eq:Chebyshevs inequality}), we then know that our sample mean will be within $\pm\epsilon$ of the value $\text{tr}(\rho e^{it\phi(g)})$ with some probability $1-\delta$.

\subsection{Constant Weierstrass Transform Implies Constant Input}\label{app:Constant Weierstrass transform implies constant input}

\begin{claim}
Consider some bounded (measurable) function $\zeta:\mathbb{R}\times\mathbb{R}\rightarrow\mathbb{C}$. If $\fwt{W}\lbrace \zeta(\cdot , t) \rbrace (s+it)$ (for $t\in\mathbb{R}$) is constant in $s\in\mathbb{R}$, then $\zeta(\cdot , t)$ is an almost everywhere constant function.
\end{claim}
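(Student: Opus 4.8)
The plan is to reduce the claim to the injectivity of Gaussian convolution. First I would substitute $z = s+it$ into the definition~\eqref{eq:weierstrass_transform} and expand the quadratic exponent. Writing $(x-s-it)^2 = (x-s)^2 - 2it(x-s) - t^2$, the kernel factorises as $e^{-(x-s-it)^2/4} = e^{t^2/4}\,e^{-(x-s)^2/4}\,e^{it(x-s)/2}$, so that
\begin{equation}
\fwt{W}\lbrace \zeta(\cdot , t)\rbrace (s+it) = \frac{e^{t^2/4}\,e^{-its/2}}{\sqrt{4\pi}}\int_{\mathbb{R}}dx\, e^{-(x-s)^2/4}\,\psi(x) \; ,
\end{equation}
where $\psi(x) = e^{itx/2}\zeta(x,t)$ is again bounded and measurable. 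The remaining integral is the Gaussian convolution $(G\ast\psi)(s)$ with $G(y)=e^{-y^2/4}$, a fixed $L^1$ function. Hence the hypothesis that the left-hand side is constant in $s$ is exactly the deconvolution identity $(G\ast\psi)(s) = C\,e^{its/2}$ for all $s\in\mathbb{R}$ and some $C\in\mathbb{C}$, and the goal becomes to show this forces $\psi(x) = c\,e^{itx/2}$ almost everywhere, i.e. $\zeta(\cdot,t)=c$ a.e.

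Next I would deconvolve via the Fourier transform. Since $\psi\in L^\infty$ it defines a tempered distribution, and $G\in L^1$ has Fourier transform $\widehat{G}$ a nowhere-vanishing Gaussian (itself Schwartz, hence a multiplier on $\mathcal{S}'$). Applying $\fwt{F}$ to the identity and using the convolution theorem gives $\widehat{G}\,\widehat{\psi} = C'\,\delta_{t/2}$, the Fourier transform of $e^{its/2}$ being a Dirac mass at a single point (up to a constant). Because $\widehat{G}$ is smooth and nowhere zero, multiplication by $\widehat{G}$ is injective on $\mathcal{S}'$, and one checks directly that $\widehat{G}\cdot(B\delta_{t/2}) = B\,\widehat{G}(t/2)\,\delta_{t/2}$, so the only solution is $\widehat{\psi} = (C'/\widehat{G}(t/2))\,\delta_{t/2}$. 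Inverting, $\psi(x) = c\,e^{itx/2}$ a.e., which is the desired conclusion.

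The main obstacle is justifying this injectivity: $\zeta$ is merely bounded, not integrable, so $\widehat{\psi}$ is only a tempered distribution, and $1/\widehat{G}$ grows like $e^{\xi^2}$ and is \emph{not} a tempered multiplier, so one cannot naively divide. The fix is $C_c^\infty$ localisation. To prove that $\widehat{G}\,\widehat{\eta}=0$ (for $\widehat{\eta}\in\mathcal{S}'$) implies $\widehat{\eta}=0$, take any $\phi\in C_c^\infty$ and set $\varphi = \phi/\widehat{G}$; since $\widehat{G}$ is smooth and nonzero, $\varphi$ is smooth with the same compact support, so $\varphi\in C_c^\infty\subset\mathcal{S}$, and $\langle\widehat{\eta},\phi\rangle = \langle\widehat{\eta},\widehat{G}\varphi\rangle = \langle\widehat{G}\widehat{\eta},\varphi\rangle = 0$. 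As $C_c^\infty$ is dense in $\mathcal{S}$, this gives $\widehat{\eta}=0$; the same localisation handles the extraction of the Dirac mass above. An equivalent and perhaps more transparent packaging of the first reduction is to note that, for bounded $\zeta$, the integral in~\eqref{eq:weierstrass_transform} defines an \emph{entire} function of $z$ (the Gaussian decay dominates on every horizontal strip); being constant on the whole line $\operatorname{Im}z = t$, it is constant on all of $\mathbb{C}$ by the identity theorem, hence constant on $\mathbb{R}$, and one then applies injectivity of the real Weierstrass transform (the case $\widehat{G}\widehat{\eta}=0$) to conclude $\zeta(\cdot,t)$ is a.e. constant.
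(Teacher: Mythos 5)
Your proof is correct, and it takes a genuinely different route from the paper's. The paper's argument is complex-analytic and $L^2$-based: it justifies differentiation under the integral to show that $F(z)=\fwt{W}\lbrace\zeta(\cdot,t)\rbrace(z+it)$ is entire, concludes from constancy on the real axis (via vanishing Taylor coefficients --- the identity theorem in disguise, as your alternative packaging makes explicit) that $F$ is constant on all of $\mathbb{C}$, then evaluates on the imaginary axis to turn the statement into the vanishing of the Fourier transform of the $L^2$ function $e^{-x^2/4}\bigl(\zeta(\cdot,t)-a_0(t)\bigr)$, finishing with unitarity of $\fwt{F}$ on $L^2(\mathbb{R};\mathbb{C})$. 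You instead never leave the real line: the kernel factorisation converts the hypothesis into the deconvolution problem $(G\ast\psi)(s)=C\,e^{its/2}$ for $\psi\in L^{\infty}$, and everything is settled in $\mathcal{S}'$ by your injectivity lemma for multiplication by the nowhere-vanishing Gaussian $\widehat{G}$, whose $C_c^{\infty}$-localisation proof correctly sidesteps the fact that $1/\widehat{G}$ is not a tempered multiplier --- the one place where a naive division by $\widehat{G}$ would fail --- while the pointwise identity $\widehat{G}\cdot(B\delta_{t/2})=B\,\widehat{G}(t/2)\,\delta_{t/2}$ pins $\widehat{\psi}$ down to a single Dirac mass. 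What each approach buys: yours avoids complex analysis and the paper's explicit integral estimates entirely, at the price of invoking distribution-theoretic machinery (the $\mathcal{S}\ast\mathcal{S}'$ convolution theorem, density of $C_c^{\infty}$ in $\mathcal{S}$); the paper's route is longer but more elementary in its prerequisites, needing only dominated convergence, power series, and Plancherel. Sign and normalisation conventions for $\fwt{F}$ (hence the exact location of your Dirac mass) are immaterial to the structure, and your final step --- that equality of $L^{\infty}$ functions as tempered distributions implies almost-everywhere equality --- is standard, so there is no gap.
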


\begin{proof}
We write
\begin{equation}
    \fwt{W}\lbrace \zeta(\cdot , t) \rbrace (s+it) = \int_{\mathbb{R}}dx \, f(s,t,x) \; ,
\end{equation}
where $f(s,t,x) = \frac{1}{\sqrt{4\pi}}e^{-\frac{(x-s-it)^2}{4}}\zeta(x , t)$. Next, we note that, for any $z\in\mathbb{C}$, 
\begin{align}\label{eq:bounded absolute integral of f}
    \int_{\mathbb{R}}dx | f(z,t,x)| & = \int_{\mathbb{R}}dx \frac{1}{\sqrt{4\pi}}e^{\frac{1}{4}(t+\text{Im}(z))^2}  e^{-\frac{(x-\text{Re}(z))^2}{4}}|\zeta(x,t)| \nonumber
    \\
    & \leq e^{\frac{1}{4}(t+\text{Im}(z))^2}\norm{\zeta}_{\infty}\int_{\mathbb{R}}dx \frac{1}{\sqrt{4\pi}}  e^{-\frac{(x-\text{Re}(z))^2}{4}} \nonumber
    \\
    & = e^{\frac{1}{4}(t+\text{Im}(z))^2}\norm{\zeta}_{\infty} \nonumber
    \\
    & < \infty \; ,
\end{align}
where in line 2 we have used the fact that $\zeta$ is bounded. Thus, 
\begin{equation}
    \fwt{W}\lbrace \zeta(\cdot , t) \rbrace (z+it) = \int_{\mathbb{R}}dx \, f(z,t,x) \; ,
\end{equation}
is defined for any $z\in\mathbb{C}$. We write $F(z)= \fwt{W}\lbrace \zeta(\cdot , t) \rbrace (z+it)$ for convenience. By assumption, $F$ is constant along the real axis of the complex plane ($\text{Im}(z) = 0$).

We next show that $F$ is an entire function, i.e. it is holomorphic on the entire complex plane. To do this we need to show that $\partial_z F(z)$ is defined for all $z\in\mathbb{C}$. We have
\begin{equation}\label{eq:complex derive of F}
    \partial_z F(z) = \partial_z \left( \int_{\mathbb{R}}dx \, f(z,t,x) \right) \; .
\end{equation}
To move the derivative inside the integral we need to show that: \textbf{i)} $f(z,t,x)$ is integrable in $x$ for all $z\in\mathbb{C}$, \textbf{ii)} for almost every $x\in\mathbb{R}$, $\partial_z(f(z,t,x))$ exists for all $z\in\mathbb{C}$, and \textbf{iii)} $|\partial_z(f(z,t,x))|$ is integrable in $x$ for all $z\in\mathbb{C}$. We already have \textbf{i)} from~\eqref{eq:bounded absolute integral of f}, and \textbf{ii)} follows as $f(z,t,x)$ is an entire function in $z$. 

For \textbf{iii)} we need to show that
\begin{equation}
    \int_{\mathbb{R}}dx |\partial_z(f(z,t,x))| <\infty \; ,
\end{equation}
for all $z\in\mathbb{C}$. We first note that
\begin{equation}
    \partial_z(f(z,t,x)) = \frac{1}{2\sqrt{4\pi}}(x-z-it)e^{-\frac{1}{4}(x-z-it)^2}\zeta(x,t) \; ,
\end{equation}
and hence 
\begin{align}
     |\partial_z(f(z,t,x))| & \leq \frac{1}{2\sqrt{4\pi}}\norm{\zeta}_{\infty} |x-z-i t|e^{\frac{1}{4}(t+\text{Im}(z))^2}e^{-\frac{1}{4}(x-\text{Re}(z))^2} \nonumber
     \\
     & \leq \frac{1}{2\sqrt{4\pi}}\norm{\zeta}_{\infty} (|x-\text{Re}(z)|+|t+\text{Im}(z)|)e^{\frac{1}{4}(t+\text{Im}(z))^2}e^{-\frac{1}{4}(x-\text{Re}(z))^2} \; .
\end{align}
Thus,
\begin{align}
    \int_{\mathbb{R}}dx |\partial_z(f(z,t,x))| & \leq \frac{1}{2\sqrt{4\pi}}\norm{\zeta}_{\infty}e^{\frac{1}{4}(t+\text{Im}(z))^2} \int_{\mathbb{R}}dx \, e^{-\frac{1}{4}(x-\text{Re}(z))^2}(|x-\text{Re}(z)|+|t+\text{Im}(z)|) \nonumber
    \\
    & = \frac{1}{2\sqrt{4\pi}}\norm{\zeta}_{\infty}e^{\frac{1}{4}(t+\text{Im}(z))^2} \int_{\mathbb{R}}dy (|y| +|t+\text{Im}(z)|)e^{-\frac{y^2}{4}}  \nonumber
    \\
    & = \frac{1}{\sqrt{4\pi}}\norm{\zeta}_{\infty}e^{\frac{1}{4}(t+\text{Im}(z))^2} \int_0^{\infty}dy (|y| +|t+\text{Im}(z)|)e^{-\frac{y^2}{4}}  \nonumber
    \\
    & < \infty \; ,
\end{align}
where we have changed variables to $y=x-\text{Re}(z)$ in line 2, and the last line follows as the integral in line 3 is finite.

Now that we have established \textbf{i)}, \textbf{ii)}, and \textbf{iii)}, we can bring the derivative inside the integral in~\eqref{eq:complex derive of F}. We then have
\begin{align}
    \partial_z F(z) & =  \int_{\mathbb{R}}dx \, \frac{1}{\sqrt{4\pi}}\partial_z ( e^{-\frac{(x-z-it)^2}{4}} ) \zeta(x,t) \nonumber
    \\
    & = \int_{\mathbb{R}}dx \, \frac{1}{2\sqrt{4\pi}}(x-z-it)e^{-\frac{1}{4}(x-z-it)^2}\zeta(x,t) \; ,
\end{align}
and by \textbf{iii)} we already know that the integral in the last line exists for any $z\in\mathbb{C}$. Thus, $F(z)$ is holomorphic for all $z\in\mathbb{C}$ --- it is an entire function. 

We can thus write $F(z)$ as the power series
\begin{equation}
    F(z) = \sum_{n=0}^{\infty}a_n z^n \; ,
\end{equation}
which converges everywhere in the complex plane. As $F(z) = \fwt{W}\lbrace \zeta(\cdot , t) \rbrace(z+it)$, we should really write $a_n\equiv a_n(t)$ to account for the $t$ dependence.

Concentrating on the real axis, i.e. $z=s\in\mathbb{R}$, we note that $F(s)$ is constant by assumption, and hence $\partial_s^n F(s) =0$ for all $n\in\mathbb{N}$. Evaluating these derivatives at $s=0$, and using the fact that for any power series we can move the derivatives inside the summation (when evaluated within the interior of the domain of convergence), we get
\begin{equation}
    0 = (\partial_s^n F(s))_{s=0} = n!\, a_n(t) \; .
\end{equation}
Thus, $a_n(t) = 0$ for all $n\in\mathbb{N}$, and hence $F(z) = a_0(t)$, i.e. it is constant throughout the whole complex plane. 

Evaluating $F(z)$ at $z = i(\tau - t)$, for some $\tau\in\mathbb{R}$, we get
\begin{equation}
    a_0(t) = F(z)|_{z=i(\tau - t)} = \fwt{W}\lbrace \zeta(\cdot , t) \rbrace(z+it)|_{z=i(\tau-t)} = \fwt{W}\lbrace \zeta(\cdot , t) \rbrace(i\tau) \; ,
\end{equation}
Defining $\tilde{\zeta}(\cdot ,t) = \zeta(\cdot , t) - a_0(t)$, we then have $\fwt{W}\lbrace \tilde{\zeta}(\cdot , t) \rbrace(i\tau) = 0$ by the linearity of the Weierstrass transform. This is true for all $\tau\in\mathbb{R}$. Thus,
\begin{align}
    0 & = \fwt{W}\lbrace \tilde{\zeta}(\cdot , t) \rbrace (i \, 2\tau ) \nonumber
    \\
    & = \int_{\mathbb{R}}dx \frac{1}{\sqrt{4\pi}} e^{-\frac{(x-i\, 2\tau )^2}{4}} \tilde{\zeta}(x , t) \nonumber
    \\
    & = \frac{e^{\tau^2}}{\sqrt{2}}\frac{1}{\sqrt{2\pi}} \int_{\mathbb{R}}dx e^{i\tau x}e^{-\frac{x^2}{4}}\tilde{\zeta}(x , t) \nonumber
    \\
    & = \frac{e^{\tau^2}}{\sqrt{2}} \fwt{F}\lbrace e^{-\frac{(\cdot )^2}{4}}\tilde{\zeta}(\cdot , t) \rbrace (\tau) \; ,
\end{align}
for all $\tau\in\mathbb{R}$. Recall that $\fwt{F}\lbrace f\rbrace(\tau)$ denotes the Fourier transform defined in~\eqref{eq:def FT}. We then have
\begin{equation}\label{eq:vanishing FT}
    \fwt{F}\lbrace e^{-\frac{(\cdot )^2}{4}}\tilde{\zeta}(\cdot , t) \rbrace (\tau) = 0 \; ,
\end{equation}
for all $\tau\in\mathbb{R}$. Note that the function $e^{-\frac{x^2}{4}}\tilde{\zeta}(x , t)$ is square-integrable in $x$, as
\begin{align}
    \int_{\mathbb{R}}dx \, e^{-\frac{x^2}{2}}|\tilde{\zeta}(x , t)|^2 & \leq \norm{\tilde{\zeta}}_{\infty}\int_{\mathbb{R}}dx \, e^{-\frac{x^2}{2}} \nonumber
    \\
    & = \norm{\tilde{\zeta}}_{\infty}\sqrt{2\pi} \nonumber
    \\
    & \leq (\norm{\zeta}_{\infty} + |a_0(t)| )\sqrt{2\pi} \nonumber
    \\
    & < \infty \; ,
\end{align}
Now, since $e^{-\frac{x^2}{4}}\tilde{\zeta}(x,t)$ is in $L^2(\mathbb{R};\mathbb{C})$, and since the Fourier transform and its inverse are unitary operators on $L^2(\mathbb{R};\mathbb{C})$, we can take an inverse Fourier transform of~\eqref{eq:vanishing FT} to get that $e^{-\frac{x^2}{4}}\tilde{\zeta}(x,t) = 0$ for almost every $x\in\mathbb{R}$. This implies that $\tilde{\zeta}(\cdot , t) = \zeta(\cdot , t) - a_0(t) = 0$ almost everywhere, and thus $\zeta (\cdot , t) = a_0(t)$ almost everywhere, i.e. $\zeta(\cdot , t)$ is an almost everywhere constant function as desired.

\end{proof}

\subsection{Non-Triviality of Borel Set Self-Intersection}\label{app:Non-triviality of Borel set self-intersection}

Here $\lambda(S)$ denotes the \textit{Lebesgue measure} of some subset $S\subseteq \mathbb{R}$, and $S\triangle S' = (S\setminus S')\cup (S'\setminus S)$ denotes the \textit{symmetric difference} of two subsets $S,S'\subseteq\mathbb{R}$. We write $S\approx S'$ if $\lambda(S\triangle S')=0$, i.e. the two subsets $S,S'\subseteq\mathbb{R}$ differ on a set of Lebesgue measure zero.

We consider non-trivial resolutions of the form $\bor{R} = \lbrace \bor{B}_n \rbrace_{n\in I}$, where $I$ is some countable indexing set. That is, we consider resolutions with at least two bins, i.e. the cardinality $|I|\geq 2$, and where $\lambda(\bor{B}_n)\neq 0$ for each $n\in I$. For a given resolution $\bor{R}$ we define the set
\begin{equation}
    \bor{R}_t = \cup_{n\in I}\bor{B}_n \cap ( \bor{B}_n +t ) \; ,
\end{equation}
for each $t\in\mathbb{R}$. We then make the following:
\begin{claim}[Non-triviality of $\bor{R}_t$ for some $t\in\mathbb{R}$]
Given some non-trivial resolution $\bor{R} = \lbrace \bor{B}_n \rbrace_{n\in I}$, there exists a $t\in\mathbb{R}$ such that $\bor{R}_t \not\approx \mathbb{R}$ and $\bor{R}_t \not\approx\varnothing$.
\end{claim}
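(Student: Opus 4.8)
The plan is to recast the claim in terms of the (almost everywhere defined) measurable \emph{bin-assignment} function $\beta:\mathbb{R}\to I$, where $\beta(x)=n$ iff $x\in\bor{B}_n$; this is well defined off a null set because the bins are mutually disjoint and cover $\mathbb{R}$. A short computation using $\bor{R}_t\cap\bor{B}_n=\bor{B}_n\cap(\bor{B}_n+t)$ shows that, up to a null set, $\bor{R}_t=\{x:\beta(x)=\beta(x-t)\}$. Hence $\bor{R}_t\approx\mathbb{R}$ is equivalent to $\beta(x)=\beta(x-t)$ for a.e.\ $x$, and $\bor{R}_t\approx\varnothing$ to $\beta(x)\neq\beta(x-t)$ for a.e.\ $x$. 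I will argue by contradiction: suppose that for \emph{every} $t\in\mathbb{R}$ exactly one of these two alternatives holds.

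Next I fix two distinct bins $\bor{B}_1,\bor{B}_2$ of positive measure (they exist since the resolution is non-trivial), choose $N$ large enough that $E:=\bor{B}_1\cap[-N,N]$ has $0<\lambda(E)=:m<\infty$, and study the asymmetric correlation $h(t):=\lambda\!\left(E\cap(\bor{B}_1+t)\right)$. Writing $h(t)=\int 1_{\bor{B}_1}(u)\,1_E(u+t)\,du$ exhibits $h$ as an $L^{\infty}\!\ast\!L^{1}$ convolution, so $h$ is (uniformly) continuous, with $h(0)=m$. The reformulation above forces a dichotomy on the \emph{values} of $h$: if $\beta(x)=\beta(x-t)$ a.e.\ then $\bor{B}_1\approx\bor{B}_1+t$ and, since $E\subseteq\bor{B}_1$, we get $h(t)=m$; if instead $\beta(x)\neq\beta(x-t)$ a.e.\ then $\bor{B}_1\cap(\bor{B}_1+t)$ is null, whence $h(t)=0$. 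Thus, under the contradiction hypothesis, $h$ takes values only in the two-point set $\{0,m\}$.

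The contradiction then comes from two facts about $h$ that are incompatible with its image lying in $\{0,m\}$. Since $h$ is continuous, its image is connected, so $h$ would have to be constant; as $h(0)=m$ this means $h\equiv m$. But $m-h(t)=\lambda\!\left(E\setminus(\bor{B}_1+t)\right)=\lambda\!\left(E\cap(\bor{B}_1^{c}+t)\right)\geq\lambda\!\left(E\cap(\bor{B}_2+t)\right)$, and by Tonelli $\int_{\mathbb{R}}\lambda\!\left(E\cap(\bor{B}_2+t)\right)dt=\lambda(E)\,\lambda(\bor{B}_2)>0$, so $h(t)<m$ on a set of positive measure, contradicting $h\equiv m$. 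This completes the argument.

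I expect the main obstacle to be that the bins need not have finite measure, so the naive approach of using the autocorrelation $\lambda(E\cap(E+t))$ (which decays at infinity) to produce an intermediate value is unavailable. The device that circumvents this is the asymmetric correlation $h$ between the bounded truncation $E$ and the \emph{entire} bin $\bor{B}_1$, together with the averaging-against-$\bor{B}_2$ estimate, which forces $h$ strictly below its maximum without any decay hypothesis. The remaining care is bookkeeping: checking that the per-$t$ almost-everywhere statements about $\beta$ translate into genuine pointwise values of the everywhere-defined continuous function $h$, and that $\lambda(\bor{B}_2)>0$ (possibly infinite) still makes the final integral strictly positive.
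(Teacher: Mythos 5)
Your proof is correct, and it takes a genuinely different route from the paper's. The paper argues directly: it first uses the Lebesgue density theorem to produce a closed interval $D$ and a shift $T\neq 0$ with $\lambda(D\cap\bor{R}_T)<\lambda(D)$, then proves in a fairly laborious three-step argument (finite unions of open intervals via dominated convergence, then truncated bins approximated by such unions, then absolute convergence of the sum over all bins) that $t\mapsto\lambda(D\cap\bor{R}_t)$ is continuous, and finally applies the intermediate value theorem to find $\tau$ with $\lambda(D\cap\bor{R}_\tau)=\lambda(D)/2$, which certifies $\bor{R}_\tau\not\approx\varnothing$ and $\bor{R}_\tau\not\approx\mathbb{R}$. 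You instead argue by contradiction: the hypothesis that every $\bor{R}_t$ is null or co-null forces the single-bin correlation $h(t)=\lambda\left(E\cap(\bor{B}_1+t)\right)$ to take values only in $\{0,m\}$ (co-null $\bor{R}_t$ gives $\bor{B}_1\approx\bor{B}_1+t$ and hence $h(t)=m$; null $\bor{R}_t$ gives $\lambda(\bor{B}_1\cap(\bor{B}_1+t))=0$ and hence $h(t)=0$), continuity of $h$ plus connectedness of $\mathbb{R}$ then forces $h\equiv m$, and the Tonelli computation $\int_{\mathbb{R}}\lambda\left(E\cap(\bor{B}_2+t)\right)dt=\lambda(E)\,\lambda(\bor{B}_2)>0$ against a second bin contradicts this. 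Each step checks out, including the two delicate points: the dichotomy does translate per-$t$ almost-everywhere statements into exact pointwise values of $h$, and the Tonelli bound is valid even when $\lambda(\bor{B}_2)=\infty$. What your route buys: no density theorem, and the only continuity needed is that of a correlation of an $L^1$ indicator with an $L^{\infty}$ indicator --- a one-line consequence of translation-continuity in $L^1$ --- whereas the paper must control $\lambda(D\cap\bor{R}_t)$, which involves all (countably many, possibly unbounded) bins simultaneously. What the paper's route buys: it is direct rather than by contradiction, exhibits an explicit intermediate value, and its continuity lemma for $t\mapsto\lambda(D\cap\bor{R}_t)$ is a stronger standalone fact. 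Both proofs share the same topological engine (continuity plus connectedness/IVT); the essential difference is the source of the ``non-constancy'' input --- Lebesgue density points in the paper versus your averaging against $\bor{B}_2$, which is what lets you avoid any decay or boundedness hypotheses on the bins.
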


\begin{proof}
We show this with two sub-claims, Claim~\ref{claim:existence of T at which measure is not full} and~\ref{claim:continuity of measure function}.

In Claim~\ref{claim:existence of T at which measure is not full} we establish that there is some $T\neq 0$ for which $\bor{R}_T$ is not of full measure, i.e. its complement is of non-zero measure. In particular, we show there exists some closed interval $D\subset \mathbb{R}$ such that $D\cap \bor{R}_T$ is not of full measure, and hence of measure less than $\lambda (D)$. This means that the function $t\mapsto \lambda (D\cap \bor{R}_t)$, which starts at $\lambda(D\cap\bor{R}_0) = \lambda(D\cap\mathbb{R})=\lambda(D)$ at $t=0$, must decrease to $\lambda(D\cap \bor{R}_T)<\lambda(D)$ at $t=T$. It may even be zero at $t=T$, which could be due to the fact that $\bor{R}_T \approx \varnothing$ (this was the case in Example~\ref{ex:Integer bin resolution} for $t=1$). Thus, Claim~\ref{claim:existence of T at which measure is not full} is not enough on its own to prove the existence of some $t\in\mathbb{R}$ for which $\bor{R}_t \not\approx \mathbb{R}$ \textit{and} $\bor{R}_t \not\approx\varnothing$.

For that we need to combine Claim~\ref{claim:existence of T at which measure is not full} with Claim~\ref{claim:continuity of measure function}. In the latter we show that the function $t\mapsto \lambda(D \cap \bor{R}_t)$ is continuous. Thus, even if it is zero at $t=T$, it must meet every value in between $\lambda(D)$ and $0$ as $t$ goes from $t=0$ to $t=T$ (this follows from the Intermediate Value Theorem). This implies that there will be some $\tau$ between $0$ and $T$ for which $\lambda(D \cap \bor{R}_{\tau}) = \lambda(D)/2$ say, and hence $\bor{R}_{\tau}$ cannot be equivalent to $\varnothing$ or $\mathbb{R}$, thus proving the main claim.

\end{proof}

\begin{claim}\label{claim:existence of T at which measure is not full}
There exists a closed interval $D\subset\mathbb{R}$, and some non-zero $T\in\mathbb{R}$, such that $\lambda(D\cap \bor{R}_T)<\lambda(D)$.
\end{claim}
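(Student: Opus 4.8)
The plan is to exploit the non-triviality of the resolution together with the Lebesgue density theorem to manufacture a nonzero translation $T$ and a small closed interval $D$ on which $\bor{R}_T$ omits a set of positive measure. First I reformulate membership in $\bor{R}_T$: since the bins are mutually disjoint and cover $\mathbb{R}$, each $\lambda\in\mathbb{R}$ lies in a unique bin, and $\lambda\in\bor{B}_n\cap(\bor{B}_n+T)$ for some $n$ iff that unique bin also contains $\lambda-T$. In other words, $\lambda\in\bor{R}_T$ precisely when $\lambda$ and $\lambda-T$ lie in the \emph{same} bin. Thus, to make $\bor{R}_T$ fail to have full measure on some interval, it suffices to produce a positive-measure set of $\lambda$ for which $\lambda$ and $\lambda-T$ fall into two \emph{different} bins.

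Next I use the hypothesis that $\bor{R}$ has at least two bins, say $\bor{B}_1$ and $\bor{B}_2$, each of nonzero Lebesgue measure. By the Lebesgue density theorem I may choose a density point $a\in\bor{B}_1$ and a density point $b\in\bor{B}_2$; since a single point cannot have density $1$ in two disjoint sets, $a\neq b$, so $T:=a-b$ is nonzero. The key object is the set $E:=\bor{B}_1\cap(\bor{B}_2+T)$. Because $b+T=a$ and density is translation-invariant, $a$ is a density point of $\bor{B}_2+T$ as well as of $\bor{B}_1$.

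The one technical ingredient is the elementary lemma that a common density point of two measurable sets is a density point of their intersection; this follows from the inclusion–exclusion bound $\lambda\big((A\cap C)\cap I_r\big)\ge \lambda(A\cap I_r)+\lambda(C\cap I_r)-\lambda(I_r)$ on a symmetric interval $I_r$ about the point, after dividing by $\lambda(I_r)$ and letting $r\to 0$. Applying it, $a$ is a density point of $E$, so $\lambda\big(E\cap(a-r,a+r)\big)>0$ for all sufficiently small $r$. I then claim $E$ is disjoint from $\bor{R}_T$: if $\lambda\in E$ then $\lambda\in\bor{B}_1$ and $\lambda-T\in\bor{B}_2$, so disjointness of the bins forces $\lambda-T\notin\bor{B}_1$, whence $\lambda$ and $\lambda-T$ lie in different bins and $\lambda\notin\bor{R}_T$.

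Finally I take $D=[a-r,a+r]$ with $r$ small enough that $\lambda(E\cap D)>0$. Then $D\cap E\subseteq D\setminus\bor{R}_T$, giving $\lambda(D\cap\bor{R}_T)\le\lambda(D)-\lambda(D\cap E)<\lambda(D)$, which is exactly the assertion, with $T\neq 0$. I expect the only real subtlety to be the density-point-of-intersection lemma (and the bookkeeping ensuring $a\neq b$); everything else is a direct translation of the ``same bin versus different bin'' reformulation into a statement about a translate of one bin overlapping another.
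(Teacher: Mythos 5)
Your proof is correct, and it takes a genuinely different route from the paper's. Both arguments rest on the Lebesgue density theorem and define $T$ as the difference of two density points, but the mechanisms differ. The paper works with a \emph{single} bin $\bor{B}_n$ and its complement: it picks a density point $x$ of $\bor{B}_n$ and a density point $y$ of $\mathbb{R}\setminus\bor{B}_n$, sets $T=x-y$ and $D=[x-d,x+d]$, so that the translate $D-T$ lands where $\bor{B}_n$ is sparse, forcing $\lambda\bigl(D\cap\bor{B}_n\cap(\bor{B}_n+T)\bigr)<\lambda(D\cap\bor{B}_n)$; it then bounds the contribution of every other bin trivially by $\lambda(D\cap\bor{B}_m)$ and sums over the partition, using countable additivity and the complement to close the inequality chain. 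You instead use density points $a,b$ of \emph{two distinct} bins, and your key structural observation --- that $\lambda\in\bor{R}_T$ precisely when $\lambda$ and $\lambda-T$ lie in the \emph{same} bin --- reduces the claim to exhibiting a positive-measure set of points that switch bins under translation by $T=a-b$; the set $E=\bor{B}_1\cap(\bor{B}_2+T)$ does this automatically by disjointness of bins, and your intersection-of-density-points lemma gives $\lambda(E\cap D)>0$ for a small interval $D$ around $a$. What each approach buys: yours dispenses with the summation over all bins entirely (once $E$ is shown disjoint from $\bor{R}_T$, the estimate is one line) and makes the obstruction conceptually transparent, at the cost of the extra (elementary) lemma that a common density point of two measurable sets is a density point of their intersection; the paper's version needs no such lemma but pays with the bookkeeping of the inequality chain over the partition. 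Your ``same bin'' characterization of $\bor{R}_t$ is also a reusable insight that the paper never states explicitly.
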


\begin{proof}
By assumption, there are at least two Borel sets in the resolution $\bor{R}$ of non-zero measure. This means that, for any $n\in I$ (where $I$ is the countable indexing set for the resolution), we know that both $\bor{B}_n$ and its complement have non-zero measure, i.e. $\lambda(\bor{B}_n),\lambda( \mathbb{R}\setminus\bor{B}_n )\neq 0$.

As $\lambda(\bor{B}_n)\neq 0$, by the Lebesgue density theorem we know there exists a point $x\in\bor{B}_n$ such that, for all $\alpha <1$, there exists a $d'>0$ such that
\begin{equation}
    \lambda([x-d , x+ d ]\cap \bor{B}_n ) > \alpha \lambda([x-d,x+d]) = \alpha \, 2 d \; ,
\end{equation}
for all $d<d'$. Similarly, since $\lambda(\mathbb{R}\setminus \bor{B}_n)\neq 0$, we know (by the Lebesgue density theorem again) that there exists a point $y\in\mathbb{R}\setminus \bor{B}_n$, such that for all $\alpha <1$, there exists a $d'' >0$ such that
\begin{equation}
    \lambda([y-d , y+ d ]\cap \bor{B}_n ) < (1-\alpha ) \lambda([y-d,y+d]) = (1 - \alpha ) \, 2 d \; ,
\end{equation}
for all $d< d''$.

For some $\alpha < 1$, say $\alpha = 1/2$, we can then find such a $d'$ and $d''$ as above and set $\tilde{d} = \min \lbrace d' , d'' \rbrace$, so that for all $d<\tilde{d}$ we have the following chain of inequalities:
\begin{equation}\label{eq:chain of inequalities for closed intervals}
    \lambda([y-d , y+ d ]\cap \bor{B}_n ) < d < \lambda([x-d , x+ d ]\cap \bor{B}_n ) \; .
\end{equation}
Picking some $d<\tilde{d}$ we define the closed interval $D=[x-d,x+d]$. Setting $T=x-y$ (note that since $x\in \bor{B}_n$ and $y\in\mathbb{R}\setminus\bor{B}_n$, it must be the case that $x\neq y$, and hence $T\neq 0$) we then have that $[y-d,y+d] = [x-d,x+d]-T = D - T$, and so $\lambda(D \cap (\bor{B}_n + T)) = \lambda((D-T)\cap \bor{B}_n) < \lambda (D \cap B_n )$ by~\eqref{eq:chain of inequalities for closed intervals}. This further implies that
\begin{equation}
    \lambda( D\cap \bor{B}_n \cap (\bor{B}_n +T) ) \leq \lambda( D \cap (\bor{B}_n +T) ) < \lambda(D \cap \bor{B}_n) \; ,
\end{equation}
and hence
\begin{equation}\label{eq:useful inequality for closed interval}
    \lambda( D\cap \bor{B}_n \cap (\bor{B}_n +T) ) - \lambda(D \cap \bor{B}_n) < 0 \; .
\end{equation}
Finally, this means that
\begin{align}
    \lambda( D \cap \bor{R}_T) & = \lambda(D \cap ( \cup_{m\in I} \bor{B}_m \cap (\bor{B}_m + T) ) )
    \nonumber
    \\
    & = \sum_{m\in I} \lambda( D\cap \bor{B}_m \cap (\bor{B}_m +T) )
    \nonumber
    \\
    & = \lambda( D\cap \bor{B}_n \cap (\bor{B}_n +T) ) + \sum_{m\neq n} \lambda( D\cap \bor{B}_m \cap (\bor{B}_m +T) )
    \nonumber
    \\
    & \leq \lambda( D\cap \bor{B}_n \cap (\bor{B}_n +T) ) + \sum_{m\neq n} \lambda( D\cap \bor{B}_m ) 
    \nonumber
    \\
    & = \lambda( D\cap \bor{B}_n \cap (\bor{B}_n +T)  + \lambda(D \cap (\cup_{m\neq n} \bor{B}_m))
    \nonumber
    \\
    & = \lambda( D\cap \bor{B}_n \cap (\bor{B}_n +T) + \lambda(D \cap ( \mathbb{R}\setminus \bor{B}_n ) )
    \nonumber
    \\
    & = \lambda( D\cap \bor{B}_n \cap (\bor{B}_n +T) + \lambda(D) - \lambda(D \cap \bor{B}_n )
    \nonumber
    \\
    & < \lambda(D) \; ,
\end{align}
as desired. Note, we have used~\eqref{eq:useful inequality for closed interval} to get the last line.

\end{proof}

\begin{claim}\label{claim:continuity of measure function}
For some closed interval $D\subset\mathbb{R}$, the function $t\in\mathbb{R}\mapsto \lambda(D \cap \bor{R}_t)$ is continuous.
\end{claim}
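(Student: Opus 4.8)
The plan is to exploit the disjointness of the bins to write $\lambda(D\cap\bor{R}_t)$ as a uniformly convergent series of continuous functions of $t$. First I would note that, since the bins $\bor{B}_n$ are mutually disjoint, so are the sets $\bor{B}_n\cap(\bor{B}_n+t)$ (each being contained in $\bor{B}_n$), and hence by countable additivity
\begin{equation}
    \lambda(D\cap\bor{R}_t) = \sum_{n\in I} h_n(t), \qquad h_n(t):=\lambda\big(D\cap\bor{B}_n\cap(\bor{B}_n+t)\big) .
\end{equation}
Writing this measure as an integral of indicator functions and using $x\in\bor{B}_n+t \Leftrightarrow x-t\in\bor{B}_n$, each term becomes the cross-correlation
\begin{equation}
    h_n(t)=\int_{\mathbb{R}} 1_{D\cap\bor{B}_n}(x)\,1_{\bor{B}_n}(x-t)\,dx .
\end{equation}

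The key step is to show that each $h_n$ is continuous. I would set $\psi_n=1_{D\cap\bor{B}_n}$, which lies in $L^1(\mathbb{R})$ because $D$ is a bounded interval, while $1_{\bor{B}_n}\in L^\infty(\mathbb{R})$. Substituting $u=x-t$ gives $h_n(t)=\int_{\mathbb{R}} \psi_n(u+t)\,1_{\bor{B}_n}(u)\,du$, so that, using translation invariance of the Lebesgue measure,
\begin{equation}
    |h_n(t)-h_n(s)| \le \norm{1_{\bor{B}_n}}_\infty \int_{\mathbb{R}} |\psi_n(u+t)-\psi_n(u+s)|\,du = \norm{\psi_n(\cdot+(t-s))-\psi_n}_{L^1} .
\end{equation}
Continuity of $h_n$ at any $s$ then follows at once from the standard continuity of translation in $L^1$, namely $\norm{\psi_n(\cdot+r)-\psi_n}_{L^1}\to 0$ as $r\to 0$.

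To pass from continuity of the individual terms to continuity of the sum, I would invoke uniform convergence via the Weierstrass $M$-test. Since $D\cap\bor{B}_n\cap(\bor{B}_n+t)\subseteq D\cap\bor{B}_n$, we have the $t$-uniform bound $0\le h_n(t)\le \lambda(D\cap\bor{B}_n)=:M_n$, and because the bins partition $\mathbb{R}$,
\begin{equation}
    \sum_{n\in I} M_n = \lambda\Big(D\cap\bigcup_{n\in I}\bor{B}_n\Big)=\lambda(D)<\infty .
\end{equation}
Hence $\sum_n h_n$ converges uniformly; its finite partial sums are continuous (each $h_n$ being continuous), so the uniform limit $t\mapsto\lambda(D\cap\bor{R}_t)$ is continuous, which is the claim.

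The only genuine subtlety is the interchange of the countably infinite sum over bins with the limiting process in $t$, and this is precisely what the $M$-test resolves: the required summable majorant $M_n$ is supplied by the covering property of the resolution, so that $\sum_n\lambda(D\cap\bor{B}_n)$ equals the finite number $\lambda(D)$. Everything else reduces to the elementary $L^1$-continuity of translation, so I do not expect any serious obstacle beyond this bookkeeping.
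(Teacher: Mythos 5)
Your proof is correct, and it reaches the result by a genuinely different route from the paper's, even though both start from the same decomposition $\lambda(D\cap\bor{R}_t)=\sum_{n\in I}\lambda\bigl(D\cap\bor{B}_n\cap(\bor{B}_n+t)\bigr)$ (valid, as you note, because the sets $\bor{B}_n\cap(\bor{B}_n+t)\subseteq\bor{B}_n$ are mutually disjoint). For the continuity of each term, the paper works from scratch: it first truncates the possibly unbounded $\bor{B}_n$ to a bounded Borel set, approximates that set in measure by a finite union of disjoint open intervals, proves continuity for such unions via the Dominated Convergence Theorem, and then runs an $\varepsilon/3$ bookkeeping argument. You instead write each term as a cross-correlation $\int \psi_n(u+t)\,1_{\bor{B}_n}(u)\,du$ with $\psi_n=1_{D\cap\bor{B}_n}\in L^1(\mathbb{R})$ and invoke the standard continuity of translation in $L^1$; this packages the paper's approximation-by-intervals machinery into one classical lemma (whose usual proof is, in effect, the same density argument), so your write-up is much shorter but less self-contained. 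For the passage from the individual terms to the infinite sum, your Weierstrass $M$-test with the $t$-independent majorant $M_n=\lambda(D\cap\bor{B}_n)$, $\sum_n M_n=\lambda(D)<\infty$, is actually tidier than the paper's step: the paper selects a finite index set $J$ using convergence of $\sum_n L_n(t)$ at the fixed $t$, and then implicitly needs the same tail bound at $t+\delta t$; the justification for that is precisely your uniform majorant, so your argument supplies cleanly what the paper's final step leaves tacit.
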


\begin{proof}
For brevity, let $L(t)$ denote the function $t\in\mathbb{R}\mapsto \lambda(D \cap \bor{R}_t)$, and let $L_n(t) =\lambda(D\cap \bor{B}_n \cap (\bor{B}_n + t))$. We will show continuity of $L$ in three steps. \textbf{1)} We first show continuity of the simpler function $t\in\mathbb{R}\mapsto \lambda(D\cap \bor{B}\cap(U+t))$, for any Borel set $\bor{B}\in\BorR$, and any finite union of disjoint open intervals $U = \cup_{i=1}^M O_i$. \textbf{2)} Next, we use this result to show continuity of $L_n$. \textbf{3)} We then show that $L$ can be approximated by a finite sum of the functions $L_n$, for some finite set of indices $n\in I$. This, together with the fact that each $L_n$ is continuous, allows us to finally show that $L$ is continuous.

\vspace{5mm}
\noindent\textbf{1)} Define $\tilde{L}(t)= \lambda(D\cap \bor{B} \cap (U+ t) )$ for $t\in\mathbb{R}$, where $\bor{B}$ is some non-zero measure Borel subset, and $U=\cup_{i=1}^M O_i$ is some finite union of disjoint open intervals.

Consider the characteristic function $1_{D\cap \bor{B} \cap (U+ t)}(x)$ for $x\in\mathbb{R}$. If $x\in D \cap \bor{B} \cap (U+t )$ then $x\in U+t$. Since $U$ is a union of intervals it is open, and so there exists a small open interval $(x-r,x+r)$ for some $r>0$, for which $(x-r,x+r)\subseteq U+t$. This means that for any $|\delta t|<r$, $x-\delta t\in U+t$ also. Equivalently, $x \in U+t + \delta t$ for all $|\delta t|< r$. Thus,
\begin{equation}
    \lim_{\delta t \rightarrow 0}1_{D\cap \bor{B} \cap (U+ t + \delta t)}(x) = 1_{D\cap \bor{B} \cap (U+ t)}(x) = 1 \; ,
\end{equation}
for all $x\in D \cap \bor{B} \cap (U+t )$.

For $x\notin U+t$, then $1_{D\cap \bor{B} \cap (U+ t)}(x) = 0$. If we further assume that $x$ is not a boundary point of $U+t$ then there is again an open interval $(x-r,x+r)$, for some $r>0$, for which $(x-r,x+r)\cap (U+t) = \varnothing$, and hence $x\notin U+t+\delta t$ for all $|\delta t|< r$. For such points we then get
\begin{equation}
    \lim_{\delta t \rightarrow 0}1_{D\cap \bor{B} \cap (U+ t + \delta t)}(x) = 1_{D\cap \bor{B}\cap (U+ t)}(x) = 0 \; .
\end{equation}

If, on the other hand, $x$ is a boundary point of $U+t$, then $1_{D\cap \bor{B} \cap (U+ t)}(x) = 0$ (as $U+t$ is open and so $x$ --- being a boundary point of $U+t$ --- is not in $U+t$), but the above limit may give $1$ if it is approached from the side in the interior of $U+t$. Fortunately, the boundary points of $U+t$ are of measure zero, since $U+t$ is a finite union of disjoint open intervals, and so the number of boundary points is finite and hence countable, and any countable set has measure zero.

We will now apply the Dominated Convergence Theorem to show that $\tilde{L}(t)$ is continuous. First, we note that the sequence of functions $1_{D\cap \bor{B} \cap (U+ t + \delta t_n)}(x)$, for any sequence $(\delta t_n)_{n}$ that converges to $0$ as $n\rightarrow \infty$, is dominated by the characteristic function $1_{D\cap \bor{B}}(x)$ (which is integrable as $D\cap \bor{B}$ has finite Lebesgue measure). Next, since $1_{D\cap \bor{B} \cap (U+ t + \delta t_n)}(x)$ converges pointwise almost everywhere to $1_{D\cap \bor{B} \cap (U+ t)}(x)$, we get by the Dominated Convergence Theorem that
\begin{equation}
    \lim_{n\rightarrow \infty}\int_{\mathbb{R}}dx \, 1_{D\cap \bor{B} \cap (U+ t + \delta t_n)}(x) = \int_{\mathbb{R}}dx \, 1_{D\cap \bor{B} \cap (U+ t )}(x) \; ,
\end{equation}
which, by definition of the Lebesgue integral, is equivalent to the statement that
\begin{equation}
    \lim_{n\rightarrow \infty}\lambda(D \cap \bor{B}\cap (U+t + \delta t_n)) = \lambda(D \cap \bor{B}\cap (U+t)) \; .
\end{equation}
Since this is true for any such sequence $(\delta t_n)_{n}$, we have 
\begin{equation}
    \lim_{\delta t\rightarrow 0}\lambda(D \cap \bor{B}\cap (U+t + \delta t)) = \lambda(D \cap \bor{B}\cap (U+t)) \; ,
\end{equation}
and so $\tilde{L}(t) = \lambda(D \cap \bor{B}\cap (U+t))$ is continuous.

\vspace{5mm}
\noindent\textbf{2)} Fix $n\in I$. To show continuity of $L_n$ at $t\in\mathbb{R}$ we need to show that, for any $\varepsilon >0$, there exists a $\delta >0$ such that, for all $|\delta t|<\delta$, $|L_n(t+\delta t) - L_n(t)|<\varepsilon$.

Pick any $t\in\mathbb{R}$ and any $\varepsilon >0$. Next, pick some finite $\delta' >0$. For all $|\delta t|<\delta'$ we can write
\begin{equation}\label{eq:L tilde with tilde Bn}
    L_n(t+\delta t) = \lambda (D\cap\bor{B}_n\cap(\bor{B}_n + t+\delta t) )= \lambda (D\cap\bor{B}_n\cap(\tilde{\bor{B}}_n + t+\delta t) )\; ,
\end{equation}
for some bounded $\tilde{\bor{B}}_n$ which does not depend on the precise value of $|\delta t|<\delta'$. One can check that
\begin{equation}\label{eq:Bn tilde}
    \tilde{\bor{B}}_n = \bor{B}_n \cap \left( \cup_{\delta t'\in[-\delta' , \delta']} D-t-\delta t' \right) \; ,
\end{equation}
does the job. Intuitively, this satisfies~\eqref{eq:L tilde with tilde Bn} as $\tilde{\bor{B}}_n$ is essentially a restriction of (the possibly unbounded) $\bor{B}_n$ to the (bounded) set of points in $\bor{B}_n$ which overlap with $D$ after translation by $t+\delta t$ for some $|\delta t|<\delta'$.

As $\tilde{\bor{B}}_n$ is a bounded Borel set it can be approximated arbitrarily closely by a finite disjoint union of open intervals, denoted here by $U_n$. In particular, we can pick $U_n$ such that $\tilde{\bor{B}}_n$ and $U_n$ differ on a set of measure less than $\varepsilon/3$, i.e.
\begin{equation}\label{eq:tilde Bn approximated by Un}
    \lambda(\tilde{\bor{B}}_n \triangle U_n) < \frac{\varepsilon}{3} \; ,
\end{equation}
where we recall that $\triangle$ denotes the symmetric difference of two sets. The precise set $U_n$ can depend on the value of $\delta'$, as the definition of $\tilde{B}_n$ in~\eqref{eq:Bn tilde} depends on $\delta'$. $U_n$ does not, however, depend on the precise value of $|\delta t|<\delta'$, since $\tilde{B}_n$ does not.

\eqref{eq:tilde Bn approximated by Un} further implies that the sets $D\cap\bor{B}_n\cap(\tilde{\bor{B}}_n + t+\delta t) $ and $D\cap\bor{B}_n\cap(U_n+t+\delta t)$ differ on a set of measure less than $\varepsilon/3$, and hence using~\eqref{eq:L tilde with tilde Bn} we have
\begin{equation}
    |L_n(t+\delta t) - \lambda(D\cap\bor{B}_n\cap(U_n+t+\delta t))| < \frac{\varepsilon}{3} \; ,
\end{equation}
which is true for all $|\delta t|< \delta'$ as $U_n$ does not depend on the precise value of $|\delta t|<\delta'$. Thus,
\begin{equation}\label{eq:almost done for Ln}
    |L_n(t+\delta t) - L_n(t) | < \frac{2\varepsilon}{3} + \big| \lambda(D\cap\bor{B}_n\cap(U_n+t+\delta t)) - \lambda(D\cap\bor{B}_n\cap(U_n+t)) \big|  \; ,
\end{equation}
for $|\delta t|< \delta'$.

We can then appeal to the continuity of the function $t\mapsto \lambda(D\cap\bor{B}_n\cap(U_n+t))$, which follows by replacing $\bor{B}$ with $\bor{B}_n$ and $U$ with $U_n$ in \textbf{1)}. Therefore, we know there exists a $\delta'' >0$ such that, for all $|\delta t|< \delta''$, we have 
\begin{equation}\label{eq:Un less than eps over three}
    \big| \lambda(D\cap\bor{B}_n\cap(U_n+t+\delta t)) - \lambda(D\cap\bor{B}_n\cap(U_n+t)) \big| < \frac{\varepsilon}{3} \; .
\end{equation}
Setting $\delta = \text{min}\lbrace \delta' , \delta''\rbrace$, the inequalities in~\eqref{eq:almost done for Ln} and~\eqref{eq:Un less than eps over three} then imply that, for all $|\delta t|< \delta$,
\begin{equation}
    |L_n(t+\delta t) - L_n(t) | < \varepsilon \; ,
\end{equation}
as desired.

\vspace{5mm}
\noindent\textbf{3)} To show continuity of $L$ at $t\in\mathbb{R}$, we similarly need to show that, for any $\varepsilon >0$, there exists a $\delta >0$ such that, for all $|\delta t|<\delta$, $|L(t+\delta t) - L(t)|<\varepsilon$.

Pick some $t\in\mathbb{R}$ and some $\varepsilon >0$. Note that $0\leq L(t) \leq \lambda (D) <\infty$, and that $L_n(t)\leq L(t)$ for all $n\in I$. Furthermore, as $\sum_{n\in I}L_n(t)$ converges to $L(t)$, and as $L_n(t)\geq 0$ for each $n\in I$, we know that $\sum_{n\in I}L_n(t)$ actually converges absolutely to $L(t)$. Thus, we know there is some finite set $J\subseteq I$ (with $J=I$ only if $I$ is finite) of cardinality $N=|J|$ such that 
\begin{equation}
    | L(t) - \sum_{n\in J} L_n(t) | < \frac{\varepsilon}{3} \; ,
\end{equation}
Therefore, for any $\delta t$,
\begin{align}\label{eq:almost there for L}
    | L(t+\delta t) - L(t) | & <  \frac{2\varepsilon}{3} + \Big|\sum_{n\in J} L_n(t+ \delta t) - L_n (t) \Big|
    \nonumber
    \\
    & \leq \frac{2\varepsilon}{3} + \sum_{n\in J} | L_n(t+ \delta t) - L_n (t) | \; .
\end{align}
By \textbf{2)} we know there exists a $\delta_n >0$ (for each $n\in I$), such that $| L_n(t+ \delta t) - L_n (t) |<\frac{\varepsilon}{3N}$ for all $|\delta t|<\delta_n$. As $J$ is a finite set, $\delta = \text{min}_{n\in J}\lbrace \delta_n \rbrace$ is necessarily non-zero, and hence $| L_n(t+ \delta t) - L_n (t) |<\frac{\varepsilon}{3N}$ for all $|\delta t|<\delta$ and all $n\in J$.~\eqref{eq:almost there for L} then gives
\begin{align}
    | L(t+\delta t) - L(t) | & < \frac{2\varepsilon}{3} + \sum_{n\in J} \frac{\varepsilon}{3 N} 
    \nonumber
    \\
    & = \frac{2\varepsilon}{3} + N \frac{\varepsilon}{3 N}
    \nonumber
    \\
    & = \varepsilon \; ,
\end{align}
as desired.
\end{proof}

\section{References}

\bibliographystyle{alpha}
\bibliography{bibs/refs}

\end{document}